\useunder{\uline}{\ul}{}
\newtheorem{thm}{Theorem}
\newtheorem{proposition}{Proposition}
\newtheorem{lemma}{Lemma}
\definecolor{darkgreen}{rgb}{0.1333,0.545,0.1333}
\definecolor{darkred}{rgb}{0.698,0,0}
\ttfamily\color{darkred},%
\ttfamily\color{blue},%
\ttfamily\color{darkgreen},
\newcommand{\RNum}[1]{\uppercase\expandafter{\romannumeral #1\relax}}
\tikzstyle{box} = [rectangle, rounded corners, minimum width=3cm, minimum height=1cm,text centered, draw=black, fill=red!30,text width=4.5cm]
\tikzstyle{boxwithin} = [rectangle, rounded corners, minimum width=3cm, minimum height=1cm,text centered, draw=black, fill=cyan,text width=4.5cm]
\tikzstyle{decision} = [diamond,aspect=5, minimum width=2cm, minimum height=0.5cm, text centered, draw=black, fill=green!30,text width=3.5cm]
\tikzstyle{arrow} = [thick,->,>=stealth]
\tikzset{
    block/.style={rectangle, draw, line width=0.5mm, black, text width=5em, text centered,                 rounded corners, minimum height=2em},
    line/.style={draw, -latex}
}% <- if you insist in using this in the document add this % here.
\newif\ifleftantennas
\def\mimoshapesdrawantenna{
  % Store the current positions in \pgf@xa and \pgf@ya
  \pgf@xa=\pgf@x 
  \pgf@ya=\pgf@y
  \pgfpathmoveto{\pgfpoint{\pgf@x}{\pgf@y}}

  \advance\pgf@xa by \antennaoffset
  \pgfpathlineto{\pgfpoint{\pgf@xa}{\pgf@ya}}
  \pgfpathclose
  \pgfmoveto{\pgfpoint{\pgf@xa}{\pgf@ya}}

  % \pgf@xb and \pgf@yb will have the coordinate with the starting point to
  % draw the triangle
  \pgf@xb=\pgf@xa \pgf@yb=\pgf@ya
  \advance\pgf@yb by \antennabaseheight cm
  \pgfpathlineto{\pgfpoint{\pgf@xb}{\pgf@yb}}
  
  % See section "Internals of How Point Commands Work" of the pgf manual to
  % understand pgf@process
  \pgf@process{
    \pgfpointadd{\pgfpoint{\pgf@xb}{\pgf@yb}}{\pgfpointpolar{60}{\pgfkeysvalueof{/pgf/antenna side}}}
  }
  \pgf@xc=\pgf@x 
  \pgf@yc=\pgf@y
  \pgfpathlineto{\pgfpoint{\pgf@x}{\pgf@y}}

  \pgf@process{
    \pgfpointadd{\pgfpoint{\pgf@xc}{\pgf@yc}}{\pgfpointpolar{180}{\pgfkeysvalueof{/pgf/antenna side}}}
  }
  \pgfpathlineto{\pgfpoint{\pgf@x}{\pgf@y}}
  
  \pgf@process{
    \pgfpointadd{\pgfpoint{\pgf@xb}{\pgf@yb}}{\pgfpointpolar{240}{\pgfkeysvalueof{/pgf/antenna side}}}
  }
  \pgfpathlineto{\pgfpoint{\pgf@xb}{\pgf@yb}}
}
\def\drawellipses{
  % Store the current positions in \pgf@xa and \pgf@ya
  \pgf@xa=\pgf@x 
  \advance\pgf@xa by \antennaoffset
  \pgf@ya=\pgf@y
  
  % See section "Internals of How Point Commands Work" of the pgf manual to
  % understand pgf@process
  \pgf@process{\pgfpathcircle{\pgfpoint{\pgf@xa}{\pgf@ya+\antennabaseheight-1pt}}{1pt}}
  \pgf@process{\pgfpathcircle{\pgfpoint{\pgf@xa}{\pgf@ya+\antennabaseheight+2pt}}{1pt}}
  \pgf@process{\pgfpathcircle{\pgfpoint{\pgf@xa}{\pgf@ya+\antennabaseheight+5pt}}{1pt}}
}
\def\antennaStartXCoordinate{
  % x position
  \ifleftantennas
  % Anchor position when the "left antennas" key is used
  \pgf@x=-.5\wd\pgfnodeparttextbox
  % Advance by the inner xsep
  \pgfmathsetlength\pgf@xc{-\pgfkeysvalueof{/pgf/inner xsep}}
  \advance\pgf@x by \pgf@xc%
  %
  % If the resulting x coordinate is lower the value required for the
  % minimum width then the value for the minimum width will be used instead
  \setlength{\pgf@xa}{-\pgfshapeminwidth}
  \ifdim\pgf@x>.5\pgf@xa
  \pgf@x=.5\pgf@xa
  \fi
  \else
  % Anchor position when the "left antennas" key is not used or when
  % the "right antennas" key is used.
  \pgf@x=.5\wd\pgfnodeparttextbox % width of the box
  \pgfmathsetlength\pgf@xc{\pgfkeysvalueof{/pgf/inner xsep}}
  \advance\pgf@x by \pgf@xc%
  \setlength{\pgf@xa}{\pgfshapeminwidth}
  \ifdim\pgf@x<.5\pgf@xa
  \pgf@x=.5\pgf@xa
  \fi
  \fi
}
  \savedmacro\antennaoffset{
    \ifleftantennas
      \def\antennaoffset{-\pgfkeysvalueof{/pgf/antenna offset}}
    \else
      \def\antennaoffset{\pgfkeysvalueof{/pgf/antenna offset}}
    \fi
  }
  \savedmacro\antennabaseheight{
    \def\antennabaseheight{\pgfkeysvalueof{/pgf/antenna base height}}
  }
  \savedmacro\antennayshift{
    \def\antennayshift{\pgfkeysvalueof{/pgf/antenna yshift}}
  }
  \savedmacro\shapewidth{
    \def\shapewidth{\pgfkeysvalueof{/pgf/minimum width}}
  }
  \savedmacro\shapeheight{
    \def\shapeheight{\pgfkeysvalueof{/pgf/minimum height}}
  }
  \savedanchor\northeast{%
    \pgf@y=.5\ht\pgfnodeparttextbox % height of the box, ignoring the depth
    \pgf@x=.5\wd\pgfnodeparttextbox % width of the box
    %
    % -----> Take the inner sep into account
    \pgfmathsetlength\pgf@xc{\pgfkeysvalueof{/pgf/inner xsep}}
    \advance\pgf@x by \pgf@xc%
    \pgfmathsetlength\pgf@yc{\pgfkeysvalueof{/pgf/inner ysep}}
    \advance\pgf@y by \pgf@yc%
    % -----
    % Test if the minimum width and height are respected. If not, set it as
    % the minimum width/height.
    \setlength{\pgf@xa}{\shapewidth}
    \ifdim\pgf@x<.5\pgf@xa
    \pgf@x=.5\pgf@xa
    \fi
    \setlength{\pgf@ya}{\shapeheight}
    \ifdim\pgf@y<.5\pgf@ya
    \pgf@y=.5\pgf@ya
    \fi
    %
    % -----> Take the outer sep into account
    % Now that we have the values of x and y we could stop here. However,
    % in order to take into account the "outer sep" we need to add it to
    % the anchor position(the idea is that the anchor will be outside the
    % actual node (by the outer sep ammount). The default value of the
    % outer sep is half the length of a line. Note that we are including
    % the outer sep value in the anchor definition, but it should not be
    % taken into account when drawing the background. Therefore, we need to
    % remove this when drawing the rectangle box.
    \pgfmathsetlength\pgf@xa{\pgfkeysvalueof{/pgf/outer xsep}}%
    \advance\pgf@x by\pgf@xa%
    \pgfmathsetlength\pgf@ya{\pgfkeysvalueof{/pgf/outer ysep}}%
    \advance\pgf@y by\pgf@ya%
  }
  \savedanchor\southwest{%
    \pgf@y=-.5\ht\pgfnodeparttextbox % height of the box, ignoring the depth
    \pgf@x=-.5\wd\pgfnodeparttextbox % width of the box
    %
    % -----> Take the inner sep into account
    \pgfmathsetlength\pgf@xc{-\pgfkeysvalueof{/pgf/inner xsep}}
    \advance\pgf@x by \pgf@xc%
    \pgfmathsetlength\pgf@yc{-\pgfkeysvalueof{/pgf/inner ysep}}
    \advance\pgf@y by \pgf@yc%
    % 
    % Test if the minimum width and height are respected. If not, set it as
    % the minimum width/height.
    \setlength{\pgf@xa}{-\shapewidth}
    \ifdim\pgf@x>.5\pgf@xa
    \pgf@x=.5\pgf@xa
    \fi
    \setlength{\pgf@ya}{-\shapeheight}
    \ifdim\pgf@y>.5\pgf@ya
    \pgf@y=.5\pgf@ya
    \fi
    %
    % -----> Take the outer sep into account (see comment in the northeast savedanchor)
    \pgfmathsetlength\pgf@xa{-\pgfkeysvalueof{/pgf/outer xsep}}%
    \advance\pgf@x by\pgf@xa%
    \pgfmathsetlength\pgf@ya{-\pgfkeysvalueof{/pgf/outer ysep}}%
    \advance\pgf@y by\pgf@ya%
  }
  \savedanchor{\anchorA}{
    % -----> y position for the first antenna
    \pgf@y=.5\ht\pgfnodeparttextbox
    \setlength{\pgf@ya}{\pgfshapeminheight}
    \ifdim\pgf@y<.5\pgf@ya
    \pgf@y=.5\pgf@ya
    \fi
    \pgf@y=0.\pgf@y
    \advance\pgf@y by \antennayshift
    %
    % -----> x position
    \antennaStartXCoordinate
  }
      \pgfpointadd{\southwest}{\pgfpoint{\pgfkeysvalueof{/pgf/outer xsep}}{\pgfkeysvalueof{/pgf/outer ysep}}}
      \pgfpointadd{\northeast}{\pgfpointscale{-1}{\pgfpoint{\pgfkeysvalueof{/pgf/outer xsep}}{\pgfkeysvalueof{/pgf/outer ysep}}}}
  \savedanchor{\anchorA}{
    % -----> y position for the first antenna
    \pgf@y=.5\ht\pgfnodeparttextbox
    \setlength{\pgf@ya}{\pgfshapeminheight}
    \ifdim\pgf@y<.5\pgf@ya
    \pgf@y=.5\pgf@ya
    \fi
    \pgf@y=0.4\pgf@y
    \advance\pgf@y by \antennayshift
    %
    % -----> x position
    \antennaStartXCoordinate
  }
  \savedanchor{\anchorB}{
    % -----> y position for the first antenna
    \pgf@y=-.5\ht\pgfnodeparttextbox
    \setlength{\pgf@ya}{-\pgfshapeminheight}
    \ifdim\pgf@y>.5\pgf@ya
    \pgf@y=.5\pgf@ya
    \fi
    \pgf@y=0.6\pgf@y
    \advance\pgf@y by \antennayshift
    %
    % -----> x position
    \antennaStartXCoordinate
  }
      \pgfpointadd{\southwest}{\pgfpoint{\pgfkeysvalueof{/pgf/outer xsep}}{\pgfkeysvalueof{/pgf/outer ysep}}}
      \pgfpointadd{\northeast}{\pgfpointscale{-1}{\pgfpoint{\pgfkeysvalueof{/pgf/outer xsep}}{\pgfkeysvalueof{/pgf/outer ysep}}}}
  \savedanchor{\anchorA}{
    \pgf@y=.5\ht\pgfnodeparttextbox
    \setlength{\pgf@ya}{\pgfshapeminheight}
    \ifdim\pgf@y<.5\pgf@ya
    \pgf@y=.5\pgf@ya
    \fi
    \pgf@y=0.5\pgf@y
    \advance\pgf@y by \antennayshift
    %
    % -----> x position
    \antennaStartXCoordinate
  }
  \savedanchor{\anchorC}{
    % y position for the first antenna
    \pgf@y=-.5\ht\pgfnodeparttextbox
    \setlength{\pgf@ya}{-\pgfshapeminheight}
    \ifdim\pgf@y>.5\pgf@ya
    \pgf@y=.5\pgf@ya
    \fi
    \pgf@y=0.9\pgf@y
    \advance\pgf@y by \antennayshift
    %
    % -----> x position
    \antennaStartXCoordinate
  }
  \savedanchor{\anchorB}{
    % y position
    % First we get y position of anchor A
    \pgf@y=.5\ht\pgfnodeparttextbox
    \setlength{\pgf@ya}{\pgfshapeminheight}
    \ifdim\pgf@y<.5\pgf@ya
    \pgf@y=.5\pgf@ya
    \fi
    \pgf@ya=0.5\pgf@y
    % Then we get y position of anchor C
    \pgf@y=-.5\ht\pgfnodeparttextbox
    \setlength{\pgf@yc}{-\pgfshapeminheight}
    \ifdim\pgf@y>.5\pgf@yc
    \pgf@y=.5\pgf@yc
    \fi
    \pgf@yc=0.9\pgf@y
    % y position of anchor B is in the middle of anchor A and C
    \pgf@y=0.5\pgf@yc
    \advance\pgf@y by 0.5\pgf@ya
    \advance\pgf@y by \antennayshift
    % 
    % x position
    \antennaStartXCoordinate
  }
      \pgfpointadd{\southwest}{\pgfpoint{\pgfkeysvalueof{/pgf/outer xsep}}{\pgfkeysvalueof{/pgf/outer ysep}}}
      \pgfpointadd{\northeast}{\pgfpointscale{-1}{\pgfpoint{\pgfkeysvalueof{/pgf/outer xsep}}{\pgfkeysvalueof{/pgf/outer ysep}}}}
      \pgfpointadd{\southwest}{\pgfpoint{\pgfkeysvalueof{/pgf/outer xsep}}{\pgfkeysvalueof{/pgf/outer ysep}}}
      \pgfpointadd{\northeast}{\pgfpointscale{-1}{\pgfpoint{\pgfkeysvalueof{/pgf/outer xsep}}{\pgfkeysvalueof{/pgf/outer ysep}}}}
        \pgfpointadd{\southwest}{\pgfpoint{\pgfkeysvalueof{/pgf/outer xsep}}{\pgfkeysvalueof{/pgf/outer ysep}}}
        \pgfpointadd{\northeast}{\pgfpointscale{-1}{\pgfpoint{\pgfkeysvalueof{/pgf/outer xsep}}{\pgfkeysvalueof{/pgf/outer ysep}}}}
        \pgfpointadd{\southwest}{\pgfpoint{\pgfkeysvalueof{/pgf/outer xsep}}{\pgfkeysvalueof{/pgf/outer ysep}}}
        \pgfpointadd{\northeast}{\pgfpointscale{-1}{\pgfpoint{\pgfkeysvalueof{/pgf/outer xsep}}{\pgfkeysvalueof{/pgf/outer ysep}}}}
\DeclareMathOperator{\diag}{diag}
\DeclareMathOperator{\supp}{supp}
\DeclareMathOperator{\trace}{tr}
\DeclareMathOperator{\rank}{rank}
\begin{document}

\title{Bayesian Algorithms for Kronecker-structured Sparse Vector Recovery With Application to IRS-MIMO Channel Estimation}

\author{Yanbin He and Geethu Joseph
        % <-this % stops a space ~\IEEEmembership{Staff,~IEEE,}
\thanks{The material in this paper was presented
in part at the IEEE International Conference on Acoustics, Speech, \& Signal Processing (ICASSP), June 2023, Rhodes, Greece~\cite{he2022structure}.

The authors are with the Signal Processing Systems group, Electrical Engineering, Mathematics, and Computer Science faculty, at the Delft University of Technology, The Netherlands. Emails:$\{\text{y.he-1, g.joseph}\}\text{@tudelft.nl}$.}% <-this % stops a space
% \thanks{Manuscript received April 19, 2021; revised August 16, 2021.}
}

% The paper headers
\markboth{Journal of \LaTeX\ Class Files,~Vol.~14, No.~8, August~2021}%
{Shell \MakeLowercase{\textit{et al.}}: A Sample Article Using IEEEtran.cls for IEEE Journals}

\IEEEpubid{0000--0000/00\$00.00~\copyright~2021 IEEE}
% Remember, if you use this you must call \IEEEpubidadjcol in the second
% column for its text to clear the IEEEpubid mark.

\maketitle

\begin{abstract}
We study the sparse recovery problem with an underdetermined linear system characterized by a Kronecker-structured dictionary and a Kronecker-supported sparse vector. We cast this problem into the sparse Bayesian learning (SBL) framework and rely on the expectation-maximization method for a solution. To this end, we model the Kronecker-structured support with a hierarchical Gaussian prior distribution parameterized by a Kronecker-structured hyperparameter, leading to a non-convex optimization problem. The optimization problem is solved using the alternating minimization (AM) method and a singular value decomposition (SVD)-based method, resulting in two algorithms. Further, we analytically guarantee that the AM-based method converges to the stationary point of the SBL cost function. The SVD-based method, though it adopts approximations, is empirically shown to be more efficient and accurate. We then apply our algorithm to estimate the uplink wireless channel in an intelligent reflecting surface-aided MIMO system and extend the AM-based algorithm to address block sparsity in the channel. We also study the SBL cost to show that the minima of the cost function are achieved at sparse solutions and that incorporating the Kronecker structure reduces the number of local minima of the SBL cost function. Our numerical results demonstrate the effectiveness of our algorithms compared to the state-of-the-art.
\end{abstract}

\begin{IEEEkeywords}
Compressed sensing, sparse Bayesian learning, block sparsity, alternating minimization, singular value decomposition, Kronecker product, convergence analysis, IRS-aided MIMO channel estimation
\end{IEEEkeywords}

\section{Introduction}\label{sec:intro}
\IEEEPARstart{T}{he} basis expansion model (BEM) is widely used in various fields, such as image processing \cite{he2009exploiting,he2009tree,ma2003maximum,barhumi2005mmse} and wireless communications \cite{giannakis1998basis,han2018low,zheng2022survey}, to obtain flexible linear representations for non-linear functions. It is achieved by approximating non-linear functions as linear combinations of simple \emph{basis functions} \cite{hastie2009elements}. BEM comprises a linear model with a \emph{dictionary} of basis functions and \emph{coefficients} associated with the dictionary. This paradigm, united with compressed sensing (CS), can be employed to estimate the unknown parameters of non-linear functions. The idea is to sample the unknown parameter over a range to construct known functions that comprise an over-complete dictionary. Here, only a few basis functions corresponding to the true parameters are activated, resulting in sparse coefficients. Thus, CS techniques can be leveraged to reconstruct the sparse coefficients from the linear model. Further, many signal representation problems in wireless communications \cite{araujo2019tensor,chang2021sparse,xu2022sparse,he2022structure} and image processing \cite{zhao2019exploiting,yang2015compressive} applications use multidimensional BEM \cite{caiafa2012block,caiafa2013computing,duarte2010kronecker,duarte2011kronecker,caiafa2013multidimensional}. These models can be represented using a sparse vector with Kronecker-structured support, i.e., the support of the sparse vector is the Kronecker product of several low-dimensional support vectors. This work investigates the CS problem of solving a high-dimensional underdetermined linear system of equations to find a Kronecker-structured sparse solution.

% Further, in many cases, the dictionary is Kronecker-structured, i.e., is the Kronecker product of multiple matrices. This arises in the context of multidimensional signals \cite{caiafa2012block,caiafa2013computing,duarte2010kronecker,duarte2011kronecker,caiafa2013multidimensional} and is naturally revealed after the vectorization of multidimensional signal model. In this work, we are going to investigate the linear model with such structured dictionary.

The multidimensional basis expansion models are also typically associated with Kronecker-structured dictionaries, i.e., the Kronecker product of multiple dictionary matrices. For example, consider the channel estimation problem for the intelligent reflecting surfaces (IRS)-aided wireless communication systems \cite{he2022structure}. The BEM for the unknown channel matrix is constructed by sampling pre-defined spatial angle grids and forming a dictionary using the corresponding steering vectors. Then, the channel matrix can be represented using a three-dimensional sparse BEM coefficient vector where the three dimensions (mode) are the angle-of-departure (AoD), angle-of-arrival (AoA), and difference of the AoA and AoD at the IRS. Furthermore, different combinations of AoDs and AoAs naturally elicit the Kronecker product, leading to a Kronecker-structured dictionary and sparse coefficients. Motivated by such multi-parameter estimation problems, we consider the following Kronecker-structured linear inversion problem,\IEEEpubidadjcol
\begin{equation}\label{eq.problem_basic}
    \bm y = \bm H \bm x+\bm n,
\end{equation}
where $\bm y \in \mathbb{C}^{\bar{M} \times 1}$ is the noisy measurement, $\bm H \in \mathbb{C}^{\bar{M} \times \bar{N}}$ is the Kronecker-structured dictionary with $\bar{M} < \bar{N}$, $\bm x \in \mathbb{C}^{\bar{N} \times 1}$ is the unknown sparse BEM coefficient vector, and $\bm n$ is the measurement noise. Specifically,
\begin{equation}\label{eq.separable_dict}
    \bm H = \otimes_{i=1}^I \bm H_i,
\end{equation}
where $\bm H_i \in \mathbb{C}^{M_i \times N_i}$ with $\prod_{i=1}^I M_i = \bar{M}$, and $\prod_{i=1}^I N_i = \bar{N}$. The sparse vector $\bm x$ possesses a Kronecker-structured support vector given by $\otimes_{i=1}^I \bm b_i\in\{0,1\}^{\bar{N}\times 1}$ where $\bm b_i \in \{0,1\}^{N_i \times 1}$ is a binary (support) vector. We aim to reconstruct the sparse vector $\bm x$ given the Kronecker-structured dictionary $\bm H$, noisy measurement $\bm y$ by exploiting its 
%and prior knowledge of 
Kronecker-structured support.

Various signal processing techniques have been proposed to reconstruct the sparse vector $\bm x$ from the linear measurement $\bm y$ and Kronecker-structured dictionary $\bm H$. \cite{caiafa2013computing} designed a greedy method called Kronecker-orthogonal matching pursuit (KOMP) to generalize the traditional OMP for multidimensional signals. 
%Kronecker-OMP is similar in spirit to OMP and uses the notion of coherence extended to multidimensional signals. 
Like OMP, the KOMP algorithm has low complexity but requires hand-tuning of a sensitive stopping threshold. Recently, parameter tuning-free approaches based on sparse Bayesian learning (SBL)  were studied \cite{chang2021sparse}. SBL is known to have superior performance~\cite{wipf2004sparse} and flexibility to incorporate several additional structures along with sparsity \cite{fang2014pattern,wang2018alternative,prasanna2021mmwave,wu2022clustered}. The Kronecker structure can also be incorporated into the SBL framework. \cite{zhao2019exploiting} introduced the SBL approach for the Kronecker structure in~\eqref{eq.separable_dict} with $I=3$, using tensor-wise hyperparameters. The linear inversion problem with $I=2$ was considered in \cite{bualtoiu2021sparse}. This framework was later generalized to the $I$-dimensional tensor and applied to wireless communications \cite{chang2021sparse,xu2022sparse}. However, the derivation of the SBL algorithm in \cite{chang2021sparse,xu2022sparse} relied on several approximations leading to a suboptimal recovery accuracy. Hence, we seek novel Bayesian algorithms exploiting the Kronecker and sparse structures to improve reconstruction accuracy and efficiency.

Furthermore, the existing Bayesian algorithms exploiting the Kronecker structure \cite{chang2021sparse,xu2022sparse} lack theoretical guarantees on their convergence and performance, and these properties are only demonstrated empirically. 
% In this work, we target to develop a novel approach for~\eqref{eq.problem_basic} with~\eqref{eq.separable_dict} and~\eqref{eq.kron_vector}, and aim to offer theoretical guarantees on the proposed approach.
% The convergence guarantee for the classic SBL is inherited from the expectation-maximization (EM) algorithm. This is because the classic SBL, as in \cite{wipf2004sparse}, relies on the EM algorithm for sparse vector inference. 
% Therefore, the convergence property of the SBL is closely related to that of the EM algorithm.
 % SBL incorporates a parameterized prior to encourage sparsity and estimates the sparse vector through the maximum a posteriori estimator. To compute the posterior distribution, the so-called hyperparameters of the prior are determined using the Type-\RNum{2} maximum likelihood (ML) estimation. However, closed-form solutions are usually unavailable. Therefore, the EM algorithm is implemented to provide a surrogate solution. The E-step provides a lower bound for the likelihood (or log-likelihood), and the M-step maximizes this lower bound. Thus, the convergence property of SBL closely relies on that of the EM algorithm. 
% The convergence property of the EM algorithm has been extensively discussed in \cite{wu1983convergence}. And it is concluded that under specific conditions, EM guarantees convergence to the stationary points of the SBL cost function.
The study on convergence is limited because the convergence guarantee of the classic SBL using the expectation-maximization (EM) algorithm cannot be trivially extended for these algorithms. \cite{xu2022sparse,chang2021sparse} claimed that the solution to the underlying optimization problem should be obtained at the stationary point of the cost function. However, it is unclear whether the stationary point can be reached due to its iterative nature and approximations. 
%Also, achieving the stationary point is insufficient to guarantee convergence of the EM algorithm while a sufficient condition to arrive at the global maximum~\cite{wu1983convergence}. 
Similarly, the performance improvement due to incorporating the Kronecker structure into the SBL has been shown in \cite{chang2021sparse} without any theoretical justification.
% by comparing with other sparse recovery techniques in terms of recovery error. 
% Intuitively, better performance is induced by the reduced feasible solution space. Specifically, solving~\eqref{eq.problem_basic} with the classic SBL may return un-Kronecker-supported solutions, which can be eradicated by imposing Kronecker structure in the inference process. However, this claim lacks theoretical evidence. 
%However, there is no theoretical justification for why the new formulation outperforms the classical SBL. 
Given the shallow theoretical analysis of the existing algorithms and analysis of the SBL cost function, we make progress on these problems by analyzing our new Bayesian algorithms.

\emph{Our contributions:} We devote this paper to the algorithm development and convergence analysis of algorithms for the Kronecker-structured sparse recovery problem in~\eqref{eq.problem_basic}.
% We proposed the alternating-minimization (AM)-based and singular value decomposition (SVD)-based approaches to solve the non-convex problem in the M-step. This work will first review the algorithms in \cite{he2022structure} and generalize them to multiple hyperparameters for notation introduction and the block sparsity scenario. Our focus will then shift to the convergence properties of the AM-based KroSBL algorithm and the discussion on the local minima of the SBL cost function. In the end, we provide numerical simulation results and discussions. 
Our main contributions are as follows:
\begin{itemize}[leftmargin=*]
    \item \emph{Algorithm development:} We present our new Bayesian recovery algorithms in Sec.~\ref{sec:sbl}. We first present two novel SBL algorithms for Kronecker-structured sparse recovery called KroSBL. The first KroSBL algorithm, which is based on alternating minimization (AM), solves the underlying optimization problem of the SBL algorithm using the AM procedure. The second KroSBL algorithm, based on singular value decomposition (SVD),  is faster and uses a simple approximation to obtain the SBL algorithm. 

    \item \emph{Application:} We apply our problem to a prototypical application of IRS-aided MIMO channel estimation in Sec.~\ref{sec:channelesti}. Besides the Kronecker-structure sparsity, the BEM representation of IRS-cascaded channels also exhibits block sparsity where the nonzero entries occur in clusters. To handle this additional structure, we extend our AM-based algorithm for Kronecker-structured block sparsity based on non-negative least squares.
    \item \emph{Convergence guarantee:} We derive convergence guarantees for the AM-based algorithm in Sec.~\ref{sec.con_AM}. We establish that \emph{i)} the AM procedure can attain the stationary point of the cost function in the M-step, and $ii)$ the AM-based algorithm is guaranteed to converge to the stationary point of the SBL cost function in the noisy setting.
    % , by using the property of the generalized EM (GEM) algorithm \cite{wu1983convergence,dempster1977maximum}.
    
    \item \emph{Cost function analysis:} We examine the local minima of the KroSBL cost function in Sec.~\ref{sec:local_minima}. Assuming the sparse vector in~\eqref{eq.problem_basic} to satisfy $\bm x = \otimes_{i=1}^I \bm x_i$, we prove that all local minima are sparse in the noiseless case. Besides, we demonstrate that incorporating the Kronecker structure in the hyperparameter vector as the KroSBL cost function can significantly reduce the local minima under the unique representation property (URP) assumption for $\bm H$.

    \item \emph{Numerical Results:} We assess the our schemes in two scenarios in Sec.~\ref{sec:numsimu}. Firstly, we study the sparse recovery performance of the presented schemes against algorithms in the literature to demonstrate its superior recovery accuracy and run time. Secondly, we conduct a case study on channel estimation for IRS-aided systems with our approach and illustrate that incorporating block sparsity in our scheme can further enhance performance.
    
\end{itemize}

Overall, we present two algorithms for sparse signal recovery that arises in BEM with multiple unknown parameters. 
%algorithms are useful for sparse signal recovery that arise in a basis expansion problem with multiple unknown parameters. The 
The first algorithm, AM-KroSBL, enjoys solid theoretical guarantees, while the other algorithm, SVD-KroSBL, is computationally light and practically more relevant. %Presented analyses can shed light on understanding the behavior of structure-aware SBL.

\emph{Notation:} Boldface small letters denote vectors, and boldface capital letters denote matrices. The symbols $[\bm x]_i$, $[\bm X]_{i}$, and $[\bm X]_{ij}$ represent the $i$-th entry of vector $\bm x$, the $i$-th column of matrix $\bm X$, and the entry on the $i$-th row and $j$-th column of matrix $\bm X$, respectively. We denote the all-one vector with length $N$ as $\bm 1_N$. The symbol $\|\cdot\|_p$ denotes the vector $\ell_p$ norm. We use $\bm x > 0$ (or $\bm x \geq$) to denote that all the entries in $\bm x$ are positive (or non-negative). %The operator $\min(\cdot)$ denotes the smallest entry of a vector. 
Depending on the argument, the vector element-wise inversion or matrix inversion is denoted as $(\cdot)^{-1}$. If the argument is a vector, operator $\diag(\cdot)$ returns a diagonal matrix with the argument along the diagonal, and it returns a vector of its diagonal entries if the argument is a matrix. The symbols $(\cdot)^\mathsf{T}$, $(\cdot)^*$, $(\cdot)^\mathsf{H}$, $|\cdot|$, and $(\cdot)^\dagger$ are the matrix operations of transpose, conjugate, conjugate transpose, determinant, and pseudo-inverse, respectively. Also, $\otimes$ and $\odot$ represent the Kronecker and Khatri-Rao products, respectively. The matrix $\bm I_N$ denotes the identity matrix of size $N\times N$. We use $\mathcal{CN}(\bm a,\bm B)$ to denote the complex Gaussian distribution with mean $\bm a$ and covariance $\bm B$. The set of real, complex matrices of size $M\times N$ is represented by $\mathbb{R}^{M\times N}$ and $\mathbb{C}^{M\times N}$, respectively. 
% We denote the set $\{x\in\mathbb{R}\!:\!x \geq 0\}$ as $\mathbb{R}_+$ and $\{x\in\mathbb{R}\!:\!x > 0\}$ as $\mathbb{R}_{>}$.

% {\color{red} I see $\bm 1$ and $\bm I$ instead of $\bm 1_N$ and $\bm I_N$ at several places. So mention $\bm 1_N$ and $\bm I_N$ instead of $\bm 1$ and $\bm I$ here}

\section{Kronecker-structured Sparse Bayesian Learning}
\label{sec:sbl}
% This section presents the details of KroSBL. We will discuss sparse recovery algorithms for~\eqref{eq.problem_basic}, including two algorithms based on AM and SVD. We will then extend the scenario to handle block sparsity.

We study the model in~\eqref{eq.problem_basic} with additive Gaussian noise $\bm n$ following $\mathcal{CN}(\bm 0,\sigma^2 \bm I_{\bar{M}})$. For simplicity, we assume that the noise variance $\sigma^2$ is known and $N_i = N$ for $i=1,2,\ldots,I$. This section presents new recovery algorithms to solve for $\bm x$ from \eqref{eq.problem_basic}, exploiting the Kronecker structure.

Inspired by the SBL framework~\cite{wipf2004sparse}, we impose a fictitious sparsity promoting zero-mean Gaussian prior on $\bm x$ with an unknown covariance matrix $\bm\Gamma\in\mathbb{R}^{N^I\times N^I}$. We construct the covariance matrix as $\bm \Gamma = \diag(\bm \gamma)$ with $\bm \gamma = \otimes_{i=1}^I \bm \gamma_i$ and $\bm \gamma_i\in\mathbb{R}^{N\times 1}$, mimicking the Kronecker-structured support. Specifically,
% where $\bm \gamma_i\in\mathbb{R}^{N\times 1}$ is the unknown hyperparameter vector corresponding to the low-dimensional sparse vector $\bm x_i$, respectively. 
\begin{equation}\label{eq.sparseprior}
p(\bm x;\{\bm \gamma_i\})=\mathcal{CN}(\bm 0,\bm \Gamma),
\end{equation}
where $\{\bm \gamma_i\}$ is the simplified notation for $\{\bm \gamma_i\}_{i=1}^I$ used henceforth. Then, we turn to the type-II ML estimation, i.e., we first estimate the hyperparameters $\{\bm \gamma_i\}$, based on which the MAP estimate of $\bm x$ obtained as $\arg \max_{\bm x} p(\bm x|\bm y;\{\bm \gamma_i\})$ \cite{kreutz2024dictionaries}. The ML estimates of $\{\bm \gamma_i\}$ are obtained by minimizing the negative log-likelihood, i.e., the KroSBL cost function is given by
\begin{equation}\label{eq.cost_ml}
    \mathcal{L}\left(\{\bm\gamma_i\}\right)=-\log p(\bm y;\{\bm \gamma_i\}) = \log |\bm \Sigma_{\bm y}| + \bm y^\mathsf{H} \bm \Sigma_{\bm y}^{-1} \bm y,
\end{equation}
where $\bm \Sigma_{\bm y} = \sigma^2 \bm I_{\bar{M}} + \bm H \bm \Gamma \bm H^\mathsf{H}$ \cite{wipf2004sparse}. We note that $\bm \gamma = \otimes_{i=1}^I \bm \gamma_i=\otimes_{i=1}^I \alpha_i \bm \gamma_i$ for any $\alpha_i\!>\!0$ when $\prod_{i=1}^I \alpha_i = 1$. Thus, if $\{\bm \gamma_i\}$ maximizes~\eqref{eq.cost_ml}, then $\{\alpha_i\bm \gamma_i\}$ also achieves the maximum for any $\alpha_i>0$ with $\prod_i\alpha_i=1$. To eliminate this scaling ambiguity, we normalize the hyperparameter vectors, i.e., we impose the constraint $\|\bm \gamma_i\|_2 = 1$ for $i=1,2,\ldots,I-1$. So the ML problem to estimate $\{\gamma_i\}$ reduces to
\begin{equation}\label{eq.ml_problem}
    \underset{{\{\bm \gamma_i\}\in \mathcal{C}}}{\min} \mathcal{L}\left(\{\bm\gamma_i\}\right),
\end{equation}
where we define the constraint set
\begin{equation}
    \mathcal{C}\!=\!\left\{\{\bm \gamma_i\}\bigg|\bm \gamma_i \geq 0,i=1,2,\ldots,I,\|\bm \gamma_i\|_2=1, \forall i\neq I \right\}.
\end{equation}

The problem in \eqref{eq.ml_problem} does not admit a closed-form solution. Thus, we resort to the standard EM algorithm~\cite{wipf2004sparse} for an iterative solution. Specifically, the $r$th iteration of~EM~is
\begin{align}
&\text{\bf E-step:}~\text{Compute }\mathbb{E}_{\bm x|\bm y;\bm \gamma^{(r)}}\{\log[p(\bm y,\bm x;\{\bm \gamma_i\})]\},\label{eq.estep}\\
&\text{\bf M-step:}~\{\bm \gamma_i\}^{(r+1)} = \underset{\substack{\{\bm \gamma_i\} \in \mathcal{C}_+}}{\arg\max} ~\mathbb{E}_{\bm x|\bm y;\bm \gamma^{(r)}}\{\log[p(\bm y,\bm x;\{\bm \gamma_i\})]\},\label{eq.mstep_basic}  
\end{align}
where $\bm\gamma^{(r)} = \otimes_{i=1}^I \bm \gamma_i^{(r)}$ is the estimate in the $r$th iteration and
\begin{equation}
\mathcal{C}_+\!=\!\left\{\{\bm \gamma_i\}\in\mathcal{C}\bigg|\bm \gamma_i >0,i=1,2,\!\ldots\!,I\right\}.
\end{equation}
Here, we restrict the feasible set in~\eqref{eq.mstep_basic} to $\mathcal{C}_+$ instead of $\mathcal{C}$ in~\eqref{eq.ml_problem} to avoid degenerate distributions. Further, using straightforward algebraic simplifications \cite{bishop2006pattern}, we can reduce~\eqref{eq.mstep_basic} to
\begin{equation}\label{eq.qfunc}
\{\bm \gamma_i\}^{(r+1)}
% & = \underset{\substack{\{\bm \gamma_i\} \in \mathcal{C}_+}}{\arg\max} \mathbb{E}_{\bm x|\bm y;\bm \gamma^{(r-1)}}\{\log[p(\bm y,\bm x;\{\bm \gamma_i\})]\}\\
% & \overset{\mathclap{\mathrm{(a)}}}{=} \underset{\substack{\{\bm \gamma_i\} \in \mathcal{C}_+}}{\arg\max} \mathbb{E}_{\bm x|\bm y;\bm \gamma^{(r-1)}}\{\log[p(\bm x;\{\bm \gamma_i\})]\}\\
= \underset{\substack{\{\bm \gamma_i\} \in \mathcal{C}_+}}{\arg\min} %\underbrace{
\log\!|\!\diag(\bm\gamma)|\!+\!(\bm d^{(r)})^{\mathsf{T}}\bm\gamma^{-1},
%}_{Q(\{\bm \gamma_i\}|\bm \gamma^{(r-1)})}
%= \underset{\substack{\{\bm \gamma_i\} \in \mathcal{C}_+}}{\arg\min} Q(\{\bm \gamma_i\}|\bm \gamma^{(r-1)}),
\end{equation}    
where we define \begin{equation}\label{eq.compute_d}
\bm d^{(r)} = \diag(\bm \Sigma_{\bm x}+\bm \mu_{\bm x}\bm \mu_{\bm x}^\mathsf{H}).
\end{equation}
Here, $\bm \mu_{\bm x}$ and $\bm \Sigma_{\bm x}$, which depend on $\bm \gamma^{(r)}$, are the mean and variance of conditional distribution $p(\bm x|\bm y;\bm \gamma^{(r)})$, respectively,% depending on $\bm \gamma^{(r-1)}$ as
\begin{equation}
\label{eq.post_meva}
\bm \mu_{\bm x} = \sigma^{-2}\bm \Sigma_{\bm x}\bm H^\mathsf{H}\bm y,\ \bm \Sigma_{\bm x} = \!\left[\sigma^{-2}\bm H^\mathsf{H}\bm H\!+\!\diag(\bm \gamma^{(r)})^{-1}\right]^{-1}\!. 
\end{equation}

\begin{algorithm}[t!]
  \SetAlgoLined
  \DontPrintSemicolon
  \KwData{Measurements $\bm y$, matrix $\bm H$, noise power $\sigma^2$}
  \textbf{Parameters}: Threshold $\epsilon$ and $\epsilon_{\mathsf{AM}}$
   
  \textbf{Initialization}: $\{\bm \gamma_i\}^{(0)}=\bm 0$, $\{\bm \gamma_i\}^{(1)} \in \mathcal{C}$, set $r=1$
  
   \While{$\| \otimes_{i=1}^I \bm \gamma_i^{(r)}-\otimes_{i=1}^I \bm \gamma_i^{(r-1)} \|_2 > \epsilon$}{
    
    Compute $\bm d^{(r)}$ using~\eqref{eq.compute_d} and~\eqref{eq.post_meva}
    % \tcp*{E-step}
    
    Set $t=1$, $\{\bm \gamma_i\}^{(r,1)} = \{\bm \gamma_i\}^{(r)}$, $\{\bm \gamma_i\}^{(r,0)} = \bm 0$ 
    % \tcp*{M-step}
    
    \While{$\| \otimes_{i=1}^I \bm \gamma_i^{(r,t)}-\otimes_{i=1}^I \bm \gamma_i^{(r,t-1)} \|_2 > \epsilon_\mathsf{AM}$}{

    % \For{$i=1,\ldots,I$}
    % { Compute $\tilde{\bm \gamma_i}$ using \eqref{eq.update}
    % }
    
    Compute $\{\bm \gamma_i\}^{(r,t+1)}$ using \eqref{eq.update} and \eqref{eq.projection}
    %Projection to $\mathcal{C}_+$ as~\eqref{eq.projection}
    
    % : $\bm \gamma_i^{(t+1)} = \frac{\tilde{\bm \gamma}_i}{\|\tilde{\bm \gamma}_i\|_2}$, for $i=1,\ldots,I-1$ \\ $\bm \gamma_I^{(t+1)} = \prod_{i=1}^{I-1}\|\tilde{\bm \gamma}_i\|_2\tilde{\bm \gamma}_I$

    Update AM iteration index $t \leftarrow t + 1$

  }

    $\{\bm \gamma_i\}^{(r+1)} = \{\bm \gamma_i\}^{(r,t)}$
  
    Update iteration index $r \leftarrow r + 1$

   } 
   
   \KwResult{Output $\bm x=\bm \mu_{\bm x}$ using~\eqref{eq.post_meva}}
  \caption{AM-KroSBL}
  \label{al.AMKroSBL}
\end{algorithm}

Let $Q(\{\bm \gamma_i\}|\bm \gamma^{(r)})=\log|\diag(\bm\gamma)|+(\bm d^{(r)})^{\mathsf{T}}\bm\gamma^{-1}$, which depends on $\bm \gamma^{(r)}$ via $\bm d^{(r)}$. Then, the M-step is written~as
\begin{equation}\label{eq.mstep}
\underset{\substack{\{\bm \gamma_i\}}}{\min}\ Q(\{\bm \gamma_i\}|\bm \gamma^{(r)}) \hspace{0.3cm} \text{s.t.}\
\bm\gamma = \otimes_{i=1}^I \bm \gamma_i,\ \{\bm \gamma_i\}\in \mathcal{C}_+.
\end{equation}
The solution to~\eqref{eq.mstep} without the constraint $\bm\gamma = \otimes_{i=1}^I \bm \gamma_i$ is straightforward \cite{wipf2004sparse}. However, the Kronecker constraint poses a challenge to derive a closed-form solution. To solve~\eqref{eq.mstep} with the Kronecker constraint, two distinct ways are presented: AM-based and SVD-based approaches.

\subsection{AM-based KroSBL (AM-KroSBL)}

AM-KroSBL solves \eqref{eq.mstep} by alternatingly updating one hyperparameter vector while keeping the others fixed. We first compute the gradient of cost function\footnote{We interchangeably use $Q(\{\bm \gamma_i\}|\bm \gamma^{(r)})$, $Q(\{\bm \gamma_i\})$, and $Q$ in the paper.} 
$Q(\{\bm \gamma_i\})$ with respect to $\{\bm \gamma_i\}$ and set it to zero, leading to
\begin{equation}
\label{eq.update}
    \tilde{\bm \gamma}_i\!=\!N^{-I+1}\!\big[(\otimes_{j=1}^{i-1}(\tilde{\bm \gamma}_j)^{-1})\otimes \bm I_N \otimes (\otimes_{j=i+1}^I(\bm \gamma_j^{(r,t)})^{-1})\big]^\mathsf{T}\bm d^{(r)},
\end{equation}
% {\color{red} I fixed a typo in the above equation, please check!}
for $i=1,2,\ldots,I$, with $\bm\gamma_l^{(r,t)}$ is the estimate in the $t$th iteration of AM and the $r$th iteration of EM. To avoid the scaling ambiguity, in each iteration $t$, we project the hyperparameter vectors to $\mathcal{C}_+$
as
\begin{equation}\label{eq.projection}
\bm \gamma_i^{(r,t+1)} = \begin{cases}
\frac{\tilde{\bm \gamma}_i}{\|\tilde{\bm \gamma}_i\|_2} &\text{for}\; i=1,2,\ldots,I-1\\
\prod_{j=1}^{I-1}\|\tilde{\bm \gamma}_i\|_2\tilde{\bm \gamma}_I &\text{for}\; i=I.
\end{cases}
\end{equation}
% \begin{equation}\label{eq.projection}
% \bm \gamma_I^{(t+1)} = \prod_{i=1}^{I-1}\|\tilde{\bm \gamma}_i\|_2\tilde{\bm \gamma}_I \; \text{and} \; \bm \gamma_i^{(r,t+1)} = \frac{\tilde{\bm \gamma}_i}{\|\tilde{\bm \gamma}_i\|_2}, i=1,2,\ldots,I-1; .    
% \end{equation}
The projection~\eqref{eq.projection} does not change the cost function value $Q$ as $\otimes_{i=1}^I\bm \gamma_i^{(r,t+1)}=\otimes_{i=1}^I\tilde{\bm \gamma_i}$. The steps are summarized in Algorithm \ref{al.AMKroSBL}. The AM-KroSBL is guaranteed to improve the cost function given by~\eqref{eq.qfunc} after every iteration. But due to its iterative nature, it is computationally inefficient. Thus, we next present a non-iterative method to solve~\eqref{eq.mstep} based on SVD.

\subsection{SVD-based KroSBL (SVD-KroSBL)}

This method solves \eqref{eq.mstep} without the constraint $\bm\gamma = \otimes_{i=1}^I \bm \gamma_i$ first and then projects the solution to the constraint set. We note from \cite{wipf2004sparse} that
\begin{equation}\label{eq.mstep_svd}
\underset{\substack{\bm\gamma }}{\arg\min}\log|\diag(\bm\gamma)|+(\bm d^{(r)})^\mathsf{T}\bm\gamma^{-1} = \bm d^{(r)}.
\end{equation}
To project the above solution to the constraint set, we solve for $\{\bm \gamma_i\}$ that minimizes $\| \bm d^{(r)} - \otimes_{i=1}^I \bm \gamma_i \|_2$.
We further approximate this optimization problem as $(I-1)$ rank-1 approximations:
% \begin{equation}\label{prob.bidecom}
% \underset{\substack{\bm\gamma_{i},\bar{\bm\gamma}_{i},\|\bm \gamma_{i}\|_2=1}}{\min} \| \bar{\bm\gamma}_{i-1} - \bm \gamma_{i}\otimes\bar{\bm\gamma}_{i} \|_2,\ \forall i=1,2,\ldots,I-1,   
% \end{equation}
\begin{equation}\label{prob.bidecom}
\bm\gamma_{i}^{(r+1)} =\underset{\bm\gamma_{i}: \|\bm \gamma_{i}\|_2=1,
\bar{\bm\gamma}_{i}\in\mathbb{R}^{N(I-i)}}{\arg\min} \| \bar{\bm\gamma}_{i-1} - \bm \gamma_{i}\otimes\bar{\bm\gamma}_{i} \|_2,\ \forall i=1,2,\ldots,I-1,   
\end{equation}
where $\bar{\bm \gamma}_0=\bm d^{(r)}$ and $\bar{\bm \gamma}_{I-1}=\bm \gamma_I$. The problem in \eqref{prob.bidecom} is solved using SVD. This approach is summarized in Algorithm \ref{al.SVDKroSBL}.

\begin{algorithm}[t]
  \SetAlgoLined
  \DontPrintSemicolon
\SetNoFillComment
  \KwData{Measurements $\bm y$, matrix $\bm H$, noise power $\sigma^2$}
  \textbf{Parameters}: Threshold $\epsilon$
   
  \textbf{Initialization}: $\{\bm \gamma_i\}^{(0)}=\bm 0$, $\{\bm \gamma_i\}^{(1)} \in \mathcal{C}$, set $r=1$
  
   \While{$\| \otimes_{i=1}^I \bm \gamma_i^{(r)}-\otimes_{i=1}^I \bm \gamma_i^{(r-1)} \|_2 > \epsilon$}{
    
    Compute $\bm d^{(r)}$ using~\eqref{eq.compute_d} and~\eqref{eq.post_meva} 
    % \tcp*{E-step}
    
    \For{$i=1,\ldots,I-1$} 
    { Solve~\eqref{prob.bidecom} for $\bm \gamma_i^{(r+1)}$         
    % \tcp*{M-step}
    }

%$\{\bm \gamma_i\}^{(r+1)} = \{\bm \gamma_i\}^{(r,t)}$
    
    Update iteration index: $r \leftarrow r + 1$

   }
   
   \KwResult{Output $\bm x=\bm \mu_{\bm x}$ using~\eqref{eq.post_meva}}
  \caption{SVD-KroSBL}
  \label{al.SVDKroSBL}
\end{algorithm}

%We extend AM-KroSBL to handle block sparsity where the nonzero coefficients occur in clusters with unknown lengths and positions. For example, the unknown AoAs of received signals can span an unknown range in the angular domain due to scattering along transmission paths \cite{you2022structured}, prompting block sparsity in the BEM of channels with unknown boundaries. Utilizing block sparsity has been proven to have better performance \cite{wang2018alternative,joseph2019anomaly}. The following demonstrates the so-called pattern-coupled-KroSBL (PC-KroSBL) to handle Kronecker structure and block sparsity with unknown block boundaries in the context of IRS-aided MIMO channel estimation.

\section{Application and Extension}\label{sec:channelesti}
In this section, we discuss the application of our algorithm to channel estimation in an IRS-assisted MIMO system. 

\subsection{Cascaded channel estimation for IRS-aided MIMO}
Consider an uplink MIMO millimeter-wave/terahertz band system with a $T$-antenna transmitter mobile station (MS), an $R$-antenna receiver base station (BS), and an $L$-element uniform linear array IRS. %We assume that the line-of-sight (LOS) path between the BS and MS is blocked, and the LOS paths between the BS and IRS and the IRS and MS are much stronger than the non-LOS paths. 
% {\color{red} Later $M$ is used to denote measurements}
%Further, we assume a narrowband fading channel following the Saleh-Valenzuela model \cite{you2022structured,he2022structure}. 
Let $\bm H_\mathrm{MS}\in \mathbb{C}^{L\times T}$ and $\bm H_\mathrm{BS} \in \mathbb{C}^{R\times L}$ denote the MS-IRS and IRS-BS (narrowband) channels, respectively, %following the Saleh-Valenzuela model \cite{you2022structured}:
\begin{align}
\label{eq.channelmodel1}
\bm H_\mathrm{MS}&=\sum_{p=1}^{P_{\mathrm{MS}}}\sqrt{\frac{LT}{P_{\mathrm{MS}}}}\beta_{{\mathrm{MS}},p}\bm a_L(\phi_{{\mathrm{MS}},p})\bm a_{T}(\alpha_{{\mathrm{MS}}})^\mathsf{H}\\\label{eq.channelmodel2}
\bm H_\mathrm{BS}&=\sum_{p=1}^{P_{\mathrm{BS}}}\sqrt{\frac{RL}{P_{\mathrm{BS}}}}\beta_{{\mathrm{BS}},p}\bm a_{R}(\alpha_{{\mathrm{BS}},p})\bm a_L(\phi_{{\mathrm{BS}}})^\mathsf{H},
% &\overset{\text{(a)}}{=} \bm A_{BS} \bm g_{BS} \bm g_{L,d}^H \bm A_{L,d}^H,
% &\overset{\text{(b)}}{=} \bm A_{L,a}\bm g_{L,a} \bm g_{MS}^H \bm A_{MS}^H,
\end{align}    
where $P_{\mathrm{MS}}$ and $P_{\mathrm{BS}}$ are the number of rays. Also, for any integer $Q$ and angle $\psi$, steering vector $\bm a_Q(\psi)\in\mathbb{C}^{Q\times 1}$ with half-wavelength spacing is
\begin{equation}\label{eq.steer}
\bm a_Q(\psi) = \frac{1}{\sqrt{Q}}\begin{bmatrix}
1 & e^{j\pi  \cos{\psi}}\cdots e^{j\pi (Q-1)  \cos{\psi}}\end{bmatrix}^\mathsf{T}.
\end{equation}
The angles $\phi_{{\mathrm{MS}},p}$, $\alpha_{\mathrm{MS}}$, $\alpha_{{\mathrm{BS}},p}$, and $\phi_{{\mathrm{BS}}}$ denote the $p$th AoA of the IRS, and the $p$th AoD of the MS, the $p$th AoA of the BS, and the AoD of the IRS, respectively (see Fig.~\ref{fig:irs_sys}). Then, the cascaded MS-IRS-BS channel is then given by $\bm H_\mathrm{BS}\diag (\bm \theta)\bm H_\mathrm{MS}$ for a given IRS configuration $\bm\theta \in \mathbb{C}^{L \times 1}$. Here, the $i$th entry of $\bm\theta$ represents the gain and phase shift due to the $i$th IRS element. %Besides, we also consider the angular spread due to scatters (see Fig.~\ref{fig:irs_sys}) \cite{you2022structured,rappaport2013millimeter}. 
We aim to estimate the cascaded channel $\bm H_\mathrm{BS}\diag (\bm \theta)\bm H_\mathrm{MS}$ for any $\bm \theta$. 

To estimate the channel, we send pilot symbols over $K$ time slots over which $\bm H_\mathrm{MS}$ and $\bm H_\mathrm{BS}$ is assumed to be constant. We choose $K_{\mathrm{I}}<K$ IRS configurations, and for each configuration, we transmit pilot data $\bm X \in \mathbb{C}^{T\times K_{\mathrm{P}}}$ over $K_{\mathrm{P}}$ time slots such that $K=K_{\mathrm{I}}K_{\mathrm{P}}$. Hence, the received signal $\bm Y_k\in \mathbb{C}^{R \times K_{\mathrm{P}}}$ corresponding to the $k$th configuration $\bm \theta_k$~is
\begin{equation}\label{eq.basicdatamodel}
\bm Y_k=\bm H_\mathrm{BS}\diag (\bm \theta_k)\bm H_\mathrm{MS} \bm X+\bm W_k,
\end{equation}
where $\bm W_k \in \mathbb{C}^{R\times K_{\mathrm{P}}}$ is the additive white Gaussian noise with zero mean and variance $\sigma^2$. %Our objective is to estimate $\bm H_\mathrm{MS}^\mathsf{T} \odot \bm H_\mathrm{BS}$ as $\vect(\bm H_\mathrm{BS}\diag (\bm \theta_k)\bm H_\mathrm{MS}) = (\bm H_\mathrm{MS}^\mathsf{T} \odot \bm H_\mathrm{BS})\bm \theta_k$, where $\bm \theta_k$ is known, using the data model in \eqref{eq.basicdatamodel} and the knowledge of $\bm X$ and $\{\bm Y_k,\bm \theta_k\}_{k=1}^{K_{\mathrm{I}}}$. Here, the vectorization operator $\vect(\cdot)$ transforms a matrix into a vector by concatenating its columns. We vectorize both sides of~\eqref{eq.basicdatamodel} to obtain
% \begin{equation}\label{eq.vecmodel1}
% \bar{\bm y}_k = \big((\bm X^\mathsf{T} \bm H_\mathrm{MS}^\mathsf{T}) \odot \bm H_\mathrm{BS}\big)\bm \theta_k + \bar{\bm w}_k\in \mathbb{C}^{BK_{\mathrm{P}} \times 1},
% \end{equation}
% So, the received data $\bar{\bm Y}\in \mathbb{C}^{B K_{\mathrm{P}} \times K_{\mathrm{I}}}$ for the IRS configurations $\bm \Theta\!=\![\bm \theta_1,\ldots,\bm \theta_{K_{\mathrm{I}}}]\in \mathbb{C}^{L \times K_{\mathrm{I}}}$ is
% \begin{equation}\label{eq.k1k2model}
% \bar{\bm Y} = [\bar{\bm y}_1,\!\ldots\!,\bar{\bm y}_{K_{\mathrm{I}}}]\!=\!\big(\!(\bm X^\mathsf{T} \bm H_\mathrm{MS}^\mathsf{T}) \odot \bm H_\mathrm{BS} \big)\bm \Theta + \bar{\bm W},    
% \end{equation}
% where $\bar{\bm W} \!=\! [\bar{\bm w}_1,\ldots,\bar{\bm w}_{K_{\mathrm{I}}}]\! \in \mathbb{C}^{B K_{\mathrm{P}} \times K_{\mathrm{I}}}$.
To estimate the cascaded channel, we exploit angular sparsity in the channel matrices $\bm H_\mathrm{MS}$ and $\bm H_\mathrm{BS}$. For this, we apply the BEM by sampling the angular domain using a set of $N$ grid angles $\{\psi_n\}_{n=1}^N$ such that $\cos(\psi_n)= 2 n/N-1$ \cite{mao2022channel}. Then, \eqref{eq.channelmodel1} and~\eqref{eq.channelmodel2} reduce to
\begin{equation}
\label{eq.channelmodelsparse2}
\bm H_\mathrm{BS}= \bm A_{R} \bm g_{\mathrm{R}} \bm g_{\mathrm{L,d}}^\mathsf{H} \bm A_{L}^\mathsf{H}  \hspace{0.3cm} \text{and }  \hspace{0.3cm}
\bm H_\mathrm{MS}= \bm A_{L}\bm g_{\mathrm{L,a}} \bm g_{\mathrm{T}}^\mathsf{H} \bm A_{T}^\mathsf{H},
\end{equation}
where for any integer $Q>0$, using \eqref{eq.steer}, we define
\begin{equation}
\bm A_{Q} = \begin{bmatrix}
\bm a_{Q}(\psi_1) & \bm a_{Q}(\psi_2)&\ldots&\bm a_{Q}(\psi_N)
\end{bmatrix}\in \mathbb{C}^{Q\times N}.
\end{equation}
Also, $\bm g_{\mathrm{R}},\bm g_{\mathrm{L,d}},\bm g_{\mathrm{L,a}},\bm g_{\mathrm{T}}\in\mathbb{C}^{N\times 1}$ are the unknown sparse channel representations corresponding to the AoAs/AoD of the channel. % whose positions and values of non-zero entries indicate the true angles and the corresponding path gains, respectively. 
Substituting \eqref{eq.channelmodelsparse2} into \eqref{eq.basicdatamodel}, and vectorizing the received signal $\{\bm Y_k\}_{k=1}^{K_I}$ followed by some algebraic simplifications (see \cite{he2022structure} for details), we arrive at 
%set of $N$ grid angles for the AoAs and AoDs of the cascaded channel and obtain \cite{he2022structure}
% \begin{multline}\label{eq.matrixsparseprob}
% \bar{\bm Y} =\big[(\bm X^\mathsf{T} \bm A_{\mathrm{M}}^*) \otimes \bm A_{\mathrm{B}}\big]
% 	 \big[(\bm g_{\mathrm{L,a}} \bm g_{\mathrm{M}}^\mathsf{H})^\mathsf{T} \otimes (\bm g_{\mathrm{B}} \bm g_{\mathrm{L,d}}^\mathsf{H})\big] \\
% 	 \times (\bm A_{\mathrm{L}}^\mathsf{T} \odot \bm A_{\mathrm{L}}^\mathsf{H})\bm \Theta+\bar{\bm W},	
% \end{multline}
% where $\bm A_{\mathrm{L}}$, $\bm A_{\mathrm{M}}$, and $\bm A_{\mathrm{B}}$ are the BEM over-complete dictionaries of the AoAs/AoD of IRS, AoD of MS, and AoAs of BS, respectively. After some manipulations, we have
\begin{equation}\label{eq.datamodel}
\tilde{\bm y} = (\tilde{\bm \Phi}_{\mathrm{L}} \otimes \bm \Phi_{\mathrm{T}} \otimes \bm \Phi_{\mathrm{R}}) (\bm g_{\mathrm{L,a}} \otimes \bm g_{\mathrm{L,d}}^* \otimes \bm g_{\mathrm{T}}^*\otimes\bm g_{\mathrm{R}}) + \tilde{\bm w}\in\mathbb{C}^{RK\times 1},
\end{equation}
where $\tilde{\bm \Phi}_{\mathrm{L}} = \bm \Theta^\mathsf{T}(\bm A_{L}^\mathsf{T} \odot \bm A_{L}^\mathsf{H})^\mathsf{T}$, $\bm \Phi_{\mathrm{T}}=\bm X^\mathsf{T} \bm A_{T}^*$, and $\bm \Phi_{\mathrm{R}}=\bm A_{R}$. Here, only the first $N$ columns of $\tilde{\bm \Phi}_{\mathrm{L}}$ are distinct~\cite{wang2020compressed}. Hence, removing the redundant columns to reduce the dimension of the representation leads to
\begin{equation}\label{eq.datamodelcs}
\tilde{\bm y} = (\bm \Phi_{\mathrm{L}} \otimes \bm \Phi_{\mathrm{T}} \otimes \bm \Phi_{\mathrm{R}})\bm g + \tilde{\bm w} = \tilde{\bm H}\bm g + \tilde{\bm w}\in\mathbb{C}^{RK\times 1},
\end{equation}
where $\bm \Phi_{\mathrm{L}}\in\mathbb{C}^{K_{\mathrm{I}}\times N}$ is formed by the first $N$ columns of $\tilde{\bm \Phi}_{\mathrm{L}}$ and $\tilde{\bm H}=\bm \Phi_{\mathrm{L}} \otimes \bm \Phi_{\mathrm{T}} \otimes \bm \Phi_{\mathrm{R}}\in\mathbb{C}^{RK\times N^3}$. Also, we define $\bm g = \bm g_{\mathrm{L}} \otimes \bm g_{\mathrm{T}}^* \otimes \bm g_{\mathrm{R}}\in\mathbb{C}^{N^3\times 1}$ with $\bm g_{\mathrm{L}}\in\mathbb{C}^{N\times 1}$ being the scaled version of the first $N$ entries of $\bm g_{\mathrm{L,a}} \otimes \bm g_{\mathrm{L,d}}^*$. Hence,~\eqref{eq.datamodelcs} translates the channel estimation problem into the sparse recovery problem in \eqref{eq.problem_basic} with unknown vector $\bm g$. Now we can apply our algorithms (AM-KroSBL and SVD-KroSBL) with $I=3$ to reconstruct the Kronecker-structured sparse channel vector $\bm g$. 

\subsection{Extension of AM-KroSBL for block sparsity}
In the IRS-aided channel model, the scatters lead to spreading AoAs (see Fig.~\ref{fig:irs_sys}), causing clustered non-zero BEM coefficients. In our model, sparse vectors $\bm g_{\mathrm{L}}$ and $\bm g_{\mathrm{R}}$, containing the BEM coefficients of the AoAs of IRS and BS, possess block sparsity structures with unknown block boundaries.
\begin{figure}[t!]
    \centering
    % \resizebox{160pt}{100pt}{%
    \begin{tikzpicture}[every node/.append style={scale=1}]
    
      % Coordinate system origin
      \path (0,0) coordinate (origin);
      % IRS
      \node[IRS, element fill color=green!30, fill=red!30, minimum size=2.5cm] at (2.5,-3) (irs) {};
      \draw[] (3,-2.7) node[scale=1,rotate=0] {IRS};
      \draw[] (4.04,-2.1) node[scale=1,rotate=0] {$\ldots$};
    
      \path (-0.5,1) coordinate (mimoindposl);
      \path (7,1) coordinate (mimoindposr);
    
      \tikzset{test mimo node/.style={fill=blue!30,draw,thick,
          % Change the numbers and see the effect
          antenna offset=0.3cm, %0.3
          minimum width=0.5cm,
          minimum height=2cm, 
          antenna base height=0.2cm,
          antenna side=0.4cm,
          antenna yshift=0.3cm
        }}
      \def\txrxsep{4cm}

      % Nodes with many antennas\pgftypesetversion\pgftypesetversion
      \draw (mimoindposl)  node[draw,test mimo node,mimoind] (mimoindtxnode) {BS};
      \draw (mimoindposr)  node[draw,test mimo node,mimoind,left antennas] (mimoindrxnode) {MS};
    
      % Wireless link
      \node[] (1) at (0.35,1.5)  {};
      \node[] (1B) at (0.35,1)  {};
      \node[] (12) at (1.8,-0.6)  {};  
      \node[] (121) at (0.35,0.6)  {};  
      \node[] (122) at (0.35,1.3)  {};  
      \node[] (123) at (0.35,2.0)  {};  
      
      \node[] (2) at (2.5,-1.8)  {};
      \node[] (2B) at (2,-1.8)  {};
      \node[] (2C) at (4,-1.8)  {};
      
      \node[] (23) at (4.9,-0.4)  {};
      \node[] (231) at (3.5,-1.8)  {};
      \node[] (232) at (3,-1.8)  {};
      \node[] (233) at (2.5,-1.8)  {};  
      
      \node[] (3) at (6.15,2)  {};
      \node[] (3B) at (6.15,1)  {};
      
      \draw[->] (23)--(231);
      \draw[->] (23)--(232);
      \draw[->] (23)--(233);
      
      \draw[->] (3)--(23);
      
      \draw[->] (2)--(12);
      \draw[->] (12)--(121);
      \draw[->] (12)--(122);
      \draw[->] (12)--(123);
      
      \draw[thick,black,->] (0.35,2.0) -- (0.35,0.5) node[black,right] {};
      \draw[thick,black,<->] (1.2,-1.8) -- (5,-1.8) node[black,right] {};
      \draw[thick,black,->] (6.13,2) -- (6.13,0.5) node[black,right] {};
      
      \draw pic[draw,angle radius=0.8cm,"$\alpha_{\mathrm{BS}}$",pic text options={shift={(0.5,0)}}] {angle=1B--123--12};
      \draw pic[draw,angle radius=0.8cm,"$\phi_{\mathrm{BS}}$",pic text options={shift={(-0.05,0)}}] {angle=12--2--2B};
      \draw pic[draw,angle radius=1.0cm,"$\phi_{\mathrm{MS}}$",pic text options={shift={(0.08,0)}}] {angle=2C--231--23};
      \draw pic[draw,angle radius=1cm,"$\alpha_{\mathrm{MS}}$",pic text options={shift={(-0.55,0)}}] {angle=23--3--3B};
      
      \node[cloud,
        draw =black,
        text=cyan,
        fill = gray!10,
        minimum width = 0.5cm,
        minimum height = 0.25cm] (c) at (4.9,-0.2) {Scatter 1};
        
      \node[cloud,
        draw =black,
        text=cyan,
        fill = gray!10,
        minimum width = 0.5cm,
        minimum height = 0.25cm] (c) at (1.8,-0.2) {Scatter 2};
    
    \end{tikzpicture}
    % }
    \caption{An illustration of AoAs and AoDs in an IRS-aided channel.}
    \label{fig:irs_sys}
\end{figure}

To tackle block sparsity, we draw inspiration from the PC-SBL algorithm~\cite{fang2014pattern} and impose a prior on each entry of the sparse vector, which not only depends on its hyperparameter but also the hyperparameters of its neighbors. This method connects the sparsity patterns of the adjacent entries, promoting block sparsity. We assume that $\bm \gamma_1$ exhibits block sparsity for ease of exposition. However, this idea can be readily extended when multiple hyperparameter vectors possess block sparsity. We adopt the prior on $\bm x$ with hyperparameters $\{\bm\gamma_i\}$ as
% \begin{multline}
%     p([\bm x]_{\overline{n_1n_2...n_I}}|\{\bm \gamma_i\})\\=\mathcal{CN}\big(0,\left([\bm \gamma_1]_{n_1}+\beta[\bm \gamma_1]_{n_1-1}+\beta[\bm \gamma_1]_{n_1+1}\right)\prod_{i=2}^I[\bm \gamma_i]_{n_i}\big),
% \end{multline}
% where $\overline{n_1n_2...n_I} = \sum_{i=1}^{I-1}(n_i-1)N^{I-i}+n_I$, and $\beta$ is the pattern-coupled coefficient. This can be further rewritten as
%\begin{align}\label{eq.modiprior}
    $p(\bm x|\{\bm \gamma_i\}) = \mathcal{CN}\big(0,\hat{\bm \gamma}\big)$,
%\end{align}
where $\hat{\bm \gamma} = \hat{\bm \gamma_1} \otimes (\otimes_{i=2}^I \bm \gamma_i)$ and %. Likewise, we define $\bm\gamma_1$ as
\begin{equation}\label{eq.hatnohat}
    \hat{\bm \gamma_1} = \bm C_{\beta} \bm \gamma_1,%\  \bm C_{\beta} = \mathrm{Tri}(1,\beta)
    % \begin{bmatrix}
    % 1 & \beta & 0 & \cdots & 0\\
    % \beta & 1 & \beta & \cdots & 0\\
    % 0 & \beta & 1 &  \cdots & 0\\
    % \vdots & \vdots & \vdots & \ddots & \vdots\\
    % 0 & \cdots & 0 & \beta & 1
    % \end{bmatrix}
    %,
\end{equation}
where %$\mathrm{Tri}(1,\beta)$ denotes 
$\bm C_{\beta}\in \mathbb{R}^{N \times N}$ is a tridiagonal Toeplitz matrix with ones along its diagonal and $\beta$ along its first sub- and super-diagonals~\cite{noschese2013tridiagonal}. The parameter $\beta>0$ is the pattern-coupled coefficient. Using the new prior, the mean and variance of conditional distribution in the $r$th EM iteration are modified by replacing $\bm \gamma^{(r)}$ with $\hat{\bm \gamma}^{(r)}$ in~\eqref{eq.post_meva}.
% as
% \begin{equation}\label{eq.post_meva_pc}
%     \bm \mu_{\bm x}\!=\!\sigma^{-2}\bm \Sigma_{\bm x}\bm H^\mathsf{H}\bm y,\bm \Sigma_{\bm x}\! =\! \left[\sigma^{-2}\bm H^\mathsf{H}\bm H\!+\!\diag(\hat{\bm \gamma}^{(r-1)})^{-1}\right]^{-1},
% \end{equation}
Thus, the optimization problem in the M-step is
\begin{equation}\label{eq.mstep2}
\underset{\substack{\{\bm \gamma_i\}}}{\min}\log|\diag(\hat{\bm\gamma})|+(\bm d^{(r)})^{\mathsf{T}}\hat{\bm\gamma}^{-1} \ \text{s.t.} \ \left\{\hat{\bm \gamma_1}, \{\bm \gamma_i\}_{i=2}^I\right\} \in\mathcal{C}_+%\!,\!\bm \gamma_1\!>\!0
.
\end{equation}
We solve~\eqref{eq.mstep2} using an iterative algorithm similar in spirit to the AM-KroSBL algorithm, by setting the gradient of  the cost function with respect to all hyperparameter vectors to zero. Here, the update for $\{\bm \gamma_i\}_{i=2}^I$ is given by~\eqref{eq.update}. However, due to the entanglement of hyperparameters in $\hat{\bm \gamma_1}$, the update for $\bm \gamma_1^{(r,t+1)}$ is
% \begin{figure*}[t!]
% \begin{equation}
% \label{eq.gradg1equation}
% N^{I-1}\!\big[\frac{1}{[\hat{\bm \gamma_1}]_{n_1}}\!+\!\frac{\beta}{[\hat{\bm \gamma_1}]_{{n_1}-1}}\!+\!\frac{\beta}{[\hat{\bm \gamma_1}]_{{n_1}+1}}\big]\!=\!\big[\frac{1}{[\hat{\bm \gamma_1}]_{n_1}^2}(\bm d_{n_1}^{(r-1)})^\mathsf{T}\!(\otimes_{i=2}^I\bm \gamma_i^{-1})\!+\!\frac{\beta}{[\hat{\bm \gamma_1}]_{{n_1}-1}^2}(\bm d_{{n_1}-1}^{(r-1)})^\mathsf{T}\!(\otimes_{i=2}^I\bm \gamma_i^{-1})\!+\!\frac{\beta}{[\hat{\bm \gamma_1}]_{{n_1}+1}^2}(\bm d_{{n_1}+1}^{(r-1)})^\mathsf{T}\!(\otimes_{i=2}^I\bm \gamma_i^{-1})\big]
% \end{equation}
% \end{figure*}
% where $\bm d_{n_1}^{(r-1)} = [\bm d^{(r-1)}]_{(n_1-1)N^{I-1}+1:n_1N^{I-1}}$. Computing the gradient for all $n_1 = 1,2,\ldots,N$ and stacking them into a vector leads to
\begin{multline}\label{eq.gradg1compact}
     N^{I-1} \bm C_{\beta} (\hat{\bm \gamma_1}^{(r,t+1)})^{-1}
     =\bm C_{\beta} \diag\left((\hat{\bm \gamma_1}^{(r,t+1)})^{-2}\right)\\\times\left(\bm I_N\otimes (\otimes_{i=2}^I(\bm \gamma_i^{(r,t)})^{-1})\right)^\mathsf{T}\bm d^{(r)}.
\end{multline}
Solving~\eqref{eq.gradg1compact} is not trivial due to the matrix-vector multiplication on both sides. However, if $\bm C_\beta$ is invertible, we can simplify~\eqref{eq.gradg1compact} using \eqref{eq.hatnohat} as 
\begin{equation}
    \bm C_{\beta} \bm \gamma_1^{(r,t+1)}= \hat{\bm \gamma_1}^{(r,t+1)}=N^{-I+1}(\bm I_N\otimes(\otimes_{i=2}^I(\bm \gamma_i^{(r,t)})^{-1}))^\mathsf{T}\bm d^{(r)}.
\end{equation}
Also, we notice that the $n$-th eigenvalue of $\bm C_{\beta}$ is $\lambda_n = 1 + 2\beta \cos(\frac{n\pi}{N+1})$ for $n=1,2,\ldots,N$ \cite{noschese2013tridiagonal}. Thus, we choose $\beta \neq -(2\cos(\frac{n\pi}{N+1}))^{-1}$ so that $\bm C_{\beta}$ is non-singular.
% \begin{align}
%      N^{I-1} (\hat{\bm \gamma_1}^{(t+1)})^{-1}
%      =\diag\left((\hat{\bm \gamma_1}^{(t+1)})^{-2}\right)\left(\bm I\otimes (\otimes_{i=2}^I(\bm \gamma_i^{(t)})^{-1})\right)^\mathsf{T}\bm d^{(r-1)},
% \end{align}
% which leads to
% \begin{align}\label{eq.update_pc_1}
%     \bm C_{\beta} \bm \gamma_1^{(t+1)}
%      =N^{-I+1}\left(\bm I\otimes(\otimes_{i=2}^I(\bm \gamma_i^{(t)})^{-1})\right)^\mathsf{T}\bm d^{(r-1)}.
% \end{align}
% Although it seems straightforward to invert $\bm C_{\beta}$ to obtain $\bm \gamma_1$, this is not a reasonable way. Since direct inverting $\bm C_{\beta}$ may lead to negative entries in $\bm \gamma_1$, which violets the non-negativity of hyperparameters. 
% Thus, finding $\bm \gamma_1^{(t+1)}$ that equates the gradient to zero is equivalent to solving~\eqref{eq.update_pc_1}. 
With this choice, we update $\bm \gamma_1$ along with the constraint $\bm \gamma_1^{(r,t+1)} > 0$ with
\begin{equation}\label{eq.nnls1}
    \min_{\bm \gamma_1^{(r,t+1)} \geq 0} \|\bm C_{\beta} \bm \gamma_1^{(r,t+1)}
     \!-\!N^{-I+1}\left(\bm I_N\otimes (\otimes_{i=2}^I(\bm \gamma_i^{(r,t)})^{-1})\right)^\mathsf{T}\!\bm d^{(r)}\|^2_2.
\end{equation}
%Here, solving~\eqref{eq.nnls1} may return $\bm \gamma_1^{(r,t+1)}$ with zero entries. We add a small term to the obtained $\bm \gamma_1^{(r,t+1)}$ to ensure $\bm \gamma_1^{(r,t+1)} > 0$ and practical stability. 
The resulting algorithm, namley PC-KroSBL, is summarized in Algorithm \ref{al.PCKroSBL}.

\begin{algorithm}[t]
\SetAlgoLined
\DontPrintSemicolon
\SetNoFillComment
  \KwData{Measurements $\bm y$, matrix $\bm H$, noise power $\sigma^2$}
  \textbf{Parameters}: Threshold $\epsilon$ and $\epsilon_{\mathsf{AM}}$
   
  \textbf{Initialization}: $\{\bm \gamma_i\}^{(0)}=\bm 0$, $\{\bm \gamma_i\}^{(1)} \in \mathcal{C}$, set $r=1$.
  
   \While{$\| \hat{\bm \gamma}^{(r)}-\hat{\bm \gamma}^{(r-1)} \|_2 > \epsilon$}{

    Compute $\bm d^{(r)}$ using~\eqref{eq.compute_d} and~\eqref{eq.post_meva} with $\bm\gamma^{(r)}=\hat{\bm \gamma}^{(r)}$ 
    % \tcp*{E-step}

    Set $t=1$, $\{\bm \gamma_i\}^{(r,1)} = \{\bm \gamma_i\}^{(r)} \in \mathcal{C}_+$, $\{\bm \gamma_i\}^{(r,0)} = \bm 0$
    % \tcp*{M-step}
    
    \While{$\| \hat{\bm \gamma}^{(r,t)}-\hat{\bm \gamma}^{(r,t-1)} \|_2 > \epsilon_\mathsf{AM}$}{ 

    Compute $\bm \gamma_1^{(r,t+1)}$ using~\eqref{eq.nnls1}
    
    Compute $\tilde{\bm \gamma_1}=\hat{\bm \gamma_1}^{(r,t+1)}$ using~\eqref{eq.hatnohat}
    
    % \For{$i=2,\ldots,I$}
    % { Compute~\eqref{eq.update} for $\tilde{\bm \gamma_i}$ with $\tilde{\bm \gamma_1}$ being replaced with $\hat{\bm \gamma_1}^{(r,t+1)}$
    % }
    Compute $\{\bm \gamma_i\}^{(r,t+1)}$ using \eqref{eq.update} and \eqref{eq.projection}
    
    % Projection to $\mathcal{C}_+$ as~\eqref{eq.projection} with $\tilde{\bm \gamma}_1$ being replaced with $\hat{\bm \gamma_1}^{(r,t+1)}$

    Update AM iteration index $t \leftarrow t + 1$

  }

    $\{\bm \gamma_i\}^{(r+1)} = \{\bm \gamma_i\}^{(r,t)}$
  
    Update iteration index $r \leftarrow r + 1$

   }
   
   \KwResult{Output $\bm x=\bm \mu_{\bm x}$ using~\eqref{eq.post_meva}}
  \caption{PC-KroSBL}
  \label{al.PCKroSBL}
\end{algorithm}

\section{Theoretical Analysis of KroSBL}\label{sec.convergence}

This section focuses on the theoretical analysis of KroSBL. We discuss the convergence guarantee for AM-KroSBL in Sec.~\ref{sec.con_AM}. Then, we present our results on the values to which the algorithm converges by studying the local minima of the KroSBL cost function in~\eqref{eq.cost_ml}.

\subsection{Convergence property of AM-KroSBL}\label{sec.con_AM}
% The classic SBL framework determines the estimate of the sparse vector $\bm x$ by estimating $\bm \gamma=\otimes_{i=1}^I \bm \gamma_i$ through maximizing the marginalized log-likelihood $\mathcal{L}$. This optimization problem, however, generally does not adopt a closed-form solution. Then the EM algorithm is introduced to provide a surrogate solution by maximizing the lower bound of $\mathcal{L}$. Thus, the convergence property of SBL is closely related to the convergence property of the EM algorithm. So is AM-KroSBL. 

The convergence of AM-KroSBL is established using the properties of the EM algorithm, which is well studied in \cite{wu1983convergence}. It is known that under certain conditions, the EM algorithm guarantees convergence to stationary points of $\mathcal{L}$. Nonetheless, the guarantees of the EM algorithm in KroSBL depend on the convergence of the AM algorithm (inner loop). So, this section answers two questions: \emph{What are the convergence properties of the AM algorithm? Do the properties of AM guarantee the convergence of AM-KroSBL?} The first question is answered by Proposition \ref{thm.am_stationary}, serving as a cornerstone to the answer to the second question via Theorem \ref{thm.stationary}.
We first introduce a lemma that supports the main results.
\begin{lemma}\label{thm.am_cost}
 Consider the AM algorithm that solves the M-step optimization problem of the $r$th EM iteration of AM-KroSBL (Algorithm \ref{al.AMKroSBL}) given by \eqref{eq.mstep}, for a fixed iteration index $r>0$. If $\bm d^{(r)} > 0$ in \eqref{eq.compute_d}, then the cost function sequence $Q(\{\bm\gamma_i\}^{(r,t)})|_{t=1}^\infty$ generated by the AM algorithm is non-increasing.
\end{lemma}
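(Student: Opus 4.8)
The plan is to show that one full Gauss--Seidel sweep of the AM inner loop is a block coordinate descent step on $Q$, and that the subsequent rescaling in \eqref{eq.projection} leaves the cost unchanged; monotonicity of $Q(\{\bm\gamma_i\}^{(r,t)})$ is then immediate. Throughout, fix $r$ and write $\bm d = \bm d^{(r)} > 0$.

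First I would reduce $Q$ to a function of a single block $\bm\gamma_i$ with the other blocks held fixed. Using the index-wise behaviour of the Kronecker product, $\log|\diag(\otimes_{j}\bm\gamma_j)| = N^{I-1}\sum_{j}\bm 1_N^\mathsf{T}\log\bm\gamma_j$ and $(\otimes_j\bm\gamma_j)^{-1} = \otimes_j\bm\gamma_j^{-1}$ entrywise, so that
\begin{equation}
Q(\{\bm\gamma_j\}) = N^{I-1}\,\bm 1_N^\mathsf{T}\log\bm\gamma_i \;+\; \bm u_i^\mathsf{T}\bm\gamma_i^{-1} \;+\; \mathrm{const},
\end{equation}
where $\bm u_i = \big[(\otimes_{j<i}\bm\gamma_j^{-1})\otimes\bm I_N\otimes(\otimes_{j>i}\bm\gamma_j^{-1})\big]^\mathsf{T}\bm d$ gathers the contributions of the fixed blocks. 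Since $\bm d>0$ and the fixed blocks lie in $\mathcal{C}_+$ (hence are entrywise positive), $\bm u_i>0$.

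Next I would minimize this reduced objective. It is separable over the $N$ coordinates of $\bm\gamma_i$, with each scalar term $f(\gamma)=N^{I-1}\log\gamma + u/\gamma$ for $u>0$, $\gamma>0$. This $f$ is coercive on $(0,\infty)$ (it tends to $+\infty$ both as $\gamma\to 0^+$ and as $\gamma\to\infty$) and has the unique stationary point $\gamma^\star = u/N^{I-1}$, which is therefore its global minimizer; equivalently, this is the block analogue of the closed-form identity $\underset{\bm\gamma}{\arg\min}\,\log|\diag(\bm\gamma)|+\bm d^\mathsf{T}\bm\gamma^{-1}=\bm d$ recalled in \eqref{eq.mstep_svd}. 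Hence the update \eqref{eq.update}, namely $\tilde{\bm\gamma}_i = \bm u_i/N^{I-1}$, is exactly the minimizer of $Q$ over the $i$th block with all other blocks fixed, so it does not increase $Q$. Performing this sequentially for $i=1,\dots,I$ (each update using the most recent values of the other blocks) yields $I$ consecutive non-increases of $Q$.

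Finally I would invoke the remark after \eqref{eq.projection}: the rescaling produces $\otimes_{i}\bm\gamma_i^{(r,t+1)} = \otimes_i\tilde{\bm\gamma}_i$, and since $Q$ depends on $\{\bm\gamma_i\}$ only through $\otimes_i\bm\gamma_i$, the projection leaves $Q$ unchanged; it also keeps the iterate entrywise positive (so the next sweep is well defined) and enforces $\|\bm\gamma_i\|_2=1$ for $i<I$. Chaining the $I$ block minimizations with the cost-preserving projection gives $Q(\{\bm\gamma_i\}^{(r,t+1)}) \le Q(\{\bm\gamma_i\}^{(r,t)})$ for all $t$, which is the claim. The one delicate point is the block-minimization step: one must carry out the Kronecker bookkeeping to see that the $i$th subproblem has exactly the separable form above, and then argue optimality of $\gamma^\star$ via coercivity together with uniqueness of the critical point, since $f$ is \emph{not} globally convex (its logarithmic part is concave).
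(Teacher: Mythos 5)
Your proposal is correct and follows essentially the same route as the paper's proof: one Gauss--Seidel sweep is a chain of block-wise minimizations of $Q$, each of which cannot increase the cost, and the rescaling in \eqref{eq.projection} leaves $Q$ unchanged because $\otimes_i\bm\gamma_i^{(r,t+1)}=\otimes_i\tilde{\bm\gamma}_i$ and $Q$ depends on the blocks only through their Kronecker product. The paper states the resulting chain of inequalities \eqref{eq.nonincrease} without further justification, whereas you explicitly verify the one step it leaves implicit --- that the stationary-point update \eqref{eq.update} is the \emph{global} block minimizer, via the separable form $N^{I-1}\bm 1_N^\mathsf{T}\log\bm\gamma_i+\bm u_i^\mathsf{T}\bm\gamma_i^{-1}$ with $\bm u_i>0$, coercivity on $(0,\infty)^N$, and uniqueness of the critical point (needed since the objective is not convex) --- which is a worthwhile addition rather than a deviation.
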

\begin{proof}
    The non-increasing nature of sequence $Q(\{\bm\gamma_i\}^{(r,t)})|_{t=1}^\infty$ is because the AM algorithm in every iteration optimizes one hyperparameter vector while keeping the others fixed, i.e.,
\begin{multline}\label{eq.nonincrease}
    Q(\{\bm \gamma_i\}^{(r,t)})\geq Q\left(\tilde{\bm \gamma}_1,\{\bm \gamma_i^{(r,t)}\}_{i=2}^I\right)\geq Q\left(\{\tilde{\bm \gamma}_i\}_{i=1}^2\!,\!\{\bm \gamma_i^{(r,t)}\}_{i=3}^I\right) \\
    \geq Q(\{\tilde{\bm \gamma}_i\})=Q(\{\bm \gamma_i\}^{(r,t+1)}),
\end{multline}
where the last step follows because the projection step in \eqref{eq.projection} does not change the cost function value.
\end{proof}
The following proposition uses the above lemma to show that the iterates of the AM algorithm converges.% to a single limit point.
\begin{proposition}\label{thm.am_stationary}
[AM algorithm convergence] Consider the AM algorithm that solves the M-step optimization problem of the $r$th EM iteration of AM-KroSBL (Algorithm \ref{al.AMKroSBL}) given by \eqref{eq.mstep}, for a fixed iteration index $r>0$. If $\bm d^{(r)} > 0$ in \eqref{eq.compute_d}, then the sequence $\{\bm\gamma_i\}^{(r,t)}|_{t=1}^\infty$ converges to the set of stationary points of the M-step cost function $Q(\{\bm \gamma_i\})$ in $\mathcal{C}_+$.
\end{proposition}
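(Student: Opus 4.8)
The plan is to establish convergence of the alternating minimization iterates by combining the monotonicity from Lemma~\ref{thm.am_cost} with a compactness argument and the first-order optimality conditions that the update rule~\eqref{eq.update}--\eqref{eq.projection} enforces. First I would argue that the sequence $\{\bm\gamma_i\}^{(r,t)}|_t$ stays in a compact subset of $\mathcal{C}_+$: the normalization constraints $\|\bm\gamma_i\|_2=1$ for $i<I$ immediately bound those components, and since $Q(\{\bm\gamma_i\}^{(r,t)})$ is non-increasing (Lemma~\ref{thm.am_cost}) while $Q(\{\bm\gamma_i\}) = \log|\diag(\bm\gamma)| + (\bm d^{(r)})^\mathsf{T}\bm\gamma^{-1}$ tends to $+\infty$ as any entry of $\bm\gamma_I$ goes to $0$ or $\infty$ (using $\bm d^{(r)}>0$ componentwise), the last block $\bm\gamma_I^{(r,t)}$ is confined to a compact set bounded away from the boundary $\partial\mathcal{C}_+$. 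Hence the whole iterate sequence has limit points, all lying in the interior region of $\mathcal{C}_+$ where $Q$ is smooth.

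\textbf{Core argument.} Next I would invoke the standard convergence theory for block coordinate descent / alternating minimization on a smooth function over a product of closed sets (as in, e.g., the framework of Bertsekas or the Global Convergence Theorem of Zangwill): because each block update $\tilde{\bm\gamma}_i$ is the exact minimizer of $Q$ over that block with the others fixed (this is precisely what setting the gradient to zero in~\eqref{eq.update} accomplishes, and the minimizer is unique since $Q$ restricted to one block is strictly convex — it is a sum of $\log\gamma_n$ and $c_n/\gamma_n$ terms with positive $c_n$), the algorithmic map is closed, the cost decreases strictly at any non-stationary point, and the descent is "sufficient" in the sense required. Combined with the compactness established above, this forces every limit point $\{\bm\gamma_i^\star\}$ of the sequence to be a coordinate-wise minimum of $Q$. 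The final step is to verify that a coordinate-wise minimum of $Q$ over $\mathcal{C}_+$ is in fact a stationary point of $Q$ over $\mathcal{C}_+$: since the limit point lies in the relative interior (the positivity constraints are inactive, only the normalization equalities bind), the KKT conditions reduce to the statement that the gradient of $Q$ is orthogonal to the feasible directions of the product manifold, and being a minimizer in each block separately — together with the product structure of the constraint set $\mathcal{C}_+$ — yields exactly this. Thus the iterates converge to the set of stationary points.

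\textbf{The hard part} will be the compactness/coercivity step: one must rule out that $\bm\gamma_I^{(r,t)}$ drifts toward the boundary of $\mathcal{C}_+$ (some entry $\to 0$, which would make the limit lie outside $\mathcal{C}_+$ and render the gradient undefined) or escapes to infinity. The boundedness-away-from-zero is what genuinely uses the hypothesis $\bm d^{(r)}>0$ — if some $d_n^{(r)}=0$ the corresponding $\gamma_n$ could collapse to $0$ with finite (indeed decreasing) cost, breaking the argument. I would handle this by showing that along the trajectory each coordinate of the full vector $\bm\gamma^{(r,t)}=\otimes_i\bm\gamma_i^{(r,t)}$ satisfies a two-sided bound derived from the level set $\{Q \le Q(\{\bm\gamma_i\}^{(r,1)})\}$ being compact in $\mathcal{C}_+$, then transferring this back to the individual blocks via the normalization constraints (which pin down the blocks $i<I$ and hence, given the bound on the product, also pin down $\bm\gamma_I$). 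Once this is in place, the remaining pieces are routine applications of the alternating-minimization convergence machinery, and strictness of the per-block minimization removes the usual caveat about cycling among non-stationary points.
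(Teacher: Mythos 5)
Your proposal follows essentially the same route as the paper's proof: compactness of the sublevel set of $Q$ on $\mathcal{C}_+$ via the blow-up of $Q$ at the boundary and at infinity (which is exactly where $\bm d^{(r)}>0$ is used), strict descent at non-stationary points from exact per-block minimization, and Zangwill's global convergence theorem, with the normalization step handled by noting it leaves $\otimes_i\bm\gamma_i$ and hence $Q$ unchanged. One small correction: $Q$ restricted to a single block is \emph{not} strictly convex (each term $\log\gamma + c/\gamma$ fails convexity for large $\gamma$), but it is separable and strictly unimodal in each entry, so the unique per-block minimizer your argument needs still exists and the proof goes through unchanged.
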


\begin{proof}
See Appendix \ref{appe.am}.
\end{proof}

We note that Proposition \ref{thm.am_stationary} only guarantees the AM algorithm converges to a stationary point which is not necessarily a global minimum. However, the following result establishes that the convergence of AM to a stationary point is sufficient to ensure the convergence of AM-KroSBL.

% The next is to prove that the sequence $\{\bm \gamma_1^{(r)},\bm \gamma_2^{(r)},\bm \gamma_3^{(r)}\}_{r=0}^\infty$ produced by AM-KroSBL converges to the set of stationary points of $\mathcal{L}$ regarding $\{\bm \gamma_1,\bm \gamma_2,\bm \gamma_3\}$.

\begin{thm}\label{thm.stationary}
Consider the model in~\eqref{eq.problem_basic} with the assumptions $i)$ the noise variance $\sigma^2 > 0$, $ii)$ there exists $\epsilon>0$  such that the dictionary satisfies $\Vert [\bm H]_i\Vert_2>\epsilon$, for $i=1,2,\ldots,N^I$, and $iii)$ the starting point of AM-KroSBL $\{\bm\gamma_i\}^{(1)} >0$. Then, the sequence $\{\bm\gamma_i\}^{(r)}|_{r=1}^{\infty}$ generated by AM-KroSBL (Algorithm \ref{al.AMKroSBL}) convergence to the set of the stationary points of its cost function $\mathcal{L}$ given by \eqref{eq.cost_ml}.
\end{thm}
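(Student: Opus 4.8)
\emph{Proof plan.} The idea is to recognize AM-KroSBL as a \emph{generalized} EM (GEM) scheme and then invoke the Zangwill/Wu global convergence machinery~\cite{wu1983convergence}, after checking the three ingredients it requires: (a) the inner M-step always satisfies the hypotheses of Lemma~\ref{thm.am_cost} and Proposition~\ref{thm.am_stationary}, so that it returns a stationary point of the surrogate $Q$ while not increasing it; (b) $\mathcal{L}$ is a continuous descent function whose sublevel sets containing the iterates are compact; and (c) the point-to-set M-step map is closed. First I would show $\bm d^{(r)}>0$ for \emph{every} $r$. Since $\sigma^2>0$, the posterior covariance $\bm\Sigma_{\bm x}$ in~\eqref{eq.post_meva} is positive definite whenever $\bm\gamma^{(r)}>0$, hence $\bm d^{(r)}=\diag(\bm\Sigma_{\bm x}+\bm\mu_{\bm x}\bm\mu_{\bm x}^{\mathsf H})>0$; and from~\eqref{eq.update}--\eqref{eq.projection} the AM iterates $\{\bm\gamma_i\}^{(r,t)}$ remain strictly positive as long as $\bm d^{(r)}>0$ and the AM start $\{\bm\gamma_i\}^{(r)}>0$. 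Starting from assumption $(iii)$, an induction on $r$ yields $\bm\gamma^{(r)}>0$ and $\bm d^{(r)}>0$ for all $r$, so Lemma~\ref{thm.am_cost} and Proposition~\ref{thm.am_stationary} apply at every outer iteration. In particular, Lemma~\ref{thm.am_cost} gives the GEM inequality $Q(\{\bm\gamma_i\}^{(r+1)}\mid\bm\gamma^{(r)})\le Q(\{\bm\gamma_i\}^{(r)}\mid\bm\gamma^{(r)})$. Combining this with the standard EM decomposition $\mathcal{L}(\bm\gamma)=Q(\bm\gamma\mid\bm\gamma^{(r)})-\mathcal{R}(\bm\gamma\mid\bm\gamma^{(r)})$, where $\mathcal{R}(\bm\gamma\mid\bm\gamma')=-\mathbb{E}_{\bm x\mid\bm y;\bm\gamma'}\{\log p(\bm x\mid\bm y;\bm\gamma)\}$ is minimized over $\bm\gamma$ at $\bm\gamma=\bm\gamma'$ (Gibbs' inequality, with the KL divergence as the gap), I get $\mathcal{L}(\{\bm\gamma_i\}^{(r+1)})\le\mathcal{L}(\{\bm\gamma_i\}^{(r)})$.

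Next I would show the iterates stay in a compact set. From~\eqref{eq.cost_ml}, $\bm\Sigma_{\bm y}\succeq\sigma^2\bm I_{\bar M}$ gives $\bm y^{\mathsf H}\bm\Sigma_{\bm y}^{-1}\bm y\ge 0$ and $\log|\bm\Sigma_{\bm y}|\ge\bar M\log\sigma^2$, so $\mathcal{L}$ is bounded below and the monotone sequence $\mathcal{L}(\{\bm\gamma_i\}^{(r)})$ converges to some $\mathcal{L}^\star$. Moreover, using $\bm\Sigma_{\bm y}\succeq\sigma^2\bm I_{\bar M}+\gamma_j[\bm H]_j[\bm H]_j^{\mathsf H}$, monotonicity of $\log\det$ under the Loewner order, and assumption $(ii)$ ($\|[\bm H]_j\|_2>\epsilon$), one sees $\mathcal{L}(\bm\gamma)\to\infty$ whenever any component $\gamma_j\to\infty$; since the normalization in $\mathcal{C}$ forces $\|\otimes_i\bm\gamma_i\|_2=\|\bm\gamma_I\|_2$ with $\|\bm\gamma_i\|_2=1$ for $i<I$, the sublevel set $\{\mathcal{L}\le\mathcal{L}(\{\bm\gamma_i\}^{(1)})\}$ is bounded, hence $\{\bm\gamma_i\}^{(r)}$ lies in a compact subset of the closure of $\mathcal{C}_+$, which guarantees the existence of limit points.

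The conclusion then follows from the GEM form of the global convergence theorem: the M-step map $\bm\gamma\mapsto M(\bm\gamma)\coloneqq\{\text{stationary points of }Q(\cdot\mid\bm\gamma)\text{ in }\mathcal{C}_+\}$ is closed because $Q$ and its gradient are jointly continuous in both arguments on the relevant domain, $\mathcal{L}$ is a continuous descent function for $M$, and Proposition~\ref{thm.am_stationary} gives $\{\bm\gamma_i\}^{(r+1)}\in M(\bm\gamma^{(r)})$. To see that every limit point is stationary for $\mathcal{L}$, take a convergent subsequence $\{\bm\gamma_i\}^{(r_k)}\to\{\bm\gamma_i^\star\}$ and (after a further extraction) $\{\bm\gamma_i\}^{(r_k+1)}\to\{\bm\gamma_i^{\star\star}\}$; continuity of $\mathcal{L}$ forces $\mathcal{L}(\bm\gamma^\star)=\mathcal{L}(\bm\gamma^{\star\star})=\mathcal{L}^\star$, so in the EM decomposition both the $Q$-gap and the $\mathcal{R}$-gap across one step vanish in the limit, and equality in Gibbs' inequality yields $p(\bm x\mid\bm y;\bm\gamma^{\star\star})=p(\bm x\mid\bm y;\bm\gamma^\star)$, which by~\eqref{eq.post_meva} (matching the posterior covariances) gives $\bm\gamma^{\star\star}=\bm\gamma^\star$. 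Hence $\{\bm\gamma_i^\star\}$ is a fixed point of the M-step, i.e.\ a stationary point of $Q(\cdot\mid\bm\gamma^\star)$, and the EM gradient identity $\nabla_{\{\bm\gamma_i\}}\mathcal{L}(\{\bm\gamma_i^\star\})=\nabla_{\{\bm\gamma_i\}}Q(\{\bm\gamma_i\}\mid\bm\gamma^\star)\big|_{\{\bm\gamma_i\}=\{\bm\gamma_i^\star\}}$ (valid because $\mathcal{R}(\cdot\mid\bm\gamma^\star)$ is critical at $\bm\gamma^\star$) shows $\{\bm\gamma_i^\star\}$ is a stationary point of $\mathcal{L}$ over $\mathcal{C}$. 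The step I expect to be the main obstacle is exactly this last transfer: pushing the ``stationary point'' property through the GEM argument even though the inner AM loop only reaches a stationary (not globally optimal) point of $Q$, and doing so rigorously in the presence of the constraint set $\mathcal{C}_+$ and of possible boundary behaviour (entries of $\bm\gamma$ tending to $0$), rather than pretending $\mathcal{L}$ is smooth on an open domain.
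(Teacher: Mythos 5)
Your proposal is correct and follows essentially the same route as the paper: both cast AM-KroSBL as a generalized EM scheme and verify the hypotheses of Wu's global convergence theorem, using the same induction to keep $\bm d^{(r)}>0$ (via Proposition~\ref{thm.am_stationary}), the same coerciveness argument from assumptions $(i)$--$(ii)$ for compactness of the sublevel set, and the same stationarity-of-$Q$ property of the inner AM loop. The only difference is presentational: your final subsequence/Gibbs-inequality paragraph re-derives the content of \cite[Theorem 6]{wu1983convergence}, which the paper simply cites after checking its conditions.
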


\begin{proof}
See Appendix \ref{appe.em}.
\end{proof}
%\vspace{-0.1cm}
We note that the assumptions of Theorem \ref{thm.stationary} are realistic. In particular, the assumption on the dictionary holds when $\bm H$ has no zero columns. If the norm of a column in $\bm H$ is zero, indicating that all its elements are zero, then that column does not contribute to the measurement and can be removed. %assumes a nonzero noise variance, which is realistic as the measurement noise is inevitable in practice. The second assumption on the dictionary holds when $\bm H$ has no zero columns. If the norm of a column in $\bm H$ is zero, indicating that all its elements are zero, then that particular column does not contribute to the measurement and can be removed. The third assumption is required to ensure $Q$ in \eqref{eq.mstep} is well-defined. Thus, all three assumptions of Theorem \ref{thm.stationary} are reasonable.

Furthermore, Proposition \ref{thm.am_stationary} suggests that $\{\bm\gamma_i\}^{(r)}\in\mathcal{C}_+$ and thus $\{\bm\gamma_i\}^{(r)}\!>\!0$, which seems to contradict the expected sparsity of the estimates $\{\bm\gamma_i\}^{(r)}$. However, $\{\bm\gamma_i\}^{(r)}\!>\!0$ only holds under the assumption $\bm d^{(r)}\!>\!0$ and the sequence $\{\bm d^{(r)}\}_{r=1}^\infty$ belongs to an open set $\{\bm d|\bm d > 0\}$. From our experiments, we observe that the sequence converges to $\bm d^{(\infty)}$ that belongs to the boundary of the open set, leading to sparse $\{\bm \gamma_i\}\in\mathcal{C}\setminus\mathcal{C}_+$. Intuitively, this behavior can be viewed as follows. If  the $n^*$th entry of $\bm \gamma_{i^*}^{(r)}$ goes to zero for some $i^*$ and $n^*$, then all the $N^{I-1}$ entries in $\bm\gamma^{(r)} = \otimes_{i=1}^I\bm\gamma_i^{(r)}$ involving  $[\bm \gamma_{i^*}]_{n^*}$ are zeros. Let $\mathcal{M}$ be the set of indices in $\otimes_{i=1}^I\bm\gamma_i^{(r)}$ approaching zero.  Then, the submatrix $[\bm \Sigma_{\bm x}]_{\mathcal{M}}$ goes to zero because from \eqref{eq.post_meva},
\begin{equation}
    \bm \Sigma_{\bm x} = \bm \Gamma^{(r)}-\bm \Gamma^{(r)}\bm H^{\mathsf{H}}\left(\sigma^2
    \bm I_{N^I}+\bm H\bm \Gamma^{(r)}\bm H^{\mathsf{H}}\right)\bm H \bm \Gamma^{(r)},
\end{equation}
where the rows of $ \bm \Gamma^{(r)}=\diag{\otimes_{i=1}^I\bm\gamma_i^{(r)}}$ indexed by $\mathcal{M}$ are close to zero. Consequently,  $[\bm d^{(r)}]_{\mathcal{M}}$ also goes to zero due to the following relation from \eqref{eq.compute_d} and \eqref{eq.post_meva},
\begin{equation}
    \bm d^{(r)} = \diag\left(\bm \Sigma_{\bm x}\left[\bm I_{N^I}+\sigma^{-4}\bm H^\mathsf{H}\bm y\bm y^\mathsf{H}\bm H \bm \Sigma_{\bm x}\right]\right). 
\end{equation}
Conversely, suppose that $[\bm d^{(r)}]_{\mathcal{M}}$  goes to zero, where $\mathcal{M}$ is index set of the entries in $\diag{\otimes_{i=1}^I\bm\gamma_i^{(r)}}$ corresponding to a particular element $[\bm \gamma_{i^*}^{(r)}]_{n^*}$ in $\bm \gamma_{i^*}^{(r)}$ for some $i^*$. Then, the second term in $Q$ of \eqref{eq.qfunc} involving $[\bm \gamma_{i^*}]_{n^*}$ vanishes and minimizing $\log [\bm \gamma_{i^*}]_{n^*}$ drives $[\bm \gamma_{i^*}^{(r)}]_{n^*}$ to zero. Thus, a sparse $\bm d^{(r)}$ encourages a sparse $\bm \gamma^{(r)}$ and vice versa, leading to a sparse convergent point $\{\bm \gamma_i\}^{(\infty)} \in \mathcal{C}\setminus\mathcal{C}_+$.

\subsection{Local minima of KroSBL cost function}
\label{sec:local_minima}
Having studied the convergence properties of the algorithm, we now look at the properties of the limit points. Unlike the previous section, the results of this section assume that the sparse vector is also Kronecker-structured. The first result of the section, Theorem~\ref{thm.local_minima_sparse}, proves that all local minima $\{\bm \gamma_i\}$ of the KroSBL cost function $\mathcal{L}$ in~\eqref{eq.cost_ml} are sparse. Subsequently, in Theorem~\ref{thm.no_local}, we derive an upper bound on the number of local minima of the KroSBL cost function.

\begin{thm}\label{thm.local_minima_sparse}
    In the noiseless setting, every local minimum of $\mathcal{L}$ is achieved at a sparse solution $\{\bm \gamma_i\}$, that is, $\|\bm \gamma_i\|_0 \leq {M_i}$ for $i=1,2,\cdots,I$, if the sparse vector is Kronecker-structured, i.e., $\bm x =\otimes_i^I\bm x_i$, for some $\bm x_i\in\mathbb{C}^{N}$.
\end{thm}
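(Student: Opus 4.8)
The plan is to exploit the Kronecker structure to split the noiseless cost $\mathcal{L}$ into one contribution per factor, to reduce the claim to the classical fact that noiseless single-dictionary SBL has only sparse local minima, and then to recombine the factors.

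First, I would rewrite the cost in the noiseless regime. Setting $\sigma^2=0$ and using $\bm x=\otimes_{i=1}^I\bm x_i$ together with the mixed-product property of the Kronecker product, $\bm y=\bm H\bm x=\otimes_{i=1}^I\bm y_i$ where $\bm y_i=\bm H_i\bm x_i$, and in the non-degenerate case $\bm y\neq\bm 0$ every $\bm y_i\neq\bm 0$. Writing $\bm\Sigma_i=\bm H_i\diag(\bm\gamma_i)\bm H_i^{\mathsf H}$, we get $\bm\Sigma_{\bm y}=\otimes_{i=1}^I\bm\Sigma_i$, so the Kronecker determinant/inverse identities and $(\otimes_i\bm y_i)^{\mathsf H}(\otimes_i\bm\Sigma_i^{-1})(\otimes_i\bm y_i)=\prod_i\bm y_i^{\mathsf H}\bm\Sigma_i^{-1}\bm y_i$ give
\[
\mathcal{L}(\{\bm\gamma_i\})=\sum_{i=1}^I\Big(\prod_{j\neq i}M_j\Big)\log|\bm\Sigma_i|+\prod_{i=1}^I\bm y_i^{\mathsf H}\bm\Sigma_i^{-1}\bm y_i ,
\]
read in the degenerate-Gaussian sense on the boundary, with the finiteness domain factorizing into the conditions $\bm y_i\in\mathrm{range}(\bm\Sigma_i)$.

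Next, I would reduce to a single factor. A local minimum $\{\bm\gamma_i^{\star}\}$ of $\mathcal{L}$ over $\mathcal{C}$ is also one over $\{\bm\gamma_i\geq\bm 0,\ \forall i\}$, because any nearby feasible point can be sent by the rescaling \eqref{eq.projection} to a point of $\mathcal{C}$ that is still nearby---the scale factors are close to $1$ since $\|\bm\gamma_i^{\star}\|_2=1$---while leaving $\otimes_i\bm\gamma_i$, hence $\mathcal{L}$, unchanged. Freezing $\bm\gamma_j=\bm\gamma_j^{\star}$ for $j\neq i$, the display above becomes, as a function of $\bm\gamma_i$, $a\log|\bm\Sigma_i|+b\,\bm y_i^{\mathsf H}\bm\Sigma_i^{-1}\bm y_i+\text{const}$ with $a=\prod_{j\neq i}M_j>0$ and $b=\prod_{j\neq i}\bm y_j^{\mathsf H}(\bm\Sigma_j^{\star})^{-1}\bm y_j>0$. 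Absorbing $b/a$ into $\bm\gamma_i$ via the positive rescaling $\bm\gamma_i\mapsto(b/a)\bm\gamma_i$ turns this, up to the positive factor $1/a$ and an additive constant, into precisely the noiseless single-dictionary SBL cost for the sub-system $\bm y_i=\bm H_i\bm x_i$. Since positive rescaling is a support-preserving homeomorphism of $\{\bm\gamma_i\geq\bm 0\}$, the vector $\bm\gamma_i^{\star}$ is a local minimum of that single-factor cost, so by the classical characterization of noiseless SBL local minima \cite{wipf2004sparse} it has at most $M_i$ nonzero entries; as $i$ is arbitrary, $\|\bm\gamma_i^{\star}\|_0\leq M_i$ for all $i$, which is the claim.

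The hard part will be making the first two steps rigorous on the boundary of the feasible set: once some $\bm\Sigma_i$ is singular, $\log|\bm\Sigma_i|$ and $\bm\Sigma_i^{-1}$ must be interpreted through the limiting (degenerate-Gaussian) argument, and the per-factor reduction has to respect it so that the cited single-dictionary result---itself established through such a limit---transfers verbatim to each factor. The Kronecker decomposition of $\mathcal{L}$ and the scale-invariance bookkeeping are the only genuinely new ingredients; everything else is inherited factor by factor from the standard SBL analysis.
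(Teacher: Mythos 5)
Your argument is sound but follows a genuinely different route from the paper's. The paper adapts the Wipf--Rao machinery wholesale: it introduces an auxiliary vector $\bm u^*$ so that the data-fit term $\bm y^{\mathsf H}\bm\Sigma_{\bm y}^{-1}\bm y$ is pinned to a constant on the affine constraint $\bm A(\otimes_i\bm\gamma_i)=\bm b$ (Lemma~\ref{lmm.constant}), uses a dedicated lemma (Lemma~\ref{lmm.kron_equations_separable}) to split that Kronecker-structured constraint into $I$ independent linear systems $\bm A_i\bm\gamma_i=\bm b_i$, and then minimizes the separable concave log-determinant (Lemma~\ref{lmm.ml_separable}) over each resulting polytope, concluding that local minima sit at extreme points, i.e.\ basic feasible solutions with at most $M_i$ nonzeros. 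You instead decompose the \emph{entire} noiseless cost as $\sum_i(\prod_{j\neq i}M_j)\log|\bm\Sigma_i|+\prod_i\bm y_i^{\mathsf H}\bm\Sigma_i^{-1}\bm y_i$, observe that a joint local minimum is a blockwise one, and show that freezing the other blocks and absorbing the positive constant $b/a$ by a support-preserving rescaling turns each block's restricted cost into exactly the classical single-dictionary noiseless SBL cost, so the known result of \cite{wipf2004sparse} applies verbatim per factor. Your route is shorter and more modular --- it treats the classical theorem as a black box and makes the factorization mechanism transparent --- whereas the paper's is self-contained and re-derives sparsity from the extreme-point argument, with intermediate constructions that are structurally echoed in the proof of Theorem~\ref{thm.no_local}. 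Two points deserve care in a full write-up: first, your reduction from the constraint set $\mathcal{C}$ to the nonnegative orthant via the projection \eqref{eq.projection} is correct but should be stated as a two-line contradiction argument; second, the degenerate-Gaussian interpretation on the boundary (where some $\bm\Sigma_i$ is singular and $\log|\bm\Sigma_i|=-\infty$, $\bm\Sigma_i^{-1}$ must be read as a pseudo-inverse on the range) is the same technical debt the paper's own proof carries implicitly, so flagging it, as you do, is appropriate rather than a defect of your approach specifically.
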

% {\color{red} what do you mean by a sparse solution? What level of sparsity?}
\begin{proof}
    See Appendix \ref{appe.sparse_local}.
\end{proof}

%While KroSBL generates a sparse solution, this sparsity can originate from two cases. First, some hyperparameter vectors are dense while others are sparse, leading to a sparse vector after the Kronecker product. Second, all hyperparameter vectors are sparse. Theorem 2 shows that the first case is infeasible. Now we shift to the upper bound for the number of local minima of $\mathcal{L}$.

Theorem \ref{thm.local_minima_sparse} 
%shows that the local minima of $\mathcal{L}$ are sparse. Further, it 
indicates that the local minimum is sparse, not because some hyperparameter vectors ($\bm\gamma_i$'s) are dense while others are sparse, leading to a sparse Kronecker product. Instead, it implies each $\bm\gamma_i$ generated by KroSBL is sparse, following our  Kronecker-structured support model. 

Now we discuss an upper bound for the number of local minima of $\mathcal{L}$. For this, we use the concept of unique representation property (URP). The matrix $\bm H$ is said to satisfy URP if any subset of ${\bar{M}}$ columns of $\bm H$ is linearly independent~\cite{wipf2004sparse}. If the dictionary $\bm H$ satisfies the URP, the number of local minima of $\mathcal{L}$ in \eqref{eq.cost_ml} without the Kronecker-structured support constraint $\mathcal{C}$ (the classic SBL algorithm) is~\cite{wipf2004sparse}
\begin{equation}
    \mathcal{N}_{\mathsf{SBL}} \leq \binom{N^I}{{\bar{M}}} - \sum_{p=1}^{P} \binom{N^I - D_{p}}{{\bar{M}}- D_{p}} + P\leq \binom{N^I}{{\bar{M}}},
\end{equation}
where $D_{p}$ is $\ell_0$-norm of the $p$th degenerate sparse solution of the SBL cost function, and $P$ is the number of sparse solution. When we impose the Kronecker-structured support constraint, the upper bound of the number of local minima decreases, as discussed~next.

\begin{thm}\label{thm.no_local}
Consider the model in~\eqref{eq.problem_basic} and assume that i) the noise variance $\sigma^2 = 0$, ii) $\bm H$ satisfies the URP, and iii) there exist $P_i$ degenerate sparse solutions $\bm x_{i,p}$, $p=1,2,\ldots,P_i$ such that $\bm y = (\otimes_{i=1}^I \bm H_i)(\otimes_{i=1}^I \bm x_{i,p})$ and 
    %the number of non-zero entries in $\bm x_{i,p} = D_{i,p} < N$.
    $\Vert\bm x_{i,p}\Vert_0 = D_{i,p} < {M_i}$. Then, the number of distinct local minima of the KroSBL cost $\mathcal{L}$ in $\mathcal{C}$, denoted as $\mathcal{N}$, satisfies
    \begin{equation}\label{eq.upper_local_minima}
        \mathcal{N} \leq \prod_{i=1}^I \left(  \binom{N}{{M_i}} -\sum_{p=1}^{P_i} \binom{N - D_{i,p}}{{M_i}- D_{i,p}} + P_i \right)\leq \prod_{i=1}^I\binom{N}{{M_i}}.
    \end{equation}
\end{thm}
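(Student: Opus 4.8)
The plan is to reduce the count of local minima to a combinatorial count of admissible support patterns, show this count factors over the $I$ modes using the Kronecker structure, and apply the single-mode Wipf--Rao bound of \cite{wipf2004sparse} once per mode. (The case $I=1$ of \eqref{eq.upper_local_minima} is exactly that bound, so all the work is in the reduction.)

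\textbf{Step 1: from local minima to support tuples.} By Theorem~\ref{thm.local_minima_sparse}, in the noiseless case every local minimum $\{\bm\gamma_i\}$ of $\mathcal{L}$ on $\mathcal{C}$ has $\|\bm\gamma_i\|_0\le M_i$, hence a support tuple $\mathcal{S}=(S_1,\dots,S_I)$ with $S_i=\supp(\bm\gamma_i)$, $|S_i|\le M_i$. The normalization $\|\bm\gamma_i\|_2=1$ for $i<I$ in $\mathcal{C}$ removes the Kronecker scaling ambiguity, so $\{\bm\gamma_i\}\mapsto\bm\gamma=\otimes_{i=1}^I\bm\gamma_i$ is injective and $\supp(\bm\gamma)=S_1\times\cdots\times S_I$ recovers $\mathcal{S}$. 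I would then show (a) at most one local minimum corresponds to each $\mathcal{S}$, so $\mathcal N$ is at most the number of \emph{admissible} tuples (those arising from some local minimum), and (b) for every admissible tuple each $S_i$ lies in a mode-$i$ family of size at most $\binom{N}{M_i}-\sum_{p=1}^{P_i}\binom{N-D_{i,p}}{M_i-D_{i,p}}+P_i$. Then $\mathcal N\le\prod_i\#\{\text{admissible }S_i\}$ gives the first bound in \eqref{eq.upper_local_minima}, and $\sum_p\binom{N-D_{i,p}}{M_i-D_{i,p}}\ge P_i$ (each binomial is $\ge 1$, since the URP hypothesis forces $M_i\le N$) gives the second.

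\textbf{Step 2: uniqueness per tuple.} Restricting $\mathcal{L}$ to $\{\bm\gamma:\supp(\bm\gamma)\subseteq S_1\times\cdots\times S_I\}$ leaves the Kronecker sub-dictionary $\otimes_i[\bm H_i]_{S_i}$. URP of $\bm H=\otimes_i\bm H_i$ forces URP of every $\bm H_i$ (a rank-deficient $M_i$-column block of some $\bm H_i$ tensored with $M_k$-column blocks of the others would give a rank-deficient $\bar{M}$-column block of $\bm H$), so $\otimes_i[\bm H_i]_{S_i}$ has full column rank; following the noiseless-limit development of \cite{wipf2004sparse}, $\mathcal{L}$ restricted to this face has a unique minimizer, whose nonzero block is $|\bm x_{\mathcal{S}}|^2$ with $\bm x_{\mathcal{S}}$ the unique solution of $(\otimes_i[\bm H_i]_{S_i})\bm x_{\mathcal{S}}=\bm y$. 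Hence at most one local minimum per $\mathcal{S}$, giving (a).

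\textbf{Step 3: mode-wise decoupling (the main obstacle).} For (b) I exploit that $\bm H\diag(\otimes_i\bm\gamma_i)\bm H^{\mathsf{H}}=\otimes_{i=1}^I\bm G_i$ with $\bm G_i=\bm H_i\diag(\bm\gamma_i)\bm H_i^{\mathsf{H}}$; with $\sigma^2=0$, writing $\bm y=\otimes_i\bm y_i$, $\bm y_i=\bm H_i\bm x_i$, and using $|\otimes_i\bm G_i|=\prod_i|\bm G_i|^{\prod_{k\ne i}M_k}$ and $(\otimes_i\bm G_i)^{-1}=\otimes_i\bm G_i^{-1}$,
\begin{equation*}
\mathcal{L}(\{\bm\gamma_i\})=\sum_{i=1}^{I}\Big(\textstyle\prod_{k\ne i}M_k\Big)\log|\bm G_i|+\prod_{i=1}^{I}\bm y_i^{\mathsf{H}}\bm G_i^{-1}\bm y_i .
\end{equation*}
At a joint local minimum, $\mathcal{L}$ restricted to the block $\bm\gamma_j$ (other blocks fixed) is locally minimized; but from the display this restriction equals $(\prod_{k\ne j}M_k)\log|\bm G_j|$ plus a positive multiple of $\bm y_j^{\mathsf{H}}\bm G_j^{-1}\bm y_j$, i.e.\ up to an overall positive scaling it is a single-mode SBL cost for the dictionary $\bm H_j$ with data $c_j\bm y_j$ for some scalar $c_j>0$. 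Since positively rescaling the data leaves the set of supports of SBL local minima unchanged, $S_j$ must be an SBL-admissible support for $(\bm H_j,\bm y_j)$, and \cite{wipf2004sparse} bounds the number of such supports by $\binom{N}{M_j}-\sum_{p=1}^{P_j}\binom{N-D_{j,p}}{M_j-D_{j,p}}+P_j$. Combining with Step 2 and multiplying over $j$ yields \eqref{eq.upper_local_minima}. The delicate point, and what I expect to be the main obstacle, is making this decoupling rigorous: the factored form of $\mathcal{L}$ is valid only at $\sigma^2=0$, where $\otimes_i\bm G_i$ is singular on the sparse faces, so it has to be handled through the same $\sigma^2\to0^+$ limiting argument used for classical SBL, and the claim that positive rescaling of the data does not change admissible supports must be checked against the exact characterization of SBL local minima in \cite{wipf2004sparse}.
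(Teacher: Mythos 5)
Your proposal is correct and follows essentially the same route as the paper's proof: the URP-inheritance argument for each $\bm H_i$, the Kronecker factorization of the noiseless cost into per-mode log-determinants plus a product of quadratic forms, uniqueness of the minimizer per support tuple, and a per-mode count with the Wipf--Rao degenerate-solution correction. The only cosmetic difference is that the paper enumerates all size-$M_i$ support combinations and exhibits the closed-form unique minimizer for each before merging degenerate ones, whereas you obtain the per-mode count by reducing each block restriction to a single-mode SBL problem with positively rescaled data and citing the single-mode bound directly.
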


\begin{proof}
    See Appendix \ref{appe.upper_bound}.
\end{proof}

%\begin{figure*}[t!]
%\centering
%\includegraphics[width=0.9\textwidth]{}%
%\caption{NMSE performance regarding different SNRs or numbers of measurements. (a) Fix $K=4$. (b) Fix $K=6$. (c) Fix $\text{SNR} = 25\text{dB}$.}
%\label{fig.nmse_snr}
%\end{figure*}
%
%\begin{figure*}[htpb]
%\centering
%\includegraphics[width=0.75\textwidth]{}%
%\caption{SRR performance regarding different SNRs or numbers of measurements. (a) Fix $K=4$. (b) Fix $\text{SNR} = 25\text{dB}$.}
%\label{fig.srr_snr}
%\end{figure*}

Theorem \ref{thm.no_local} and the result for classical SBL uses the same assumption, $\bm H$ satisfies URP, to derive an upper bound for the distinct number of local minima. % when $\bm H$ satisfies URP. %Under the same assumption, the number of local minima of $\mathcal{L}$ of \eqref{} in the classic SBL algorithm is \cite{wipf2004sparse}
% \begin{equation}
%     \mathcal{N}_{\mathsf{SBL}} \leq \begin{pmatrix} M^I\\ N^I \end{pmatrix} -\sum_{p=1}^{P} \begin{pmatrix} M^I - D_{p}\\ N^I- D_{p} \end{pmatrix} + P,
% \end{equation}
% where $D_{p}$ is the number of non-zero entries of degenerate sparse solution of the SBL cost function. 
However, our result shows that $\mathcal{N}$ is dominated by $\prod_{i=1}^I\binom{N}{{M_i}}$ while $\mathcal{N}_{\mathsf{SBL}}$ is dominated by $\binom{N^I}{{\bar{M}}}>\prod_{i=1}^I\binom{N}{{M_i}}$. Thus, incorporating the Kronecker structure can greatly diminish the solution space, explaining the better reconstruction performance of the KroSBL.
% As established in \cite{wipf2004sparse}, all local minima of $\mathcal{L}$ are sparse. However, the fact that $\bm \gamma$ is sparse does not necessarily indicate that each $\bm \gamma_i$ is sparse. For example, one of $\bm \gamma_i$ could be dense while the rest only possess very few non-zeros. But our discussion clearly shows that incorporating the structural information eliminates such local minima. Then intuitively, the number of local minima is greatly reduced.

\section{Performance Evaluation}
\label{sec:numsimu}

We conduct numerical experiments to investigate the efficacy of our algorithms for sparse vector recovery. We evaluate the recovery performance of AM-KroSBL and SVD-KroSBL by comparing them with three benchmarking algorithms: the classic SBL (cSBL) \cite{wipf2004sparse}, KSBL \cite{chang2021sparse}, and KOMP \cite{caiafa2013computing}. %Codes are available here.

We choose $I=3$ i.e., $\bm H = \otimes_{i=1}^3\bm H_i$ and the sparse vector $\bm x = \otimes_{i=1}^3\bm x_i$ where $\bm x_i\in\mathbb{R}^{15\times 1}$. The entries of $\bm x_i\in\mathbb{R}^{15\times 1}$, $\bm H_1 \in \mathbb{R}^{M\times 15}$, $\bm H_2 \in \mathbb{R}^{12\times 15}$, and $\bm H_3 \in \mathbb{R}^{15\times 15}$ are drawn from $\mathcal{N}(0,1)$, where $M = \{6,8,10,12,14\}$, called measurement level, controls the total number of measurements $\bar{M}=180M$ (or the under-sampling ratio $\bar{M}/N^I=180M/3375)$. The sparsity level for each $\bm x_i$ is $S=\{2,3,4,5,6\}$, and the support is generated uniformly at random. The zero-mean additive white Gaussian measurement noise level is decided by the signal-to-noise ratio $\text{SNR~(dB)} = 10\log_{10}\mathbb{E}\{\|\bm H\bm x\|_2^2/\|\bm n\|_2^2\}$ and takes values from $\{5,10,15,20,25,30\}$.

We use three metrics for the assessment: NMSE, support recovery rate (SRR), and run time. Here, we define 
\begin{align}
    \text{NMSE} &= \mathbb{E}\left\{\frac{\|\bm x-\hat{\bm x}\|_2}{\|\bm x\|_2}\right\}\\
    \text{SRR} &= \frac{|\supp(\hat{\bm x}) \cap \supp(\bm x) |}{|\supp(\hat{\bm x}) \cup \supp(\bm x) |},
\end{align}
where ${\bm x}$ is the ground truth and $\hat{\bm x}$ is the reconstructed vector. 
%The second is the support recovery rate $\text{SRR} = |\supp(\hat{\bm x}) \cap \supp(\bm x) |/\left(|\supp(\hat{\bm x}) - \supp(\bm x) | + |\supp(\bm x)|\right)$ \cite{wang2018alternative}, with $\supp(\cdot)$ representing the support of the argument vector. The third is computational time. We assess our algorithms against benchmarks with different values of SNR, $M$, and $K$. 
We set the maximum EM iteration to 150 EM for the SBL-based methods. We also implement the complexity reduction technique described in \cite{he2022structure} for AM- and SVD-KroSBL and the technique in \cite{chang2021sparse} for KSBL. The pruning is also included to prune small hyperparameters for the all SBL-based methods \cite{zhang2011clarify}. We average over a hundred independent realizations. Our observations from the results, summarized in Figs.~\ref{fig_con} to \ref{fig_ce} and Tables~\ref{tab:sparse_recovery} and \ref{tab.time_pc}, are as follows\footnote{
Our code is available \href{https://github.com/YanbinHe/JournalKroSBL.git}{here}. Also, see \href{https://github.com/Dingqinliu/Encryption_Matlab/blob/3fd2edaadf10b512810933465361cd2ee1af1337/encryption_based_on_CS_chaotic/Tensor_CS/Fig_8/tensor_OMPND.m}{GitHub link} for the KOMP code.}.
% In SVD-KroSBL and the classic SBL, we eliminate the hyperparameters that are smaller than $10^{-3}$ of the largest one while in AM-KroSBL this value is $10^{-2}$. We also eliminate the hyperparameters that are smaller than $10^{-4}$. The thresholds are chosen empirically. 
% We turn to the Oracle Least Squares (OLS) as the baseline, where $\supp(\bm x)$ is perfectly known.
%\vspace{-0.3cm}
\subsection{Convergence illustration}

\begin{figure}[t!]
    \centering
  \subcaptionbox{NMSE without pruning step\label{fig_con1}}{\includegraphics[width=0.3\textwidth]{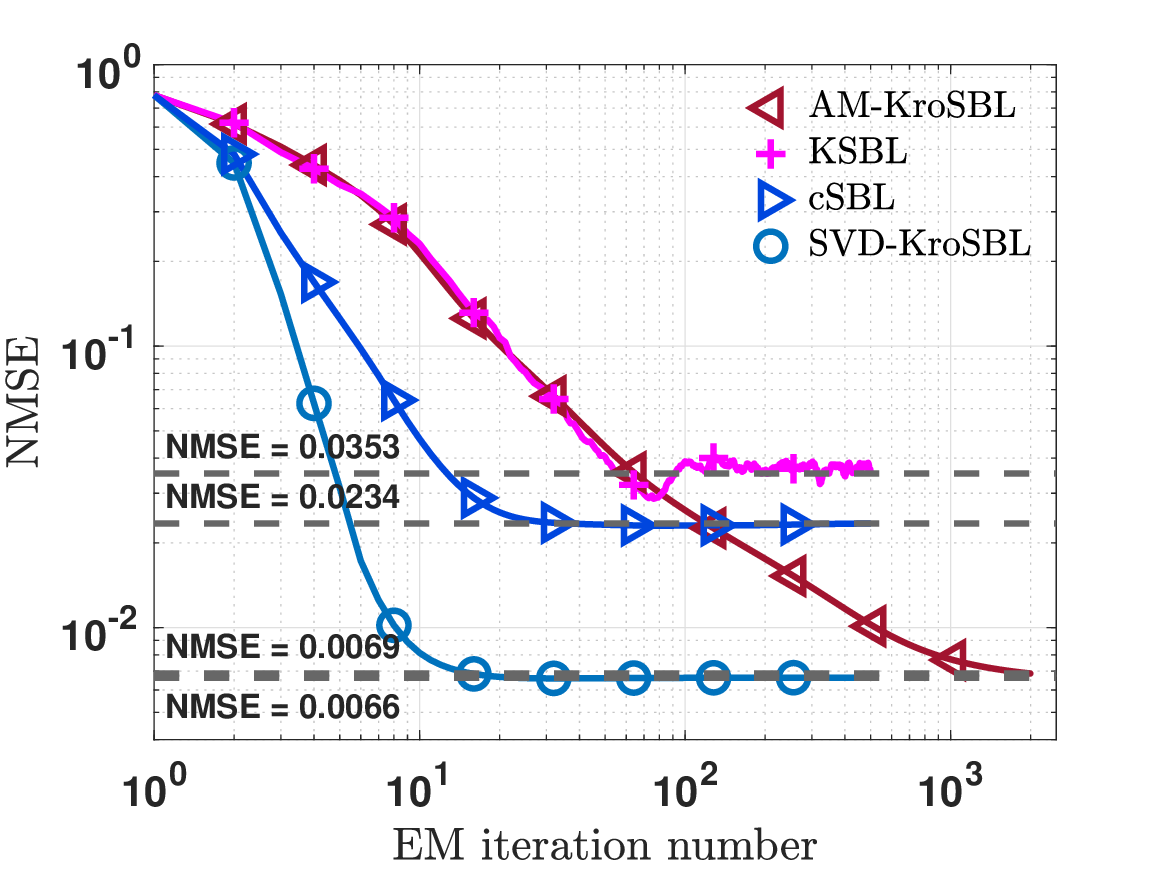}}
  \subcaptionbox{NMSE with pruning step\label{fig_con2}}
  {\includegraphics[width=0.3\textwidth]{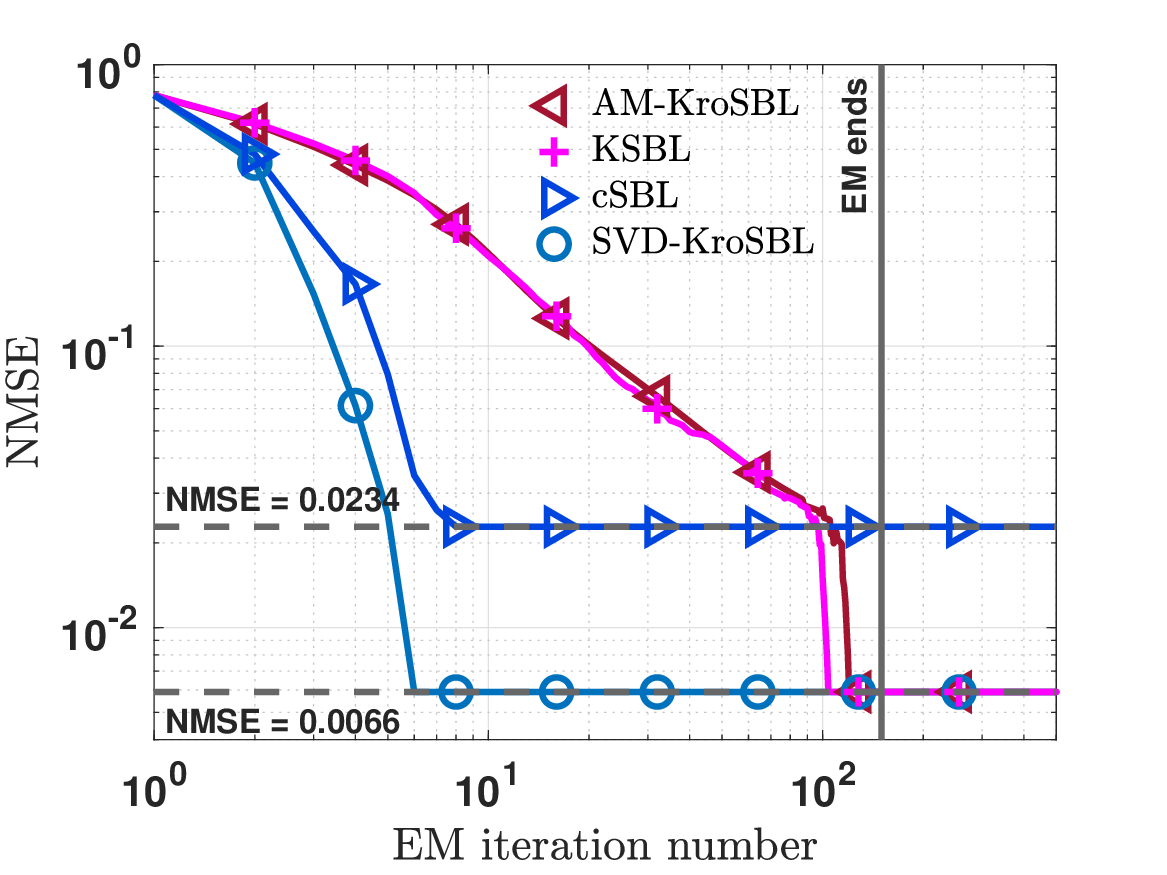}}
  \caption{Convergence plots of AM-, SVD-KroSBL, KSBL, and cSBL with measurement level $M = 14$, sparsity $S=4$, and $\text{SNR} = 30$dB}
    \label{fig_con}
\end{figure}

Fig. \ref{fig_con} demonstrates the convergence property of cSBL, KSBL, AM-, and SVD-KroSBL \emph{with and without pruning}. We include the convergence without pruning because our theoretical analysis does not account for pruning. First, we look at Fig. \ref{fig_con}. We note that our AM-KroSBL and SVD-KroSBL lead to lower NMSE because they incorporate the Kronecker structure, avoiding unwanted local minimum compared with cSBL. However, KSBL, despite incorporating the same prior knowledge, results in a higher NMSE than AM- and SVD-KroSBL. Interestingly, KSBL initially has a similar NMSE as AM-KroSBL, as both schemes solve~\eqref{eq.mstep} using the gradient-based iterative method. However, KSBL is trapped in a local minimum after a few iterations because it uses a loose approximation and does not constrain its hyperparameters $\{\bm \gamma_i\}$ in $\mathcal{C}$, i.e., does not normalize the hyperparameters. Our experiments show that the entries of some $\bm \gamma_i$'s of KSBL upon convergence become very small while others become very large. Although the Kronecker product of $\{\bm \gamma_i\}$ of both algorithms are the same initially, small $\bm \gamma_i$ values unstabilize KSBL. This numerical instability reduces the estimation accuracy and leads to a local minimum with high NMSE. 
%In fact, only cSBL, AM-, and SVD-KroSBL are able to converge in this case. KSBL outputs \texttt{NaN} and terminates. 
To mitigate this issue, pruning small components is useful, helping KSBL to converge numerically, as shown in Fig.~\ref{fig_con2}. Pruning also accelerates other schemes. However, NMSE performance is sensitive to the pruning threshold, which is only empirically optimized. A larger threshold leads to faster convergence but is at the risk of eliminating true components.
%\vspace{-0.3cm}
\subsection{Comparison with the state-of-the-art}

\begin{figure*}[t!]
    \centering
  \subcaptionbox{$S=3$ and $M=8$\label{fig1.a}}{\includegraphics[width=0.33\textwidth]{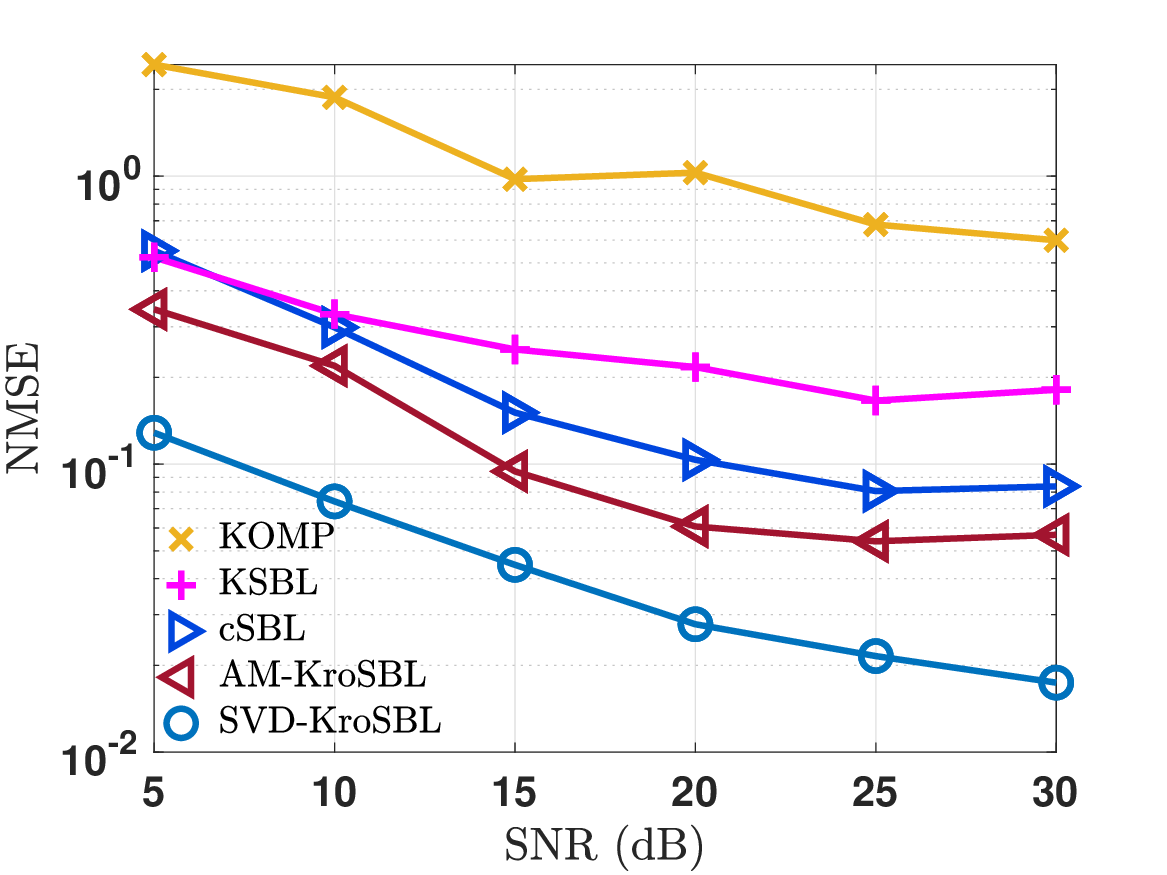}}\hspace{0em}%
  \subcaptionbox{$S=3$ and $\text{SNR} = 25\text{dB}$\label{fig1.b}}{\includegraphics[width=0.33\textwidth]{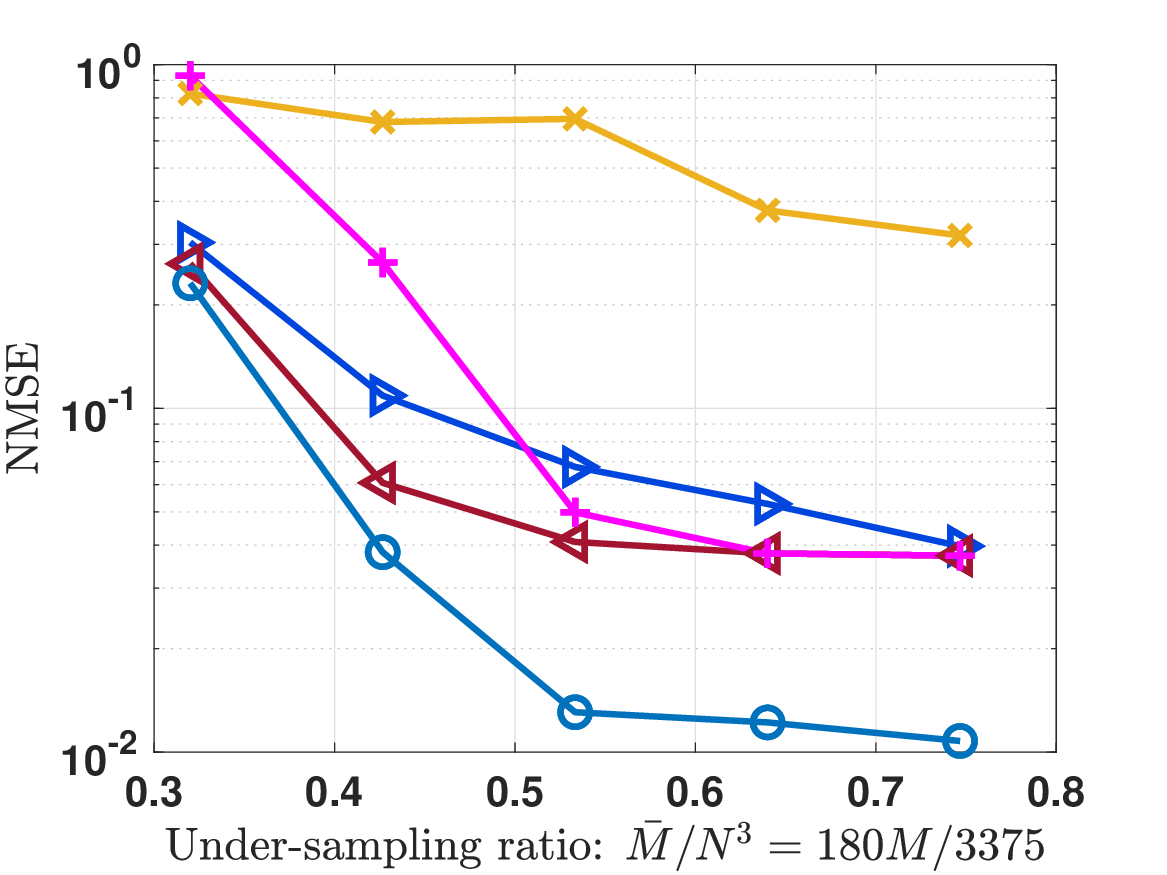}}\hspace{0em}
  \subcaptionbox{$M=8$ and $\text{SNR} = 25\text{dB}$\label{fig1.c}}{\includegraphics[width=0.33\textwidth]{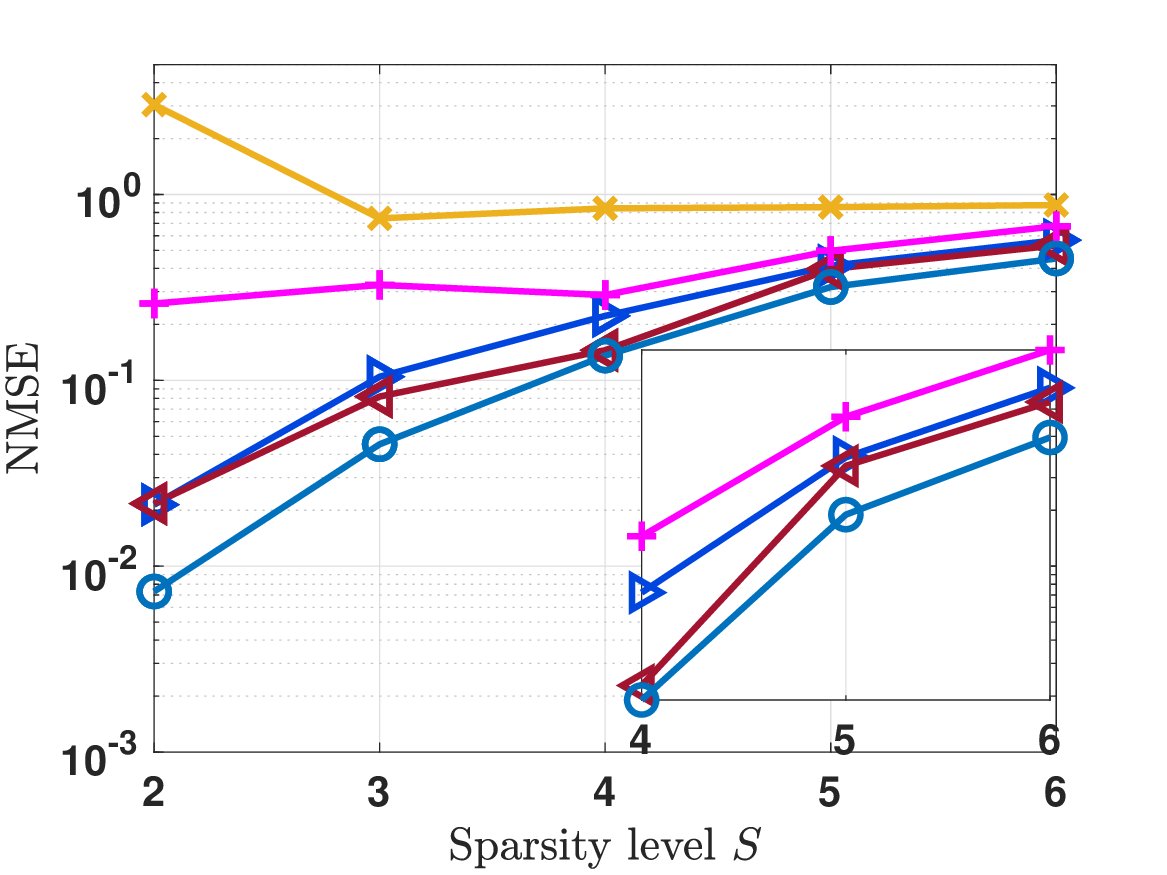}}
  \subcaptionbox{$S=3$ and $M = 8$\label{fig2.a}}{\includegraphics[width=0.33\textwidth]{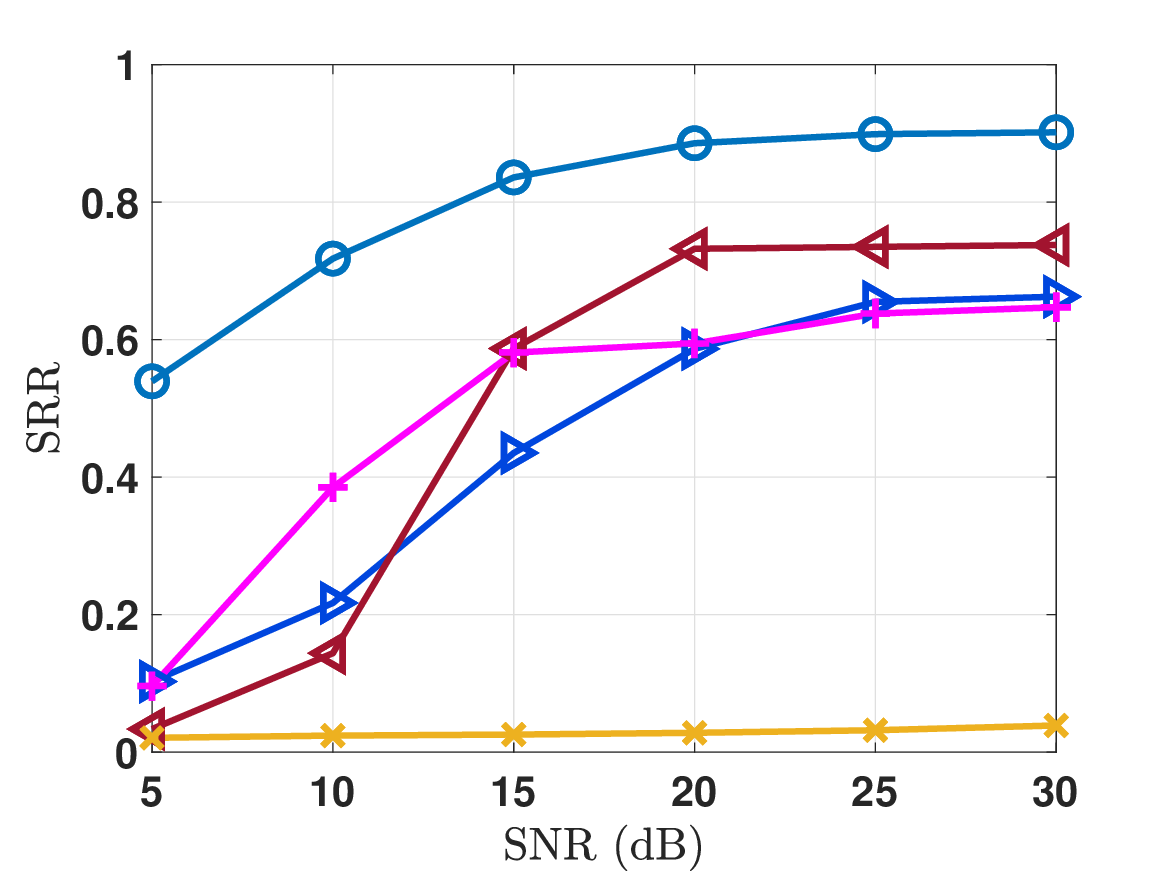}}\hspace{0em}%
  \subcaptionbox{$S=3$ and $\text{SNR}=25\text{dB}$\label{fig2.b}}{\includegraphics[width=0.33\textwidth]{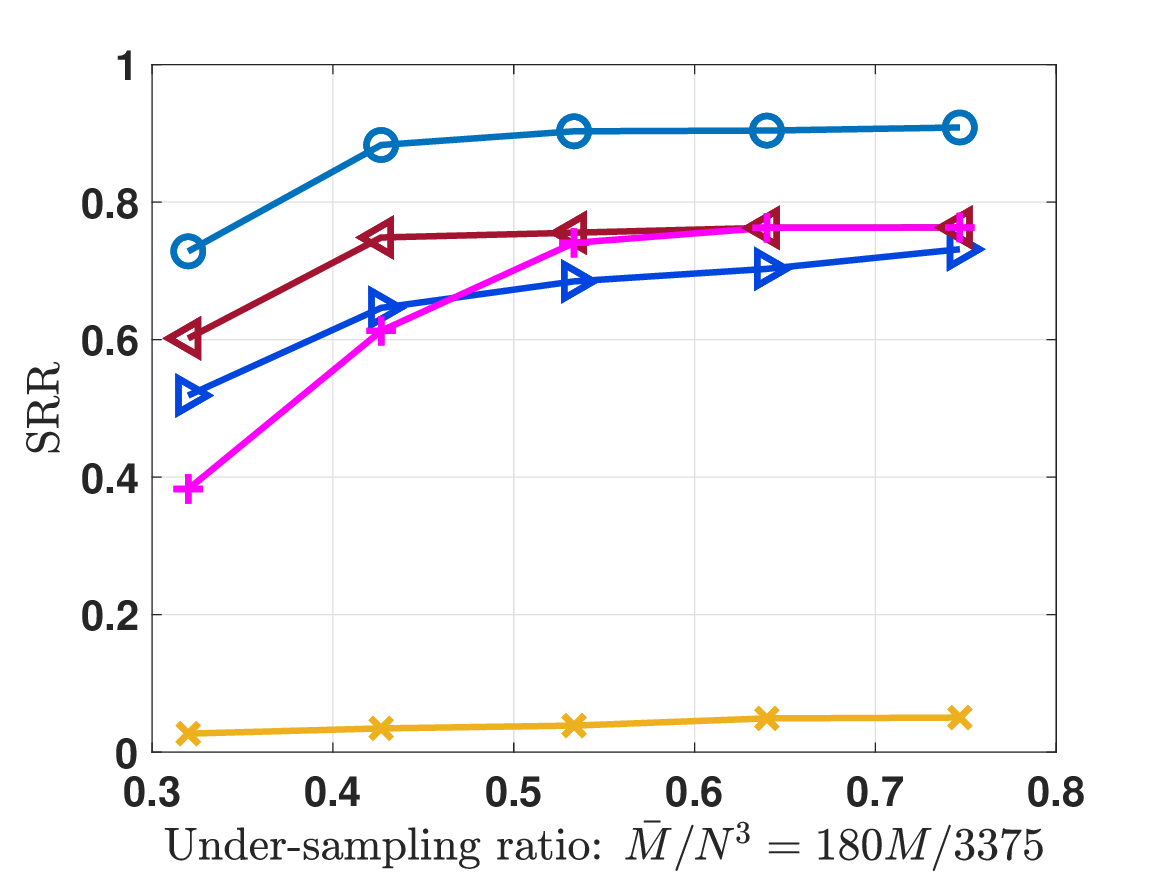}}\hspace{0em}
  \subcaptionbox{$M=8$ and $\text{SNR} =25\text{dB}$\label{fig2.c}}{\includegraphics[width=0.33\textwidth]{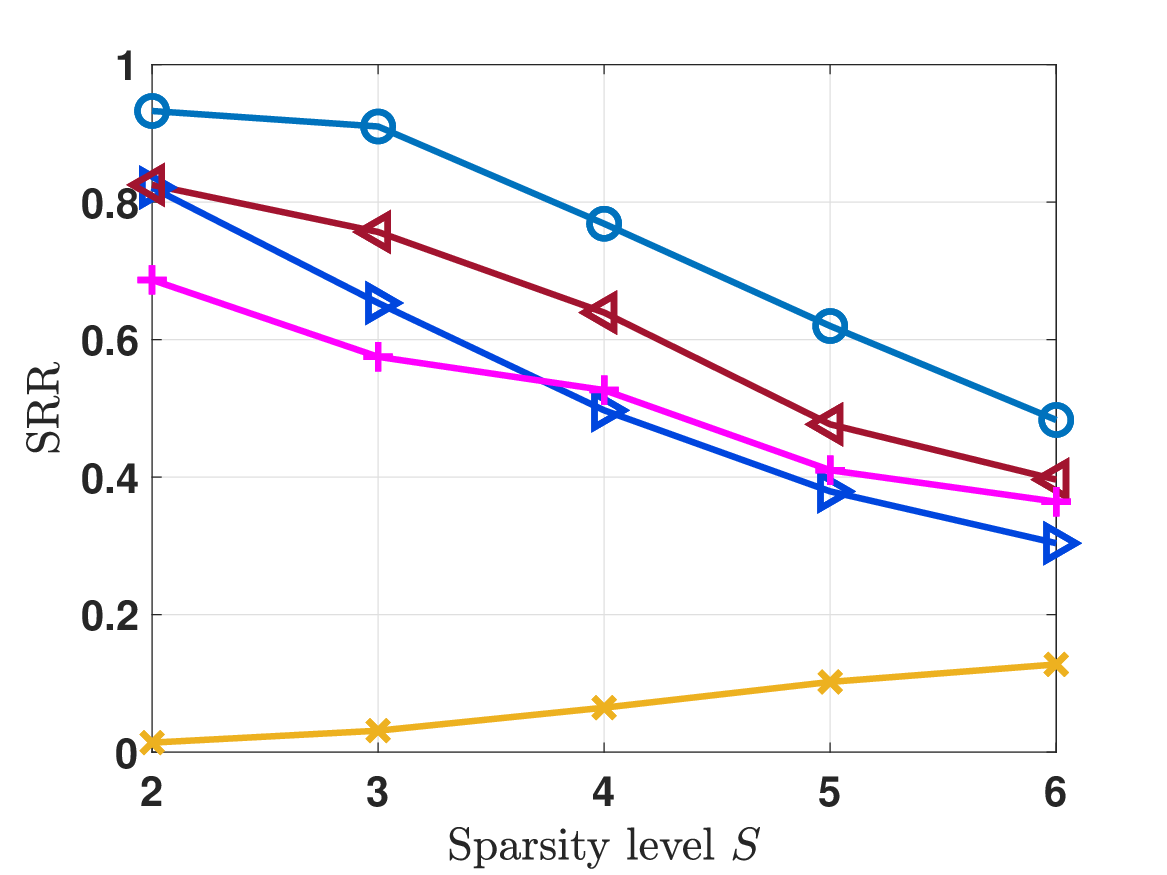}}
    \caption{NMSE and SRR performance of different algorithms as functions of SNR, under-sampling ratio, and sparsity level $S$. Unless otherwise mentioned in the plots, measurement level $M=8$, sparsity level $S=3$, and $\text{SNR} = 25~\text{dB}$.}
    \label{fig.snr}
\end{figure*}

%We assess the NMSE and SRR as functions of SNRs, $M$, and $K$ values. Fig.~\ref{fig.snr} shows that both AM- and SVD-KroSBL outperform the competing schemes, i.e., cSBL, KSBL, and KOMP, in terms of NMSE performance. In Fig. \ref{fig1.a}, we change SNRs while fixing the sparsity level $K=3$ and measurement level $M = 8$. 
Figs. \ref{fig1.a} and \ref{fig1.b} show that the performance of all the algorithms improves with SNR and the number of measurements $\bar{M}=180M$ (or equivalently under-sampling ratio $\bar{M}/N^I=180M/3375)$, as expected. Similarly, Fig. \ref{fig1.c} shows that increasing the sparsity level hinders the reconstruction of all schemes, as the number of measurements is unchanged. Further, our SVD-KroSBL outperforms all the other algorithms, both in terms of NMSE and SRR. The next best-performing algorithm is AM-KroSBL in most cases. The exception is in Fig. \ref{fig2.a}, where KSBL and cSBL have higher SRR than the AM-based when the SNR is low. This behavior is because the pruning step of the AM-based fails to eliminate small components outside the true support. However, the nonzero entries are recovered accurately, indicated by a low NMSE. Finally, KSBL is expected to perform better than cSBL. However, this is not true in the low-measurement regime due to its approximations, seen in  Fig. \ref{fig1.b}. %, where we fix $\text{SNR}=25\text{dB}$ and observe KSBL performs worse than cSBL in low measurement scenario. 
%From Fig. \ref{fig1.c}, we notice that increasing the sparsity level hinders the reconstruction of all schemes, as the number of measurements are unchanged. 
%But both the SVD- and AM-based still outperform the other schemes. Fig. \ref{fig2.a}, Fig. \ref{fig2.b}, and Fig. \ref{fig2.c} depict the SRR at different under-sampling ratios, i.e., $\bar{M}/N^I=180M/3375$. We can observe that SVD-based achieves the best recovery rate, seconded by the AM-based in most cases. 

%\subsection{Computation time}
Table \ref{tab:sparse_recovery} compares the run time of different algorithms. KOMP has the lowest run time due to its greedy nature but suffers from high NMSE and low SRR, as shown in Fig.~\ref{fig.snr}. SVD-KroSBL has the lowest run time among the remaining algorithms. It is faster as it is non-iterative and takes fewer EM iterations to converge (see Fig.~\ref{fig_con}), and also uses the complexity reduction technique \cite{he2022structure} in the E-step.% and pruning in the M-step step. 
We also observe that AM-KroSBL outperforms the KSBL and has a similar run time as cSBL. Although cSBL takes fewer EM iterations, each EM iteration of AM-KroSBL is faster than cSBL due to the complexity reduction technique \cite{he2022structure}. 
%the faster convergence of cSBL (see Fig. \ref{fig_con}) and the iterative M-step of AM-KroSBL. This behavior is because we implement the complexity reduction technique \cite{he2022structure}. Although the inference takes AM-KroSBL more EM iterations, each EM iteration is faster than cSBL. 
Further, unlike KSBL and cSBL, AM-KroSBL has an iterative M-step, but the inner loop converges quickly, making it faster than KSBL. %The complexity reduction technique \cite{he2022structure} also greatly reduces the computational burden in the E-step compared with the E-step of KSBL, explaining our observations.

\begin{table}[]
\caption{Computation time (in seconds) comparison with $\text{SNR}=25\text{dB}$ and measurement level $M=8$ (underlined text for \underline{the best} and boldface for \textbf{the second best})} %{color{red}Setting?}}
\scriptsize
\centering
\begin{tabular}{llllll}
\hline
%\multicolumn{6}{c}{Sparsity recovery: $\text{SNR}=25\text{dB}$, $M=8$}                                                                                                                                                                                      \\ \hline 
\multicolumn{1}{l|}{Sparsity level}                     & \multicolumn{1}{l|}{$S=2$}                       & \multicolumn{1}{l|}{$S=3$}                       & \multicolumn{1}{l|}{$S=4$}                       & \multicolumn{1}{l|}{$S=5$}                       & $S=6$                       \\ \hline
\hline
\multicolumn{1}{l|}{\multirow{1}{*}{AM-KroSBL}} & \multicolumn{1}{l|}{\multirow{1}{*}{14.068}} & \multicolumn{1}{l|}{\multirow{1}{*}{12.863}} & \multicolumn{1}{l|}{\multirow{1}{*}{11.215}} & \multicolumn{1}{l|}{\multirow{1}{*}{10.180}} & \multirow{1}{*}{10.070} \\ \hline
\multicolumn{1}{l|}{SVD-KroSBL}                 & \multicolumn{1}{l|}{\textbf{3.520}}          & \multicolumn{1}{l|}{\textbf{2.945}}          & \multicolumn{1}{l|}{\textbf{2.596}}          & \multicolumn{1}{l|}{\textbf{2.651}}          & \textbf{3.008}          \\ \hline
\multicolumn{1}{l|}{KSBL}                     & \multicolumn{1}{l|}{23.700}                  & \multicolumn{1}{l|}{22.858}                  & \multicolumn{1}{l|}{20.537}                  & \multicolumn{1}{l|}{19.413}                  & 18.632                  \\ \hline
\multicolumn{1}{l|}{cSBL}                       & \multicolumn{1}{l|}{12.837}                  & \multicolumn{1}{l|}{11.449}                  & \multicolumn{1}{l|}{10.510}                  & \multicolumn{1}{l|}{11.541}                  & 12.723                  \\ \hline
\multicolumn{1}{l|}{KOMP}                       & \multicolumn{1}{l|}{{\ul \textit{1.170}}}    & \multicolumn{1}{l|}{{\ul \textit{1.257}}}    & \multicolumn{1}{l|}{{\ul \textit{1.170}}}    & \multicolumn{1}{l|}{{\ul \textit{1.179}}}    & {\ul \textit{1.250}}    \\ \hline

\end{tabular}
\label{tab:sparse_recovery}
\vspace{0.2cm}
\end{table}

\subsection{Comparison of SVD-KroSBL and AM-KroSBL}\label{sec.svd_best}

We observe from Fig.\ref{fig.snr} that the SVD-KroSBL algorithm outperforms AM-KroSBL and cSBL with sufficient measurements regardless of its approximations. Here, we give the intuition behind the better performance of the SVD-KroSBL. In KroSBL, the M-step solves~\eqref{eq.mstep}. SVD-KroSBL approximates~\eqref{eq.mstep} by first identifying~\eqref{eq.mstep_svd} and then solving~\eqref{prob.bidecom}. Suppose $\bm d^{(r)} = \otimes_{i=1}^I \bm d_i^{(r)}$ for some nonnegative vectors $\bm d_i^{(r)}\in\mathbb{R}^{N\times 1}$. Then the optimal solution to \eqref{eq.mstep} is attained at $\check{\bm\gamma}_i=\bm d_i^{(r)}/\Vert\bm d_i^{(r)}\Vert$ for $i=1,2,\ldots,I-1$, and $\check{\bm\gamma}_I=\prod_{i=1}^{I-1}\Vert\bm d_i^{(r)}\Vert\bm d_I^{(r)}$. Since \eqref{eq.mstep_svd} attains the optimal value at $\check{\bm\gamma}=\bm d^{(r)}$ that satisfies the constraints of \eqref{eq.mstep}, $\check{\bm\gamma}=\bm d^{(r)}$ is also optimal for \eqref{eq.mstep}. Hence, the SVD-based is exact when $\bm d^{(r)}$ is Kronecker-structured. Further, we empirically verified the rank of rearranged/devectorized $\bm d^{(r)}$ when $I = 3$, where $\text{rank} = 1$ indicates $\bm d^{(r)} = \otimes_{i=1}^3 \bm d_i^{(r)}$. Without noise, the SVD-based method drives $\bm d^{(r)}$ to this structure within a few EM iterations and retains this structure, as shown in Fig. \ref{fig.rank}. Therefore, after a few EM iterations, the SVD-based method solves the M-step exactly and outperforms the iterative first-order optimization in the AM-KroSBL algorithm. Furthermore, Fig. \ref{fig.rank} also indicates that SVD-based converges to a Kronecker-structured $\bm d^{(r)}$ faster than AM-KroSBL and cSBL. Thus, SVD can be viewed as a stronger imposement of the Kronecker structure that accelerates EM compared with the other algorithms. %Due to this refinement and acceleration, SVD-based performs better than AM-based and cSBL given the limited number of iterations. 

\begin{figure}[t!]
\centering
\includegraphics[width=0.4\textwidth]{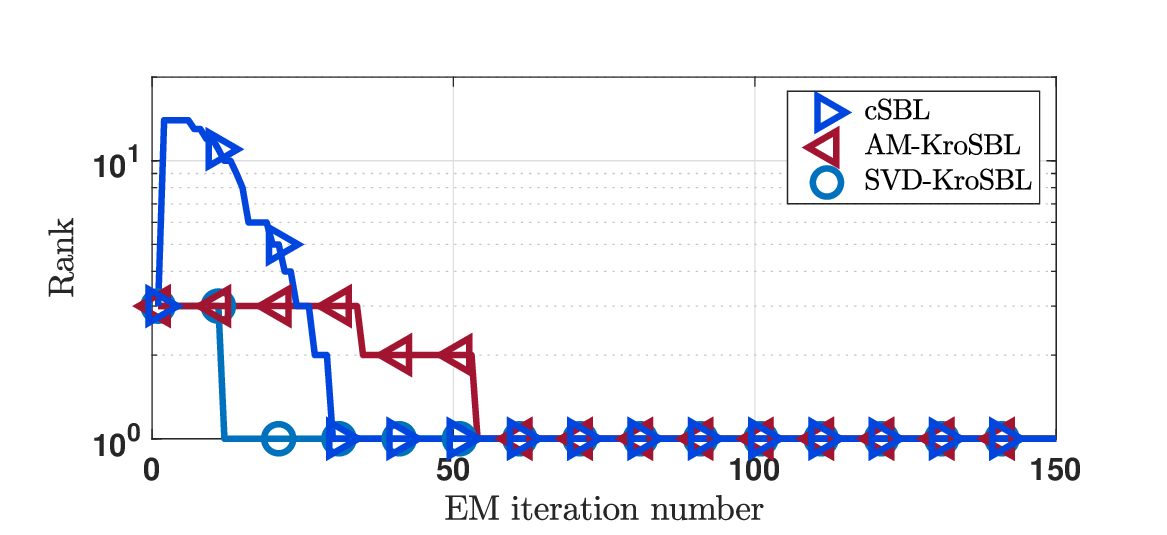}
\caption{The rank of reorganized $\bm d^{(r)}$ generated by different schemes in the noiseless setting with measurement level $M = 14$ and sparsity level $S=4$.}
\label{fig.rank}
\end{figure}

% \begin{figure*}[tp!]
% \begin{minipage}[t]{.33\linewidth}
%     \centering
%     \includegraphics[width=\textwidth]{Results/rank.eps}
%   \end{minipage}
%   \hfill
%   \begin{minipage}[t]{.615\linewidth}
%   \includegraphics[width=0.5\textwidth]{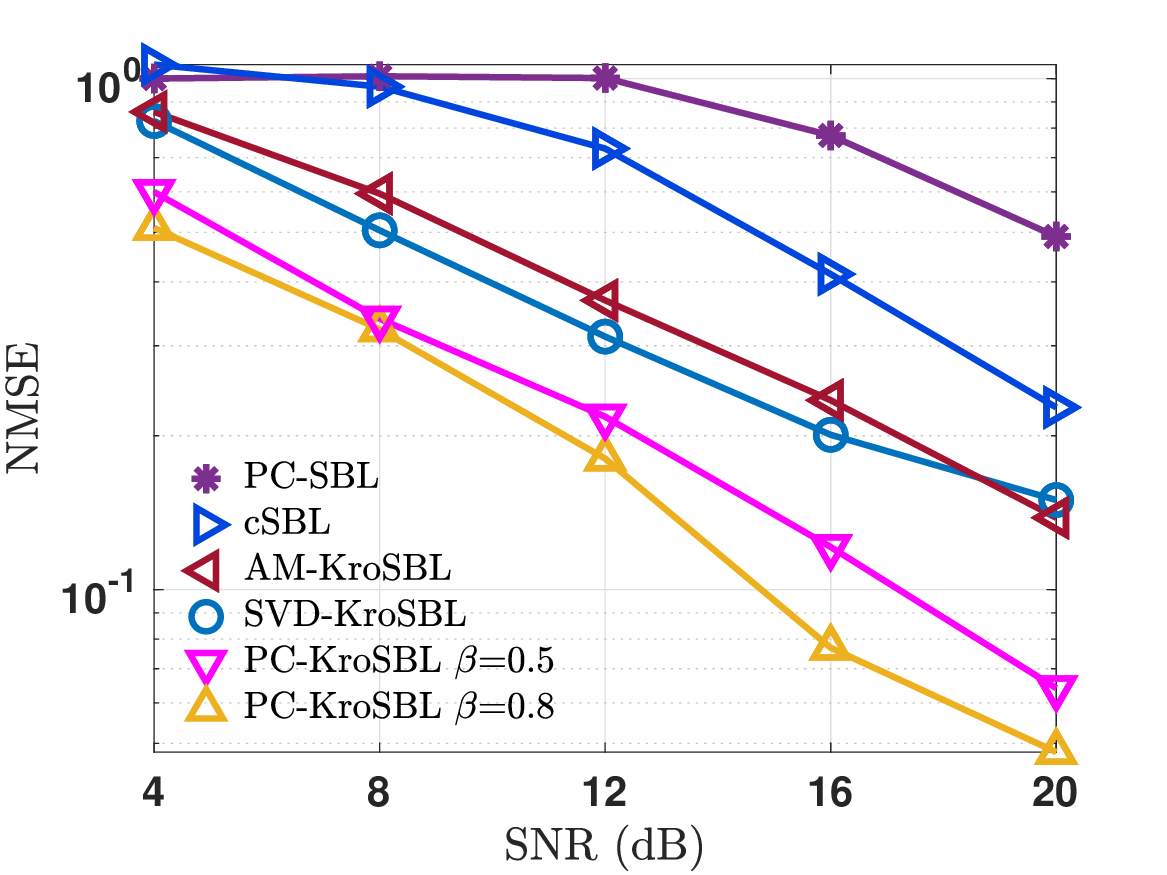}
%   \includegraphics[width=0.5\textwidth]{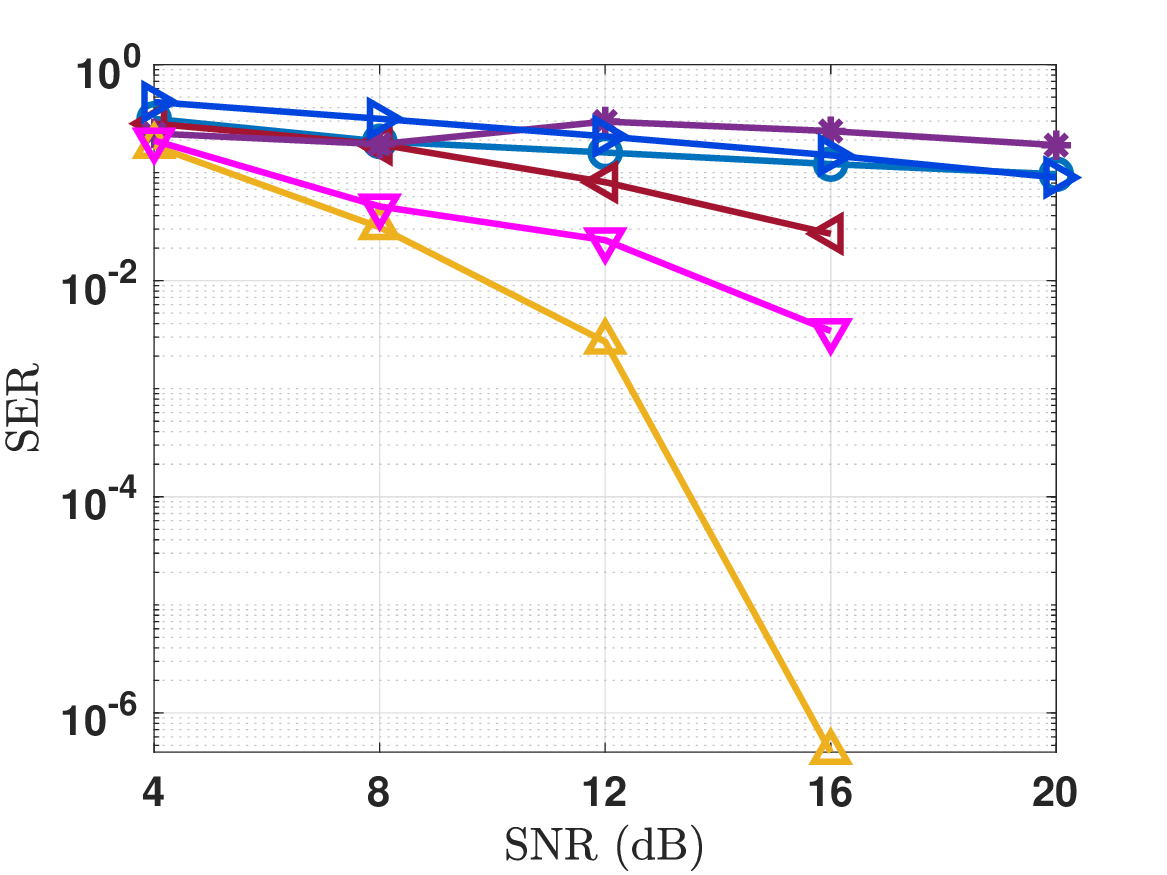}
%   \end{minipage}
%   \medskip

%   \begin{minipage}[t]{.33\linewidth}
%     \caption{Rank of reorganized $\bm d^{(r)}$ in the noiseless case with $M = 14$ and sparsity $S=4$.}\label{fig.rank}
%   \end{minipage}\hfill
%   \begin{minipage}[t]{.615\linewidth}
%     \caption{NMSE of IRS-aided channel estimation and SER of different algorithms as functions of SNR}
%     \label{fig_ce}
%   \end{minipage}
% \end{figure*}
\subsection{IRS-aided channel estimation}
In this subsection, we present the observation from the results obtained when our PC-KroSBL is applied to the IRS-aided MIMO channel estimation, as discussed in Sec.~\ref{sec:channelesti}. We choose the number of BS antennas $R=16$, the number of MS antennas $T=6$, and the number of IRS elements $L=256$. Each entry of the IRS configurations $\{\bm \theta_k\}_{k=1}^{K_{\mathrm{I}}}$ is uniformly drawn from $\{-1/\sqrt{N}$ $,1/\sqrt{N}\}$ with $K_{\mathrm{I}}=6$ and $K_{\mathrm{P}}=3$. The number of grid angles is $N=18$, and all AoDs/AoAs are drawn from the grids. We assume the angle spreads over three grid points, leading to three consecutive non-zeros in sparse vectors $\bm g_{\mathrm{L}}$ and $\bm g_{\mathrm{R}}$. For this, we choose one grid point uniformly at random and add the selected grid and its two neighboring grids to the support. The channel gains $\beta_{{\mathrm{MS}},p}$ and $\beta_{{\mathrm{BS}},p}$ defined in Sec.~\ref{sec:channelesti} are drawn from $\mathcal{CN}(0,1)$~\cite{lin2021channel}. The performance of our PC-KroSBL is compared with PC-SBL \cite{fang2014pattern} and cSBL \cite{wipf2004sparse}. The pattern-coupled coefficient is $0.9$ in the PC-SBL algorithm. We choose $\beta=0.5$ and $\beta=0.8$. The performance metrics are NMSE, symbol error rate (SER), and run time. Here, channel estimation NMSE is given as
\begin{equation*}
\frac{1}{K_{\mathrm{I}}}\sum_{k = 1}^{K_{\mathrm{I}}}\!\!\frac{\|\bm H_\mathrm{BS} \diag (\bm \theta_{k}) \bm H_\mathrm{MS} - \tilde{\bm H}_\mathrm{BS} \diag (\bm \theta_{k}) \tilde{\bm H}_\mathrm{MS}\|_F^2}{\|\bm H_\mathrm{BS} \diag (\bm \theta_{k}) \bm H_\mathrm{MS}\|_F^2},
\end{equation*}
with $\tilde{\bm H}_\mathrm{BS} \diag (\bm \theta_{k}) \tilde{\bm H}_\mathrm{MS}$ being the reconstructed channel. SER is computed over data transmission containing $10^5$ uncoded QPSK symbols. The results are averaged over fifty independent channel realizations.

  \begin{figure}[t!]


    \centering
  \includegraphics[width=0.3\textwidth]{Results/nmse_pc.eps}
  \includegraphics[width=0.3\textwidth]{Results/ser_pc.eps}
  \caption{NMSE of IRS-aided channel estimation and SER of different algorithms as functions of SNR}
  \label{fig_ce}
\end{figure}

\begin{table}[h]
\centering
\caption{
Computation time (in seconds) comparison for the IRS-aided MIMO channel estimation with $K_{\mathrm{I}}=6$ and $K_{\mathrm{P}}=3$ (underlined text for \underline{the best} and boldface for \textbf{the second best})%Computation time. \underline{The best}. \textbf{The second best}.}%{\color{red}setting?
}
\scriptsize
\begin{tabular}{llllll}
\hline
%\multicolumn{6}{c}{IRS-aided MIMO channel estimation: $K_{\mathrm{I}}=6$, $K_{\mathrm{P}}=3$}                                                                                                                                                                                                      \\ \hline
\multicolumn{1}{l|}{SNR}           & \multicolumn{1}{l|}{4dB}                     & \multicolumn{1}{l|}{8dB}                    & \multicolumn{1}{l|}{12dB}                   & \multicolumn{1}{l|}{16dB}                   & 20dB                   \\ \hline \hline
\multicolumn{1}{l|}{PC-KroSBL $\beta = 0.8$} & \multicolumn{1}{l|}{{\ul \textit{32.735}}} & \multicolumn{1}{l|}{15.172}                & \multicolumn{1}{l|}{10.338}               & \multicolumn{1}{l|}{8.559}                & 8.139                \\ \hline
\multicolumn{1}{l|}{PC-KroSBL $\beta = 0.5$} & \multicolumn{1}{l|}{\textbf{36.281}}       & \multicolumn{1}{l|}{\textbf{14.832}}       & \multicolumn{1}{l|}{\textbf{9.391}}       & \multicolumn{1}{l|}{\textbf{7.704}}       & \textbf{7.223}       \\ \hline
\multicolumn{1}{l|}{AM-KroSBL}     & \multicolumn{1}{l|}{54.798}                & \multicolumn{1}{l|}{25.414}                & \multicolumn{1}{l|}{12.103}               & \multicolumn{1}{l|}{10.431}               & 11.312               \\ \hline
\multicolumn{1}{l|}{SVD-KroSBL}    & \multicolumn{1}{l|}{41.954}                & \multicolumn{1}{l|}{{\ul \textit{14.426}}} & \multicolumn{1}{l|}{{\ul \textit{7.472}}} & \multicolumn{1}{l|}{{\ul \textit{6.485}}} & {\ul \textit{6.035}} \\ \hline
\multicolumn{1}{l|}{PC-SBL}        & \multicolumn{1}{l|}{91.534}                & \multicolumn{1}{l|}{86.007}                & \multicolumn{1}{l|}{64.974}               & \multicolumn{1}{l|}{50.231}               & 47.629               \\ \hline
\multicolumn{1}{l|}{cSBL}          & \multicolumn{1}{l|}{84.592}                & \multicolumn{1}{l|}{26.948}                & \multicolumn{1}{l|}{14.638}               & \multicolumn{1}{l|}{11.579}               & 9.250                \\ \hline
\end{tabular}
\label{tab.time_pc}
\vspace{0.2cm}
\end{table}
We focus on the low measurement regime with the under-sampling ratio $\bar{M}/\bar{N} = RK/N^3 \approx 5\%$. Fig. \ref{fig_ce} shows that PC-KroSBL outperforms AM- and SVD-KroSBL that do not account for block sparsity in terms of NMSE and SER. Comparing the performance of PC-KroSBL for different values of $\beta$, we infer that $\beta=0.8$ achieves better accuracy and lower error than $\beta=0.5$ case.
%This behavior is evident since PC-KroSBL $\beta = 0.8$ consistently achieves lower NMSE and SER in all the cases. 
Another interesting observation is that AM-KroSBL has higher NMSE but lower SER than SVD-KroSBL when $\text{SNR} = \{4,8,12,16\}$. This is because SVD-KroSBL fails to identify correct AoAs and AoDs for some channel realizations, resulting in severe channel estimation errors. In such cases, SER is close to one, significantly affecting the overall SER performance of SVD-KroSBL. %For instance, if SVD-KroSBL fails for one channel realization, and all transmissions are consequently wrong, i.e., $\text{SER}=1$. It needs successful estimation for at least another $10^4$ channel realizations to balance this failure to achieve $\text{SER}\!=\!10^{-4}$ after averaging over all realizations. This explains why SVD-KroSBL has lower NMSE but higher SER since SVD-KroSBL fails occasionally.
However, the AM-KroSBL does not completely fail even in the low measurement scenario and works better with increasing SNR. Moreover, PC-KroSBL can consistently and accurately estimate the channel and ensure low SER with limited measurements. 

Further, PC-SBL fails in all SNR scenarios because it does not exploit the Kronecker structure and hence, it needs more measurements than KroSBL algorithms for a graceful recovery. Also, while both sparse vectors $\bm g_{\mathrm{L}}$ and $\bm g_{\mathrm{B}}$ exhibit block sparsity, PC-SBL can only exploit the block sparsity of the $\bm g_{\mathrm{L}}\otimes\bm g_{\mathrm{B}}$. In contrast, PC-KroSBL can utilize the block sparsity of both $\bm g_{\mathrm{L}}$ and $\bm g_{\mathrm{B}}$, which is an added advantage of our algorithm. Finally, unlike PC-SBL, the cSBL algorithm involves no approximation, because of which it performs better than PC-SBL in the low measurement regime. However, our KroSBL outperforms cSBL, in which no prior knowledge is incorporated.
% {\color{red} what about cSBL? }
 
Table \ref{tab.time_pc} includes the computation time of different schemes. PC-KroSBL has a comparable run time to SVD-KroSBL and is faster than PC-SBL and cSBL despite its iterative M-step (inner loop). This is due to the complexity reduction technique \cite{he2022structure} and the fast convergence of the inner loop. 

% \begin{figure}[t!]
%     \centering
%   \subcaptionbox{NMSE as a function of SNR.\label{fig1.a_pc}}{\includegraphics[width=0.24\textwidth]{Results/nmse_pc.eps}}
%   \subcaptionbox{SER as a function of SNR.\label{fig1.b_pc}}
%   {\includegraphics[width=0.24\textwidth]{Results/ser_pc.eps}}
%   \caption{Channel estimation results.}
%   \label{fig_ce}
% \end{figure}

\section{Conclusion}

In this paper, we solved the Kronecker-structured linear inversion problem using the SBL framework with the EM procedure. To encourage the Kronecker-supported sparse vector, we designed a prior distribution parameterized by a Kronecker-structured hyperparameter vector and derived two Bayesian algorithms: an iterative AM-based and non-iterative SVD-based algorithm. The AM-based algorithm is theoretically guaranteed to converge to the stationary point of the SBL cost function, while the SVD-based algorithm is faster and has better reconstruction accuracy. We applied our algorithm to the IRS-aided channel estimation for MIMO systems and extended the framework to the block sparsity case. We also analyzed the local minima of the SBL cost function. The analysis of SVD-KroSBL and devising real-time sparse recovery algorithms exploiting the Kronecker structure with reduced complexity can be exciting topics for future work.

 \section*{Acknowledgments}
 We thank Dr. David Wipf, Jianzhe Liang, Changheng Li, and Yanyan Hu for insightful discussions on the theoretical analysis and Sofia-Eirini Kotti for helping us with the coding.

\appendices
\section{Proof of Proposition \ref{thm.am_stationary}}
\label{appe.am}
The proof uses Zangwill's Convergence Theorem A \cite{zangwill1969nonlinear} and the following proposition to arrive at the desired results:

\begin{proposition}\label{thm.coercive_open}
Let $f(\bm x): \mathcal{X}\rightarrow\mathbb{R}$ be a continuous function. If
    \begin{equation}\label{eq.coercive_open}
                f(\bm x)\rightarrow+\infty\ \text{as}\ \mathcal{X} \ni \bm x \rightarrow \mathrm{bd} (\mathcal{X})\ \text{or}\ \|\bm x\| \rightarrow +\infty,
    \end{equation}
    where $\mathrm{bd} (\mathcal{X})$ denotes the boundary of $\mathcal{X}$, then any sublevel set $\mathcal{S}:\{\bm x|f(\bm x)\leq c\}$ is compact.
\end{proposition}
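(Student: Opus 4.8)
The plan is to prove that $\mathcal{S}$ is closed and bounded in the ambient (finite-dimensional) Euclidean space and then invoke the Heine--Borel theorem. If $\mathcal{S}=\emptyset$ the claim is trivial, so assume $\mathcal{S}\neq\emptyset$. For boundedness I would argue by contradiction: if $\mathcal{S}$ were unbounded, there would be a sequence $\bm x_k\in\mathcal{S}$ with $\|\bm x_k\|\to+\infty$; then \eqref{eq.coercive_open} forces $f(\bm x_k)\to+\infty$, which contradicts $f(\bm x_k)\le c$ for every $k$.

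For closedness, I would take an arbitrary sequence $\bm x_k\in\mathcal{S}$ that converges in the ambient space to some point $\bm x^\star$, and show $\bm x^\star\in\mathcal{S}$. Since $\bm x_k\in\mathcal{X}$ for all $k$, the limit satisfies $\bm x^\star\in\overline{\mathcal{X}}=\mathcal{X}\cup\mathrm{bd}(\mathcal{X})$, so there are two cases. If $\bm x^\star\in\mathcal{X}$, then continuity of $f$ on $\mathcal{X}$ gives $f(\bm x^\star)=\lim_k f(\bm x_k)\le c$, hence $\bm x^\star\in\mathcal{S}$. If instead $\bm x^\star\in\mathrm{bd}(\mathcal{X})\setminus\mathcal{X}$, then $\mathrm{dist}(\bm x_k,\mathrm{bd}(\mathcal{X}))\le\|\bm x_k-\bm x^\star\|\to 0$ while $\bm x_k\in\mathcal{X}$, so the hypothesis \eqref{eq.coercive_open} applies along this sequence and yields $f(\bm x_k)\to+\infty$, again contradicting $f(\bm x_k)\le c$. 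Therefore only the first case is possible, every limit point of $\mathcal{S}$ lies in $\mathcal{S}$, and $\mathcal{S}$ is closed.

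The proof is essentially routine; the one point that needs care is the topological bookkeeping. One must establish closedness relative to the full ambient space rather than relative to $\mathcal{X}$ (which need not be closed), and one must read the premise ``$\mathcal{X}\ni\bm x\to\mathrm{bd}(\mathcal{X})$'' as ``$\bm x\in\mathcal{X}$ with $\mathrm{dist}(\bm x,\mathrm{bd}(\mathcal{X}))\to 0$,'' so that a sequence inside $\mathcal{X}$ converging to a boundary point outside $\mathcal{X}$ does trigger the blow-up used in the contradiction. Once these conventions are fixed there is no substantive obstacle: combining boundedness with closedness and applying Heine--Borel shows $\mathcal{S}$ is compact, which is exactly what is needed when this lemma is later applied to the M-step cost function restricted to $\mathcal{C}_+$.
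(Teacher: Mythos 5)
Your proposal is correct and follows essentially the same route as the paper: prove boundedness by contradiction using the blow-up of $f$ at infinity, prove closedness by a case split on whether the limit of a sequence in $\mathcal{S}$ lands in $\mathcal{X}$ (use continuity) or on $\mathrm{bd}(\mathcal{X})$ (use the coercivity hypothesis), and conclude via Heine--Borel. Your direct limit-point argument is a minor restructuring of the paper's proof by contradiction, and your remark about reading ``$\mathcal{X}\ni\bm x\to\mathrm{bd}(\mathcal{X})$'' as distance-to-boundary tending to zero matches the convention the paper implicitly uses.
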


\begin{proof}
% Similar statement can be found in . 
% {\color{red} There will be a natural question of whether the result is known or not. So mention that we adopted the idea from ``some reference''.}
This proof is adopted from~\cite{calafiore2014optimization}. A set is compact if it is closed and bounded~\cite{calafiore2014optimization}. We start with the proof of boundedness. Suppose $\mathcal{S}$ is unbounded, then there must exist a sequence $\{\bm x_n\} \subset \mathcal{S}$ with $\|\bm x_n\|\rightarrow +\infty$. Then, \eqref{eq.coercive_open} indicates $f(\bm x_n)\rightarrow +\infty$. Therefore, there exists an integer $n_0>0$ such that $f(\bm x_n) > c$, for $ n > n_0$, which is a contradiction to the assumption $\{\bm x_n\} \subset \mathcal{S}$. So $\mathcal{S}$ is bounded.

We next complete the proof by establishing the closedness of $\mathcal{S}$. Suppose $\mathcal{S}$ is not a closed set. Then, there exists at least one sequence $\{\bm x_n\in \mathcal{S}\}$ converging to $ \bar{\bm x}\notin\mathcal{S}$, which implies either $\bar{\bm x}\in \mathcal{X}$ with $f(\bar{\bm x}) > c$, or $\bar{\bm x} \in \mathrm{bd} (\mathcal{X})$. However, $\bar{\bm x} \in \mathrm{bd} (\mathcal{X})$ further implies that $ f(\bm x_n) \to +\infty$ from \eqref{eq.coercive_open}, which is a contradiction. If  $\bar{\bm x}\in \mathcal{X}$ with $f(\bar{\bm x}) > c$, then there exists a neighbourhood $\mathcal{V}_0$ of $\bar{\bm x}$ such that  $f(\bm x)>c$ for any $\bm x \in \mathcal{V}_0$. The existence of $\mathcal{V}_0$ is guaranteed because $f(\bm x)$ is continuous at $\bar{\bm x}$. However, because $\{\bm x_n\in \mathcal{S}\}$ converges to $ \bar{\bm x}\notin\mathcal{S}$, there exists an integer $n_0$ such that $\bm x_{n} \in \mathcal{V}_0$, for $n > n_0$. Hence, $f(\bm x_n) > c$, leading to a contradiction to the assumption that $\{\bm x_n\} \subset \mathcal{S}$.
% We next show that both these cases lead to a contradiction.
% \begin{enumerate}[leftmargin=*]
%     \item Suppose  $\bar{\bm x}\in \mathcal{X}$ with $f(\bar{\bm x}) > c$. Since $f(\bm x)$ is continuous at $\bar{\bm x}$ and $f(\bar{\bm x}) > c$, then there exists a neighbourhood $\mathcal{V}_0$ of $\bar{\bm x}$ such that  $f(\bm x)>c$ for any $\bm x \in \mathcal{V}_0$. However, because $\{\bm x_n\in \mathcal{S}\}$ converges to $ \bar{\bm x}\notin\mathcal{S}$, there exists an integer $n_0$ such that $\{\bm x_{n}\} \in \mathcal{V}_0$, for $n > n_0$. Hence, $f(\bm x_n) > c$, leading to a contradiction to the assumption that $\{\bm x_n\} \subseteq \mathcal{S}$.
%     \item Suppose that $\bar{\bm x} \in \mathrm{bd} (\mathcal{X})$. Then, $ f(\bm x_n) \to +\infty$ from \eqref{eq.coercive_open}. Therefore, there exists an integer $n_0>0$ such that $f(\bm x_n) > c$, for $ n > n_0$, which is a contradiction to the assumption $\{\bm x_n\} \subseteq \mathcal{S}$.
% \end{enumerate}
Thus, $\mathcal{S}$ contains all its limit points, proving that $\mathcal{S}$ is closed and the proof is complete. 
%Thus, set $\mathcal{S}$ is compact upon the closedness and boundedness.
\end{proof}

Next, we prove the proposition using the above results. We let the sequence $\{\bm\gamma_i\}^{(t)}|_{t=1}^\infty$ be generated by the AM in Algorithm \ref{al.AMKroSBL} with the starting point $\{\bm\gamma_i\}^{(1)} \in \mathcal{C}_+$, where we omit the EM index $r$ for brevity.  We also define the mapping from $\{\bm \gamma_i\}^{(t)}$ to $\{\bm \gamma_i\}^{(t+1)}$ in Algorithm \ref{al.AMKroSBL} as a function $M(\cdot)$, i.e., $M(\{\bm \gamma_i\}^{(t)})=\{\bm \gamma_i\}^{(t+1)}$. We prove the result using the Zangwill's convergence theorem~\cite{zangwill1969nonlinear}. Suppose the following conditions from the Zangwill's convergence theorem hold,
\begin{enumerate}
    \item If the sequence $\{\bm\gamma_i\}^{(t)}|_{t=1}^\infty$ is in a compact set $\mathcal{S} \subset \mathcal{C}_+$. \label{condition1}
    \item If $\{\bm\gamma_i\}^{(t)}$ is not a stationary point of $Q(\{\bm \gamma_i\})$, \label{condition2} %for \label{condition2} $\{\bm\gamma_i\}^{(t+1)} = M(\{\bm\gamma_i\}^{(t)})$, there is
    \begin{equation}
        Q(\{\bm\gamma_i\}^{(t)}) > Q(\{\bm\gamma_i\}^{(t+1)}).
    \end{equation}
    \item If $\{\bm\gamma_i\}^{(t)}$ is a stationary point of $Q(\{\bm \gamma_i\})$, the AM step terminates or %for $\{\bm\gamma_i\}^{(t+1)} = M(\{\bm\gamma_i\}^{(t)})$, there is 
    \label{condition3}
    \begin{equation}
        Q(\{\bm\gamma_i\}^{(t)}) \geq Q(\{\bm\gamma_i\}^{(t+1)}).
    \end{equation}
    \item Function $Q(\{\bm \gamma_i\})$ is continuous in $\{\bm \gamma_i\}$ and  $M(\cdot)$ is continuous at \label{condition4} $\{\bm\gamma_i\}^{(t)}$ if $\{\bm\gamma_i\}^{(t)}$ is not a stationary point.
    \end{enumerate}
Then, Zangwill's theorem \cite{zangwill1969nonlinear} guarantees that the AM algorithm stops at a stationary point of $Q(\{\bm \gamma_i\})$. 
%At the high level, the proof contains the following three parts:
% to verify the condition \ref{condition1} to \ref{condition4} for $\{\bm\gamma_i\}^{(t)}|_{t=0}^\infty$ and the AM algorithm.
Consequently, in the remainder of the proof, we verify Conditions~\ref{condition1}-\ref{condition4}.% to prove Proposition \ref{thm.am_stationary}. 

We begin with Condition~\ref{condition1}. From Lemma~\ref{thm.am_cost}, we deduce that $\{\bm\gamma_i\}^{(t)}|_{t=1}^\infty\subset \mathcal{S}$ where $\mathcal{S}$ is the sublevel set of $Q(\{\bm\gamma_i\}^{(0)})$. So to check Condition~\ref{condition1}, we prove that $\mathcal{S}$ is compact. To this end, we establish that $\mathcal{S}$ is compact using the following result.
%We first establish Proposition \ref{thm.coercive_open} and using which we then show that $Q(\{\bm \gamma_i\})$ satisfies Proposition \ref{thm.coercive_open}.
Invoking Proposition~\ref{thm.coercive_open}, it is enough to verify that $Q(\{\bm \gamma_i\})$ satisfies~\eqref{eq.coercive_open} on its domain $\mathcal{C}_+$ to ensure compactness of $\mathcal{S}$. %First, we check if the condition in~\eqref{eq.coercive_open} $Q\rightarrow +\infty$ when $\mathcal{C}_+ \ni \{\bm \gamma_i\} \rightarrow \mathrm{bd} (\mathcal{C}_+)\ $ and $ \|\{\bm \gamma_i\}\| \rightarrow +\infty$, where we define $\|\{\bm \gamma_i\}\| = (\sum_{i=1}^I\|\bm \gamma_i\|_2^2)^{1/2}$. 
For this, we notice that $Q(\{\bm \gamma_i\})$ in~\eqref{eq.qfunc} can be rewritten~as
\begin{align}\label{eq.coerciveT}
    Q(\{\bm \gamma_i\})&=\sum_{j=1}^{N^I}\log [\bm \gamma]_j + [\bm d^{(r)}]_j [\bm \gamma]_j^{-1}\\
    &= \log [\bm \gamma]_{j_*} + [\bm d^{(r)}]_{j_*} [\bm \gamma]_{j_*}^{-1}+\sum_{j\neq j_*}\log [\bm \gamma]_j + [\bm d^{(r)}]_j [\bm \gamma]_j^{-1},
    \end{align}
for any $j_*=1,2,\ldots,N^I$.  Since $\bm d^{(r)}>0$ and the minimum value of function $f(x)=\log x+a/x$ is attained at $x=a$ and $f(x)\geq f(a)=\log a+1$, for any $a> 0$, we get
    \begin{multline}
    Q(\{\bm \gamma_i\})\geq \log [\bm \gamma]_{j_*} + [\bm d^{(r)}]_{j_*} [\bm \gamma]_{j_*}^{-1}+\sum_{j\neq j_*}\left(\log [\bm d^{(r)}]_j + 1\right).
\end{multline} The assumption $\bm d^{(r)}>0$ also implies that the second term in the lower bound $\sum_{j\neq j_*}\left(\log [\bm d]_j + 1\right)$ is finite. Therefore, if  $[\bm \gamma]_{j_*}\to 0$ or $[\bm \gamma]_{j_*}\to +\infty$, the lower bound on $Q$ goes to $+\infty$, and so, $Q\rightarrow +\infty$. %it is enough to verify that there exists an index $j_*$ such that $\log [\bm \gamma]_{j_*} + [\bm d^{(r)}]_{j_*} [\bm \gamma]_{j_*}^{-1}\to+\infty$, which happens if  $[\bm \gamma]_{j_*}\to 0$ or $[\bm \gamma]_{j_*}\to +\infty$. 

Clearly, $Q\rightarrow +\infty$ when $ \|\{\bm \gamma_i\}\| \rightarrow +\infty$ because at least one entry $\bm\gamma_{j_*}$ of $\bm \gamma$ satisfies $[\bm \gamma]_{j_*}\to +\infty$. Similarly, when $\mathcal{C}_+ \ni \{\bm \gamma_i\} \rightarrow \mathrm{bd} (\mathcal{C}_+)\ $, there exists an index $i_*$ such that at least one entry of $\bm\gamma_{i_*}$ goes to zero. In that case, we can choose  $j_*$ such that  
\begin{equation}
    [\bm \gamma]_{j_*}=\min_l\;[\bm \gamma_{i_*}]_l\prod_{i\neq i_*}\|\bm \gamma_i\|_\infty.
\end{equation}
Then, $[\bm \gamma]_{j_*}\to 0$ if $\prod_{i\neq i_*}\|\bm \gamma_i\|_\infty$ is finite; otherwise $[\bm \gamma]_{j_*}\to \infty$. In both cases, $Q\rightarrow +\infty$. Hence, \eqref{eq.coercive_open} holds for $Q(\{\bm \gamma_i\})$, and due to Proposition \ref{thm.coercive_open}, $\mathcal{S}$ is compact. Thus, Condition~\ref{condition1} is established.

We next examine Condition~\ref{condition2}. A stationary point  $\{\bm \gamma_i^\mathsf{s}\}\in \mathcal{C}_+$ of $Q$  satisfies 
\begin{equation}
\frac{\partial Q(\{\bm \gamma_i\}|\bm \gamma^{(r)})}{\partial \{\bm \gamma_i\}}\bigg|_{\{\bm \gamma_i^\mathsf{s}\}}
%=
% \begin{bmatrix}
% \frac{\partial Q}{\partial \bm \gamma_1}\\
% \vdots\\
% \frac{\partial Q}{\partial \bm \gamma_I}
% \end{bmatrix}_{\{\bm \gamma_i^\mathsf{s}\}}
=
\bm 0.
\end{equation}
If $\{\bm \gamma_i\}^{(t)} \in \mathcal{C}_+$ is not a stationary point of $Q(\{\bm \gamma_i\})$, there exists at least one index $i_*\in\{1,2,\ldots,I\}$ such that 
% \begin{equation}\label{eq.grad_neq}
%     \begin{bmatrix}
%     \bm \gamma_1^{(t)}\\\vdots\\\bm \gamma_I^{(t)}
%     \end{bmatrix}
%     \neq
%     N^{-I+1}
%     \begin{bmatrix}
%     [\bm I \otimes (\otimes_{i=2}^I \bm \gamma_i^{(t)})^{(-1)}]^\mathsf{T}\bm d^{(r)}\\\vdots\\
%     [(\otimes_{i=1}^{I-1} \bm \gamma_i^{(t)})^{(-1)} \otimes \bm I  ]^\mathsf{T}\bm d^{(r)}
%     \end{bmatrix}.
% \end{equation}
% Inequality~\eqref{eq.grad_neq} can be caused by 
\begin{align}\label{eq.gamma2_ine}
    \bm \gamma_{i_*}^{(t)}&\neq N^{-I+1}\left[(\otimes_{j=1}^{i_*-1}(\bm \gamma_j^{(t)})^{-1})\otimes \bm I_N \otimes (\otimes_{j=i_*+1}^I(\bm \gamma_j^{(t)})^{-1})\right]^\mathsf{T}\bm d^{(r)}\\
  \bm \gamma_i^{(t)} &= N^{-I+1}\big[(\otimes_{j=1}^{i-1}(\bm \gamma_j^{(t)})^{-1})\otimes \bm I_N \otimes (\otimes_{j=n+1}^I(\bm \gamma_j^{(t)})^{-1})\big]^\mathsf{T}\bm d^{(r)},\label{eq.stationary}
\end{align}
for $i=1,2,\ldots,i_*-1$. Therefore, from the AM update \eqref{eq.update}, 
\begin{equation}\label{eq.nochange_am}
    \tilde{\bm \gamma}_i = \bm \gamma_i^{(t)}, \text{ for}\; i=1,2,\ldots,i^{*}-1.
\end{equation}
Substituting this relation in \eqref{eq.update} with $i=i_*$, we obtain
\begin{equation}
    \tilde{\bm \gamma}_{i_*} = N^{-I+1}\big[(\otimes_{i=1}^{i_*-1}(\bm \gamma_i^{(t)})^{-1})\otimes \bm I_N \otimes (\otimes_{i=i_*+1}^I(\bm \gamma_i^{(t)})^{-1})\big]^\mathsf{T}\bm d^{(r)}.
\end{equation}   
The AM update in \eqref{eq.update} also guarantees
\begin{equation}
    \tilde{\bm \gamma}_{i_*}= \arg\min_{\bm\gamma_{i_*}}Q\left(\{\tilde{\bm \gamma}_i\}_{i=1}^{i_*-1},\bm\gamma_{i_*},\{\bm \gamma_i^{(t)}\}_{i=i_*+1}^I\right)\neq  \bm \gamma_{i_*}^{(t)}.
\end{equation}
So using \eqref{eq.nochange_am} and the above relation, we conclude 
\begin{align}
    Q\left(\{\bm \gamma_i\}^{(t)}\right) &=Q\left(\{\tilde{\bm \gamma}_i\}_{i=1}^{i_*-1},\bm \gamma_{i_*}^{(t)},\{\bm \gamma_i^{(t)}\}_{i=i_*+1}^I\right) \\
    &> Q\left(\{\tilde{\bm \gamma}_i\}_{i=1}^{i_*-1},\tilde{\bm \gamma}_{i_*},\{\bm \gamma_i^{(t)}\}_{i=i_*+1}^I\right)  \geq Q(\{\bm \gamma_i\}^{(t+1)}),
\end{align}
using arguments similar to \eqref{eq.nonincrease}. Thus, Condition\ref{condition2} is verified.

 Further, to check Condition~\ref{condition3}, we note that if $\{\bm \gamma_i\}^{(t)} \in \mathcal{C}_+$ is a stationary point, then it satisfies \eqref{eq.stationary} for $i=1,2,\ldots,I$. Then,  
% \begin{equation}
%     \begin{bmatrix}
%     \bm \gamma_1^\mathsf{s}\\\vdots\\\bm \gamma_I^\mathsf{s}
%     \end{bmatrix}
%     =
%     N^{-I+1}
%     \begin{bmatrix}
%     [\bm I \otimes (\otimes_{i=2}^I \bm \gamma_i^\mathsf{s})^{(-1)}]^\mathsf{T}\bm d^{(r)}\\\vdots\\
%     [(\otimes_{i=1}^{I-1} \bm \gamma_i^\mathsf{s})^{(-1)} \otimes \bm I  ]^\mathsf{T}\bm d^{(r)}
%     \end{bmatrix},
% \end{equation}
$M\left(\{\bm \gamma_i\}^{(t)}\right) = \{\bm \gamma_i\}^{(t)}$ and $Q\left(\{\bm \gamma_i\}^{(t)}\right) = Q\left(\{\bm \gamma_i\}^{(t+1)}\right)$. So, Condition~\ref{condition3} holds.

Finally, we note that $M(\cdot)$ and $Q$ involves the opertaions $\det(\cdot)$, $\log(\cdot)$, and $(\cdot)^{-1}$ that are continuous on $\mathcal{C}_+$. Hence,  $M(\cdot)$ and $Q(\{\bm \gamma_i\})$ are also continuous, satisfying Condition~\ref{condition4}. Thus, the proof is complete.

\section{Proof of Theorem \ref{thm.stationary}}
\label{appe.em}

Theorem \ref{thm.stationary} is proven using the convergence guarantees for the generalized EM (GEM)~\cite[Theorem 6]{wu1983convergence}. %We first define the generalized EM (GEM) algorithm as Theorem 6 mainly focuses on the GEM algorithm.
GEM is an iterative algorithm similar to EM where the numerically infeasible M-step is replaced with a point-to-set map such that for every iteration $r$,
% \begin{defi}\label{defi.GEM}
% \cite{dempster1977maximum} An EM algorithm is defined as a GEM algorithm if as the E-step and M-step proceed, the following inequality
\begin{equation}
    Q(\{\bm \gamma_i\}^{(r)}|\{\bm \gamma_i\}^{(r)}) \geq Q(\{\bm \gamma_i\}^{(r+1)}|\{\bm \gamma_i\}^{(r)}),
\end{equation}
always holds for $Q$ defined in~\eqref{eq.qfunc}.
%\end{defi}
%Definition \ref{defi.GEM} shows what differs the GEM algorithm from the EM algorithm is the global minimum is no longer required in the M-step for the GEM algorithm. 
Unlike EM, the GEM algorithm does not require achieving the global minimum for the M-step optimization problem. We recall from Lemma~\ref{thm.am_cost} 
%that the $r$th EM iteration satisfies
        % \begin{multline*}
        %     Q(\{\bm \gamma_i\}^{(r)}|\bm \gamma^{(r)})
        %     = Q(\{\bm \gamma_i\}^{(r,1)})\\
        %     \geq Q(\{\bm \gamma_i\}^{(r,\infty)})
        %     =Q(\{\bm \gamma_i\}^{(r+1)}|\bm \gamma^{(r)}),    
        % \end{multline*}
%where $(a)$ is the initialization of the AM algorithm,
    % by $\{\bm \gamma_i\}^{(t)}|_{t=0} = \{\bm \gamma_i\}^{(r-1)}$
%    where the last step gives the solution to the M-step.
    % is $\{\bm \gamma_i\}^{(r)} = \{\bm \gamma_i\}^{(t)}|_{t=\infty}$.
%Thus, $Q$ is non-increasing over the EM iterations;
AM-KroSBL is a GEM algorithm. Now,     
% The proof of Theorem \ref{thm.stationary} starts with rephrasing Theorem 6 \cite{wu1983convergence} in our case.
suppose the following conditions of \cite[Theorem 6]{wu1983convergence} holds,

\begin{enumerate}[leftmargin=*]
    \item the domain of $\mathcal{L}$ is a subset in $\bar{N}$-th dimensional Euclidean space $\mathbb{R}^{\bar{N}}$, \label{gem.c1}
    \item KroSBL cost function $\mathcal{L}$ is continuous in its domain and differentiable in the interior of the domain, \label{gem.c3}
    \item the sublevel set of $\mathcal{L}(\{\bm \gamma_i\}^{(1)})$ is compact for any initilization $\{\bm \gamma_i\}^{(1)}$ with $\mathcal{L}(\{\bm \gamma_i\}^{(1)}) < +\infty$, \label{gem.c2}
%    \item the AM-KroSBL relies on a GEM algorithm, \label{gem.c4}
    \item the sequence $\{\bm\gamma_i\}^{(r)}|_{r=1}^{\infty}$ generated by AM-KroSBL has the additional property $\nabla Q = \frac{\partial }{\partial \{\bm \gamma_i\}^{(r+1)}}Q(\{\bm \gamma_i\}^{(r+1)}|\bm \gamma^{(r)}) = 0$, \label{gem.c5}
    \item $\nabla Q$ is continuous in both $\{\bm\gamma_i\}^{(r+1)}$ and $\{\bm\gamma_i\}^{(r)}$. \label{gem.c6}
\end{enumerate}
Then, Theorem 6 in \cite{wu1983convergence} ensures that the sequence $\{\bm\gamma_i\}^{(r)}|_{r=1}^{\infty}$ generated by AM-KroSBL converges to the stationary points of $\mathcal{L}$. Here, Condition~\ref{gem.c1} is trivially satisfied because the domain of $\mathcal{L}$, i.e., $\mathcal{C}$, is a subset of the $\bar{N}$-dimensional Euclidean space. Similarly, $\mathcal{L}=\log |\bm \Sigma_{\bm y}| + \bm y^\mathsf{H} \bm \Sigma_{\bm y}^{-1} \bm y$ is a continuous function of $\{\bm \gamma_i\}$ because matrix inversion and determinant are continuous in its entries \cite[Theorems~5.19 and 5.20]{schott2016matrix}. Further, the derivative of $\mathcal{L}$ exists everywhere in $\mathcal{C}$, and thus, Condition~\ref{gem.c3} is also satisfied. So, in the remainder of the proof, we verify if Conditions~\ref{gem.c2}-\ref{gem.c6} to establish the convergence guarantee of AM-KroSBL.

To verify Condition~Conditions~\ref{gem.c2}, we first show the compactness of the sublevel set of $\mathcal{L}$. Compactness is established via the coerciveness of the KroSBL cost function $\mathcal{L}$ since the sublevel sets of coercive functions are compact \cite{calafiore2014optimization}.
% \begin{defi}\label{def.2}
% Let $\mathcal{D}$ be a subset of $\mathbb{R}^n$. A function $f:\mathcal{D} \rightarrow \mathbb{R}$ is a coercive function if
% \begin{align}
%     \lim_{\|\bm x\|\to+\infty} f(\bm x) = +\infty.
% \end{align}
% \end{defi}
% \begin{proposition}\label{thm.coercive}
% Cost $\mathcal{L}$ is a coercive function of $\{\bm \gamma_i\}$ on~$\mathcal{C}$. The sublevel set of $\mathcal{L}(\{\bm \gamma_i\}^{(0)})$ is compact for any~$\mathcal{L}(\{\bm \gamma_i\}^{(0)}) < +\infty$.
% \end{proposition}
% \begin{proof}
By definition~\cite{fedorov2018structured}, function $\mathcal{L}$ is coercive if $ \lim_{\|\{\bm \gamma_i\}\|\to+\infty} \mathcal{L} = +\infty,$
% \begin{equation}
%     \lim_{\|\{\bm \gamma_i\}\|\to+\infty} \mathcal{L} = +\infty,
% \end{equation}
where $\|\{\bm \gamma_i\}\| = \left(\sum_{i=1}^I\|\bm \gamma_i\|_2^2\right)^{1/2}$. %In our case, $\|\bm \gamma_i\|_2=1$, for $i=1,2,\ldots,I-1$. Thus, $\mathcal{L}$ is coercive if $\lim_{\|\bm \gamma_I\|_2\to+\infty} \mathcal{L} = +\infty.$
% \begin{equation}
%     \lim_{\|\bm \gamma_I\|\to+\infty} \mathcal{L} = +\infty.
% \end{equation}
Further, since $\bm \Sigma_{\bm y} = \sigma^2 \bm I_{\bar{M}} + \bm H \bm \Gamma \bm H^\mathsf{H}$ is positive-definite (PD) when $\sigma^2 > 0$, we have
\begin{equation}
    \mathcal{L} = \log |\bm \Sigma_{\bm y}| + \bm y^\mathsf{H} \bm \Sigma_{\bm y}^{-1} \bm y
    >\log |\bm \Sigma_{\bm y}|
    =\sum_{j=1}^{\bar{M}} \log (\sigma^2 + \lambda_j),
\end{equation}
where $\lambda_j\!\geq\!0$ is the $j$th eigenvalue of $\bm H \bm \Gamma \bm H^\mathsf{H}$. So $\mathcal{L}$ is coercive~if
\begin{equation}\label{eq.sumloglimit}
    \lim_{\|\{\bm \gamma_i\}\|\to+\infty} \sum_{j=1}^{\bar{M}} \log (\sigma^2 + \lambda_j) = +\infty,
\end{equation}
which is true if at least one of the $\lambda_j$'s goes to infinity as $\|\{\bm \gamma_i\}\|$ goes to $+\infty$. Moreover, from the boundedness assumption on the norm of the dictionary columns, we derive
\begin{equation*}\label{eq.trace_lb_infty}
\sum_{j=1}^{\bar{M}} \lambda_j\!=\!\trace(\bm H \bm \Gamma \bm H^\mathsf{H})=\sum_{i=1}^{N^I}[\bm\gamma]_i\|[\bm H]_i\|_2^2\geq \epsilon^2\sum_{i=1}^{N^I}[\bm\gamma]_i\geq \epsilon^2\|\bm\gamma\|_{\infty}. %=  \epsilon^2\prod_{i=1}^I\|\bm\gamma_i\|_{\infty}\geq \frac{\epsilon^2}{N^{I/2}} \|\bm\gamma_I\|_{2},
\end{equation*}
%where the last step follows because $\|\bm\gamma_i\|_{\infty}\geq \frac{1}{\sqrt{N}}\|\bm\gamma_i\|_2 = \frac{1}{\sqrt{N}}$, for $i=1,2,\ldots,I-1$. 
So, $\lim_{\|\{\bm \gamma_i\}\|\to+\infty} \sum_{j=1}^{\bar{M}} \lambda_j=+\infty$, 
%where $(c)$ is due to the non-negativity of $\{\bm \gamma_i\}$ and $(d)$ is due to the assumption that each column of $\bm H$ has norm strictly greater than $\epsilon$, while $\bm H_\mathsf{m}$ is the column corresponding to $\max(\bm \gamma_i)$ for all $i$. Eq. \eqref{eq.trace_lb_infty} 
which means at least one eigenvalue goes to $+\infty$, proving \eqref{eq.sumloglimit}. 
Thus, KroSBL cost function $\mathcal{L}$ is a coercive function on $\mathcal{C}$, and its sublevel set of the starting point $\{\bm \gamma_i\}^{(1)}$ is compact. As a consequence, Condition~\ref{gem.c2} is true.

We next verify Condition~\ref{gem.c5}, which requires the point $\{\bm \gamma_i\}^{(r+1)}$ to be a stationary point of $Q(\{\bm \gamma_i\}^{(r+1)}|\bm \gamma^{(r)})$. According to Proposition \ref{thm.am_stationary}, it holds if $\bm d^{(r)} > 0$. So, we next show that $\bm d^{(r)} = \diag(\bm \Sigma_{\bm x}+\bm \mu_{\bm x}\bm \mu_{\bm x}^\mathsf{H}) > 0$. Since $\diag(\bm \mu_{\bm x}\bm \mu_{\bm x}^\mathsf{H})\geq  0$, it suffices to show that $\diag(\bm \Sigma_{\bm x})>0$. Also, since PD matrices have positive diagonal entries, from \eqref{eq.post_meva}, it is enough to verify that $\bm \Sigma_{\bm x}^{-1} = \sigma^{-2}\bm H^\mathsf{H}\bm H+\diag(\bm \gamma^{(r)})^{-1}$ is PD. So, $\bm d^{(r)} > 0$ if $\{\bm \gamma_i\}^{(r)} >0$ because $\sigma^{-2}\bm H^\mathsf{H}\bm H$ is PSD \cite{million2007hadamard}. From this observation, we prove the condition $\bm d^{(r)} > 0$ using induction. Since $\{\bm \gamma_i\}^{(1)} \in\mathcal{C}_+$, we have $\bm d^{(1)}>0$. Next, we assume that $\bm d^{(r)}>0$, for some $r>0$. From Proposition~\ref{thm.am_stationary}, $\{\bm \gamma_i\}^{(r+1)}$ generated by the AM algorithm is bounded away from $\mathrm{bd}(\mathcal{C}_+)$ when $\bm d^{(r)} > 0$. As a result, we conclude that $\{\bm \gamma_i\}^{(r+1)}>0$, which in turn implies that $\bm d^{(r+1)}>0$. Hence, Condition~\ref{gem.c5} holds in our case.

Finally, we show Condition~\ref{gem.c6} by first computing the gradient of $Q$ with respect to $\bm \gamma_i$ as
\begin{multline}
    \frac{\partial }{\partial \bm \gamma_i}Q(\{\bm \gamma_i\}|\bm \gamma^{(r)})=-N^{I-1}\bm\gamma_i^{-1}+\diag(\bm\gamma_i)^{-2} \\\times\left[\left(\otimes_{l=1}^{i-1}(\bm \gamma_l)^{-1}\right)\otimes \bm I_N \otimes \left(\otimes_{l=i+1}^I(\bm \gamma_l^{(r,t)})^{-1}\right)\right]^\mathsf{T}\bm d^{(r)}.
\end{multline}
Here, operators $(\cdot)^{-1}$ and $(\cdot)^{-2}$ are continuous in $(0,+\infty)$. Thus, the gradient is continuous in $\{\bm \gamma_i\}$. Finally, in $\nabla Q$, only $\bm d^{(r)}$ depends on $\{\bm \gamma_i\}^{(r)}$ as in~\eqref{eq.post_meva}. Since the matrix inversion is continuous in its entries \cite[Theorem 5.20]{schott2016matrix}, $\bm \Sigma_{\bm x}$ and $\bm \mu_{\bm x}$ are continuous in $\{\bm \gamma_i\}^{(r)}$. Therefore, $\nabla Q$ is continuous in $\{\bm \gamma_i\}^{(r)}$. Thus, Condition~\ref{gem.c6} is established, and the proof is complete.
% With Proposition~\ref{prop.thm1}, this closes the discussion of the convergence property of AM-KroSBL and the proof of Theorem \ref{thm.stationary}.

\section{Proof of Theorem~\ref{thm.local_minima_sparse}}\label{appe.sparse_local}

To prove the sparsity of local minima, we start with a few supporting lemmas.
%we first construct a problem of minimizing a concave function constrained on a convex polytope, and show . Then, we use the equivalence between extreme points and basic feasible solutions.%, adhering to the outline in~\cite{wipf2004sparse}. 
%Although the analyses are conducted without being constrained in $\mathcal{C}_+$, they can be translated by properly scaling $\{\bm \gamma_i\}$. Therefore, the conclusions still hold under constraints. Several lemmas are established first.

\begin{lemma}\label{lmm.ml_separable}
$\log|\bm \Sigma_{\bm y}|$ is concave with respect to $\{\bm \gamma_i\}$ in the noiseless case, i.e., $\sigma^2=0$.
\end{lemma}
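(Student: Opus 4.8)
The plan is to prove the following precise form of the statement, which is the one used in Appendix~\ref{appe.sparse_local}: for each $i$, with the remaining blocks $\{\bm\gamma_j\}_{j\neq i}$ held fixed, $\log|\bm\Sigma_{\bm y}|$ is a concave function of $\bm\gamma_i$ (block‑coordinate‑wise concavity). This is the right reading because $\bm\gamma=\otimes_{j=1}^I\bm\gamma_j$ is a degree‑$I$ multilinear map of its blocks, so joint concavity in $(\bm\gamma_1,\dots,\bm\gamma_I)$ cannot be expected — already for $I=2$ and $\bar M=1$ one checks that $\log(\bm H\diag(\bm\gamma_1\otimes\bm\gamma_2)\bm H^{\mathsf H})$ has directions of positive curvature — whereas block‑wise concavity is exactly what the sparsity argument needs, since there one perturbs a single block at a time.

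The first ingredient I would invoke is the classical fact that $\bm g\mapsto\log|\bm H\diag(\bm g)\bm H^{\mathsf H}|$ is concave on the nonnegative orthant: $\bm g\mapsto\bm H\diag(\bm g)\bm H^{\mathsf H}=\sum_k[\bm g]_k[\bm H]_k[\bm H]_k^{\mathsf H}$ is linear and positive‑semidefinite‑valued, and $\log\det(\cdot)$ is concave on the positive semidefinite cone (taking the value $-\infty$ off the positive‑definite part), so the composition is concave \cite{calafiore2014optimization}.

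The second ingredient is that, with the other blocks frozen, the Kronecker product depends \emph{linearly} on $\bm\gamma_i$. Using the Kronecker mixed‑product identity I would write
\begin{equation}
\bm\gamma=\otimes_{j=1}^I\bm\gamma_j=\underbrace{\left(\left(\otimes_{j=1}^{i-1}\bm\gamma_j\right)\otimes\bm I_N\otimes\left(\otimes_{j=i+1}^I\bm\gamma_j\right)\right)}_{=:\,\bm P_i}\bm\gamma_i,
\end{equation}
where $\bm P_i$ has nonnegative entries whenever $\bm\gamma_j\geq0$ for $j\neq i$. In the noiseless case $\bm\Sigma_{\bm y}=\bm H\diag(\bm\gamma)\bm H^{\mathsf H}=\bm H\diag(\bm P_i\bm\gamma_i)\bm H^{\mathsf H}$, so $\bm\gamma_i\mapsto\log|\bm\Sigma_{\bm y}|$ is the composition of the concave map $\bm g\mapsto\log|\bm H\diag(\bm g)\bm H^{\mathsf H}|$ with the linear map $\bm\gamma_i\mapsto\bm P_i\bm\gamma_i$, hence concave in $\bm\gamma_i$; since $i$ is arbitrary, the lemma follows.

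There is no heavy computation here; the only point that needs care — and the main thing I would be careful to state correctly — is the boundary behaviour of $\log\det$. One should phrase concavity of $\log\det$ as that of an extended‑real‑valued function on the PSD cone (or, equivalently, restrict attention to the relative interior where $\bm H\diag(\bm\gamma)\bm H^{\mathsf H}\succ0$), and then observe that pre‑composition with the linear map $\bm P_i$ and restriction to $\{\bm\gamma_i\geq0\}$ preserve concavity. The same linear‑substitution device is what I would reuse in the subsequent lemmas of Appendix~\ref{appe.sparse_local} to treat the data‑fit term $\bm y^{\mathsf H}\bm\Sigma_{\bm y}^{-1}\bm y$ block by block.
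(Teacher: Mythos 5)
There is a genuine gap here: you prove a strictly weaker statement than the lemma, and your justification for weakening it rests on a false counterexample. The lemma asserts concavity with respect to the full collection $\{\bm \gamma_i\}$, i.e.\ \emph{joint} concavity, and this is exactly what the paper establishes. The key fact you miss is that in this paper $\bm H$ is itself Kronecker-structured, $\bm H = \otimes_{i=1}^I \bm H_i$, so by the mixed-product rule $\bm \Sigma_{\bm y} = \bm H \bm \Gamma \bm H^{\mathsf H} = \otimes_{i=1}^I \bm H_i \bm \Gamma_i \bm H_i^{\mathsf H}$, and the Kronecker determinant identity gives
\begin{equation}
\log|\bm \Sigma_{\bm y}| \;=\; \sum_{i=1}^I \Bigl(\prod_{j\neq i} M_j\Bigr)\,\log\bigl|\bm H_i \bm \Gamma_i \bm H_i^{\mathsf H}\bigr|.
\end{equation}
This is a \emph{separable} sum: the $i$th term depends only on $\bm \gamma_i$, is the composition of $\log\det$ with an affine PSD-valued map of $\bm \gamma_i$, and is therefore concave in $\bm \gamma_i$; a sum of such terms is jointly concave in $(\bm \gamma_1,\ldots,\bm \gamma_I)$. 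Your proposed counterexample ($I=2$, $\bar M = 1$) fails for precisely this reason: with $\bm H = \bm H_1 \otimes \bm H_2$ one gets $\bm H\diag(\bm\gamma_1\otimes\bm\gamma_2)\bm H^{\mathsf H} = (\bm H_1\bm\Gamma_1\bm H_1^{\mathsf H})(\bm H_2\bm\Gamma_2\bm H_2^{\mathsf H})$, a product of two nonnegative linear forms in separate blocks, whose logarithm splits into a sum of two concave functions. The multilinearity of $\bm\gamma \mapsto \otimes_j \bm\gamma_j$ is rendered harmless by the logarithm; a non-concave example would require a non-Kronecker $\bm H$, which is outside the lemma's setting.

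To be fair, the block-coordinate argument you give (freeze $\{\bm\gamma_j\}_{j\neq i}$, write $\bm\gamma = \bm P_i\bm\gamma_i$ with $\bm P_i \geq 0$, compose $\log\det$ with a linear map) is correct as far as it goes, and per-block concavity is what is ultimately exercised in Appendix~\ref{appe.sparse_local} once the problem has been decomposed into $I$ separate programs. But as a proof of the lemma as stated it is incomplete, and the explicit claim that joint concavity "cannot be expected" is wrong. The fix is short: replace the linear-substitution device by the separable decomposition above, after which joint concavity is immediate.
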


\begin{proof}
    When $\sigma^2 = 0$, $\bm \Sigma_{\bm y} = \bm H \bm \Gamma \bm H^\mathsf{H} = \otimes_{i=1}^I \bm H_i \bm \Gamma_i \bm H_i^\mathsf{H}$. We have
    \begin{equation}
        \log|\bm \Sigma_{\bm y}| = \log|\otimes_{i=1}^I \bm H_i \bm \Gamma_i \bm H_i^\mathsf{H}|=\sum_{i=1}^I\left({\prod_{j\neq i}M_j}\right)\log|\bm H_i \bm \Gamma_i \bm H_i^\mathsf{H}|.
    \end{equation}
Since $\bm H_i \bm \Gamma_i \bm H_i^\mathsf{H}$ is a PSD matrix and affine in $\bm \gamma_i$, and function $\log|\cdot|$ is a concave function in the space of PSD matrices, $\log|\bm H_i \bm \Gamma_i \bm H_i^\mathsf{H}|$ is concave in $\bm \gamma_i$. Thus, $\log|\bm \Sigma_{\bm y}|$ is concave because the sum of concave functions is also concave.
\end{proof}

\begin{lemma}\label{lmm.constant}
If $\{\bm \gamma_i\}$ satisfies $\bm b = \bm A (\otimes_{i=1}^I\bm \gamma_i)$, where 
    \begin{equation}\label{eq:ab_defn}
        \bm b = \bm y - \sigma^2\bm u;\ \ \ \bm A = \bm H\diag(\bm H^\mathsf{H}\bm u),
    \end{equation}
    and $\bm u$ is any fixed vector such that $\bm y^\mathsf{H}\bm u = C$, then $\bm y^\mathsf{H}\bm \Sigma_{\bm y}^{-1}\bm y$ is a constant $C$ for any value of $\sigma^2\geq 0$.
\end{lemma}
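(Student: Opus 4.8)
The plan is to prove the slightly stronger statement that the fixed vector $\bm u$ is itself a solution of the linear system $\bm \Sigma_{\bm y}\bm u = \bm y$. Once this is in hand, left-multiplying that identity by $\bm y^\mathsf{H}\bm \Sigma_{\bm y}^{-1}$ immediately gives $\bm y^\mathsf{H}\bm \Sigma_{\bm y}^{-1}\bm y = \bm y^\mathsf{H}\bm u = C$, which is exactly the claim. So the whole proof reduces to verifying $\bm \Sigma_{\bm y}\bm u = \bm y$ under the hypothesis $\bm b = \bm A(\otimes_{i=1}^I\bm \gamma_i)$.

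The only algebraic ingredient is the elementary commutation rule $\diag(\bm a)\bm c = \diag(\bm c)\bm a$ (both equal the Hadamard product $\bm a\odot\bm c$), valid for any conformable vectors $\bm a,\bm c$. First I would set $\bm \gamma := \otimes_{i=1}^I\bm \gamma_i$ and apply this rule with $\bm a = \bm \gamma$ and $\bm c = \bm H^\mathsf{H}\bm u$, obtaining $\bm \Gamma\bm H^\mathsf{H}\bm u = \diag(\bm \gamma)\bm H^\mathsf{H}\bm u = \diag(\bm H^\mathsf{H}\bm u)\bm \gamma$. Left-multiplying by $\bm H$ and recognising the definition of $\bm A$ in~\eqref{eq:ab_defn}, this becomes $\bm H\bm \Gamma\bm H^\mathsf{H}\bm u = \bm H\diag(\bm H^\mathsf{H}\bm u)\bm \gamma = \bm A(\otimes_{i=1}^I\bm \gamma_i)$, which by hypothesis equals $\bm b = \bm y - \sigma^2\bm u$.

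Next I would add $\sigma^2\bm u$ to both sides to get $(\sigma^2\bm I_{\bar M} + \bm H\bm \Gamma\bm H^\mathsf{H})\bm u = \bm y$, i.e.\ $\bm \Sigma_{\bm y}\bm u = \bm y$. When $\sigma^2 > 0$ the matrix $\bm \Sigma_{\bm y}$ is positive definite, hence invertible, so $\bm u = \bm \Sigma_{\bm y}^{-1}\bm y$ and $\bm y^\mathsf{H}\bm \Sigma_{\bm y}^{-1}\bm y = \bm y^\mathsf{H}\bm u = C$. When $\sigma^2 = 0$ the identity $\bm \Sigma_{\bm y}\bm u = \bm y$ still shows that $\bm y$ lies in the range of the positive-semidefinite matrix $\bm \Sigma_{\bm y}$, so interpreting the inverse in the quadratic form as the Moore--Penrose pseudo-inverse, $\bm y^\mathsf{H}\bm \Sigma_{\bm y}^{\dagger}\bm y = \bm u^\mathsf{H}\bm \Sigma_{\bm y}\bm \Sigma_{\bm y}^{\dagger}\bm \Sigma_{\bm y}\bm u = \bm u^\mathsf{H}\bm \Sigma_{\bm y}\bm u = \bm u^\mathsf{H}\bm y = C$, where the last equality uses that $C$, being a Hermitian quadratic form, is real so $\bm u^\mathsf{H}\bm y = \overline{\bm y^\mathsf{H}\bm u} = C$.

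There is no genuinely hard step here: the argument is a two-line manipulation once the diagonal-commutation trick is spotted, and the only point needing a word of care is the meaning of $\bm \Sigma_{\bm y}^{-1}$ in the degenerate noiseless case, handled above. The real content of the lemma is conceptual rather than technical: it exhibits an entire affine family of hyperparameter vectors $\bm \gamma$ — those satisfying the linear constraint $\bm A\bm \gamma = \bm b$ — along which the data-fidelity term $\bm y^\mathsf{H}\bm \Sigma_{\bm y}^{-1}\bm y$ of the cost $\mathcal{L}$ in~\eqref{eq.cost_ml} is held constant, so that on this family only the $\log|\bm \Sigma_{\bm y}|$ term can vary. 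Combined with the concavity of that term established in Lemma~\ref{lmm.ml_separable}, this is precisely what will let one move any candidate local minimum toward the boundary of the feasible set and conclude the sparsity claimed in Theorem~\ref{thm.local_minima_sparse}.
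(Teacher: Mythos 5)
Your proposal is correct and follows essentially the same route as the paper: both establish the key identity $\bm \Sigma_{\bm y}\bm u = \bm y$ by commuting $\diag(\bm H^\mathsf{H}\bm u)(\otimes_{i=1}^I\bm\gamma_i) = \bm\Gamma\bm H^\mathsf{H}\bm u$ and then read off $\bm y^\mathsf{H}\bm \Sigma_{\bm y}^{-1}\bm y = \bm y^\mathsf{H}\bm u = C$. Your explicit pseudo-inverse treatment of the degenerate $\sigma^2=0$ case is a small added care the paper leaves implicit, but it does not change the argument.
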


\begin{proof}
%Since $\bm \Sigma_{\bm y} = \sigma^2 \bm I_{\bar{M}} + \bm H \bm \Gamma \bm H^\mathsf{H}$, setting $\bm u = (\bm H \bm \Gamma \bm H^\mathsf{H} + \sigma^2\bm I_{\bar{M}})^{-1}\bm y$ establishes the result for the noisy case, i.e, $\sigma^2>0$. To analyze the noiseless case, 

We combine the relation $\bm b = \bm A (\otimes_{i=1}^I\bm \gamma_i)$ and \eqref{eq:ab_defn} to obtain
\begin{align}
\bm y &= \bm A (\otimes_{i=1}^I\bm \gamma_i) + \sigma^2\bm u =\bm H\diag(\bm H^\mathsf{H}\bm u) (\otimes_{i=1}^I\bm \gamma_i) + \sigma^2\bm u \\
&= \bm H \bm \Gamma \bm H^\mathsf{H} \bm u + \sigma^2\bm u = \bm \Sigma_{\bm y}\bm u,
\end{align}    
where $\bm \Gamma=\diag(\otimes_{i=1}^I\bm \gamma_i)$. Then, $\bm y^\mathsf{H} \bm \Sigma_{\bm y}^{-1} \bm y= \bm y^\mathsf{H}\bm u=C$, for any value $\sigma^2\geq 0$.

%Let the eigenvalue decomposition of $\bm H \bm \Gamma \bm H^\mathsf{H}$ be $\bm V\bm S\bm V^\mathsf{H}$ where $\bm S$ is not necessarily full rank. Then, in the noiseless case, we derive
% \begin{align*}
% \lim_{\sigma^2\rightarrow 0}\bm y^\mathsf{H} \bm \Sigma_{\bm y}^{-1} \bm y \notag\\
%         &\hspace{-2cm}=        \lim_{\sigma^2\rightarrow 0}\bm u^\mathsf{H} \bm \Sigma_{\bm y} \bm u = \bm y^\mathsf{H}\bm \\
%         &\hspace{-2cm}=\lim_{\sigma^2\rightarrow 0} \bm u^\mathsf{H}( \sigma^2\bm I_{\bar{M}}\!+\!\bm V\bm S\bm V^\mathsf{H})(\sigma^2 \bm I_{\bar{M}}\!+\!\bm V\bm S\bm V^\mathsf{H})^{-1} (\bm V\bm S\bm V^\mathsf{H}\!+\!\sigma^2\bm I_{\bar{M}}\textbf{})\bm u\\
%         &\hspace{-2cm}=\lim_{\sigma^2\rightarrow 0}\bm u^\mathsf{H}\bm V\bm S\bm V^\mathsf{H}(\sigma^2 \bm I_{\bar{M}} + \bm V\bm S\bm V^\mathsf{H})^{-1}\bm V\bm S\bm V^\mathsf{H} \bm u\\
%         &\hspace{-2cm}=\bm u^\mathsf{H}\bm V\bm S\bm V^\mathsf{H} \bm u=C,
%     \end{align*}
% where the last step is obtained by defining $\bm V\bm S\bm V^\mathsf{H}\bm u = \bm y$ when $\sigma^2 = 0$. This completes the proof.
\end{proof}

% Lemma \ref{lmm.constant} serves as an essential cornerstone for the proof of Theorem~\ref{thm.local_minima_sparse} as it establishes the relationship between the ML problem~\ref{eq.ml_problem} and a constrained concave optimization problem. Further, Lemma \ref{lmm.constant} holds for any $\bm u$. Later, we will choose a Kronecker-structured $\bm u=\otimes_{i=1}^I$

% \begin{lemma}\label{lmm.kron_unique}
%     Let $\bm x$ be the Kronecker product of two vectors, i.e., $\bm x = \bm x_1 \otimes \bm x_2$. The Kronecker decomposition of $\bm x$, i.e., decomposing $\bm x$ into two vectors so that the Kronecker product outputs $\bm x$, ends up with $\tilde{\bm x}_1$ and $\tilde{\bm x}_2$ where $\tilde{\bm x}_1 = \alpha\bm x_1$ and $\tilde{\bm x}_2 = \alpha^{-1}\bm x_2$ for $\alpha \neq 0$, i.e., is unique up to a non-zero scale.
% \end{lemma}

% \begin{proof}
%     Since $\bm x = \bm x_1 \otimes \bm x_2$, we can arrange $\bm x$ as a matrix $\bm X$ such that $\bm X=\bm x_2 \bm x_1^\mathsf{T}$ and rank-one. Then any non-zero column and appropriately scaled non-zero row can be a pair of solution.
    
%     % The leading left and right singular vectors are unique to a sign. Scaling this pair of singular vectors with any pair of coefficients produces a set of solution as long as the product of coefficients equals to the singular value, which completes the proof.
% \end{proof}

\begin{lemma}\label{lmm.kron_equations_separable}
    Consider the set of linear equations, with $\bm t_1,\bm t_2\neq \bm 0$,
    \begin{equation}\label{eq.linear_kron}
        \left(\bm \Phi_1 \otimes \bm \Phi_2 \right)\left(\bm w_1\otimes \bm w_2 \right) = \bm t_1 \otimes \bm t_2.
    \end{equation}
    Seeking $\bm w_1$ and $\bm w_2$ that satisfy~\eqref{eq.linear_kron} is equivalent to solving 
    \begin{equation}\label{eq.linear_kron_soln}
        \bm \Phi_1 \bm w_1 = \alpha\bm t_1;\ \bm \Phi_2 \bm w_2 = \alpha^{-1}\bm t_2,
    \end{equation}
    where $\alpha$ is any non-zero scalar.
\end{lemma}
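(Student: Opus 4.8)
The plan is to collapse the Kronecker-structured system into an identity between two rank-one objects and then invoke the essential uniqueness of rank-one factorizations. First I would apply the mixed-product rule $(\bm\Phi_1\otimes\bm\Phi_2)(\bm w_1\otimes\bm w_2)=(\bm\Phi_1\bm w_1)\otimes(\bm\Phi_2\bm w_2)$, so that \eqref{eq.linear_kron} is equivalent to $(\bm\Phi_1\bm w_1)\otimes(\bm\Phi_2\bm w_2)=\bm t_1\otimes\bm t_2$. Writing $\bm s_1=\bm\Phi_1\bm w_1$ and $\bm s_2=\bm\Phi_2\bm w_2$, the whole lemma reduces to the following vector fact: for $\bm t_1,\bm t_2\neq\bm 0$, one has $\bm s_1\otimes\bm s_2=\bm t_1\otimes\bm t_2$ if and only if there is a non-zero scalar $\alpha$ with $\bm s_1=\alpha\bm t_1$ and $\bm s_2=\alpha^{-1}\bm t_2$. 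Undoing the substitution $\bm s_i=\bm\Phi_i\bm w_i$ then gives exactly \eqref{eq.linear_kron_soln}.

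For the ``if'' direction I would just use bilinearity of $\otimes$: $(\alpha\bm t_1)\otimes(\alpha^{-1}\bm t_2)=\alpha\alpha^{-1}(\bm t_1\otimes\bm t_2)=\bm t_1\otimes\bm t_2$. For the ``only if'' direction I would reshape both sides through the identity $\bm a\otimes\bm b=\vect(\bm b\bm a^\mathsf{T})$, turning $\bm s_1\otimes\bm s_2=\bm t_1\otimes\bm t_2$ into the matrix equation $\bm s_2\bm s_1^\mathsf{T}=\bm t_2\bm t_1^\mathsf{T}$. Since $\bm t_1,\bm t_2\neq\bm 0$, the right-hand side is rank one, hence so is the left-hand side, forcing $\bm s_1,\bm s_2\neq\bm 0$. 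Comparing column spaces gives $\bm s_2=c\bm t_2$ and comparing row spaces gives $\bm s_1=d\bm t_1$ for some non-zero scalars $c,d$; substituting back into $\bm s_2\bm s_1^\mathsf{T}=\bm t_2\bm t_1^\mathsf{T}$ yields $cd=1$, and setting $\alpha=d$ (so $c=\alpha^{-1}$) finishes the claim.

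The argument is essentially routine; the only points that need care are (i) fixing a single reshaping/vectorization convention and checking it is used consistently on both sides, which is precisely why I work with the plain transpose rather than the conjugate transpose, so that the scalar comes out as $\alpha^{-1}$ and not $\bar\alpha^{-1}$; and (ii) making the word ``equivalent'' precise: a pair $(\bm w_1,\bm w_2)$ solves \eqref{eq.linear_kron} exactly when it solves \eqref{eq.linear_kron_soln} for \emph{some} admissible $\alpha\neq 0$, so the solution set of the large system is the union over $\alpha\neq 0$ of the solution sets of the two decoupled systems. I would state this union description explicitly, since that is the form in which the lemma is used later to enumerate solutions mode by mode. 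No heavier machinery (URP, properties of $\bm\Sigma_{\bm y}$, etc.) enters here, and the extension to $I>2$ Kronecker factors, if needed elsewhere, follows by induction, peeling off one factor at a time.
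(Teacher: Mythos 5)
Your proposal is correct and follows essentially the same route as the paper's proof: apply the mixed-product rule, devectorize $\bm s_1\otimes\bm s_2=\bm t_1\otimes\bm t_2$ into the rank-one outer-product identity, and compare column/row spans to extract the scalar $\alpha$. Your write-up is simply a more explicit version (proving both directions and flagging the transpose-versus-conjugate-transpose convention, which the paper's one-line argument glosses over), but there is no substantive difference in approach.
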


\begin{proof}
    We rewrite~\eqref{eq.linear_kron} as $(\bm \Phi_1\bm w_1)\otimes(\bm \Phi_2\bm w_2)=\bm t_1\otimes\bm t_2$. Now, we can arrange this equation as $\left(\bm \Phi_2\bm w_2 \right)\left(\bm \Phi_1\bm w_1 \right)^{\mathsf{H}}=\bm t_2 \bm t_1^\mathsf{H}$, which is a rank-one matrix. Here, $\bm t_2 \bm t_1^\mathsf{H}$ has at least one nonzero column since $\bm t_1,\bm t_2\neq \bm 0$, and every column of $\bm t_2 \bm t_1^\mathsf{H}$ is a scaled version $\bm t_2$. Therefore, the solution to the system of equations is given by \eqref{eq.linear_kron_soln},
leading to the desired conclusion.
\end{proof}

With all the mentioned lemmas, we present the proof of Theorem~\ref{thm.local_minima_sparse}. First, we pose another optimization problem:
\begin{equation}\label{problem}
    \underset{\substack{\{\bm \gamma_i\}\geq 0}}{\min}\log|\bm \Sigma_{\bm y}| \;\text{s.t.} \;
    \bm A \left(\otimes_{i=1}^I\bm \gamma_i\right) = \bm b,
\end{equation}
where $\bm A = \bm H\diag(\bm H^\mathsf{H}\bm u)$ and $\bm b = \bm y - \sigma^2\bm u$. As Lemma \ref{lmm.constant}, the constraint $\bm A (\otimes_{i=1}^I\bm \gamma_i) = \bm b$ holds the second term of $\mathcal{L}$ constant, and we minimize the first term of $\mathcal{L}$, which is concave, over a bounded convex polytope.
% Every local minimum of~\eqref{problem} satisfies the following two conditions. First, it minimizes $\log|\bm \Sigma_{\bm y}|$. Second, it solves $\bm A \left(\otimes_{i=1}^I\bm \gamma_i\right) = \bm y$ for some given vector $\bm u$. The second fact can further lead to $\bm y^\mathsf{H}\bm \Sigma_{\bm y}^{-1}\bm y = \bm y^\mathsf{H}\bm u$ according to Lemma~\ref{lmm.constant}. Therefore, if we want the local minima of~\eqref{eq.ml_problem} and~\eqref{problem} to coincide, we only need to show that for 
Then, any local minimum of~\eqref{eq.ml_problem} denoted by $\{\bm \gamma_i^*\}$, must also be a local minimum of~\eqref{problem} with
\begin{equation}
    \bm y^\mathsf{H}\left(\sigma^2 \bm I_{\bar{M}} + \bm H \diag(\otimes_{i=1}^I \bm \gamma_i^*) \bm H^\mathsf{H}\right)^{-1}\bm y = C^* = \bm y^\mathsf{H}\bm u^*,
\end{equation}
as long as there exists a vector $\bm u^*$ satisfying $C^*\!=\!\bm y^\mathsf{H}\bm u^*$ to construct $\bm A$ and $\bm b$. A candidate for $\bm u^*$ in the noiseless setting~is
%In general, such $\bm u^*$ always exists. It is not necessary to be $\bm u^* = (\bm H \bm \Gamma^* \bm H^\mathsf{H} + \sigma^2\bm I)^{-1}\bm y$ but another candidate can be $\bm u^*=C^*/\|\bm y\|_2^2\bm y$ which also always exists and satisfies $C^* = \bm y^\mathsf{H}\bm u^*$. Without noise, we arrive at
\begin{equation}
    \bm u^* = C^*\frac{ \bm y}{\|\bm y\|_2^2} =C^*\frac{ \otimes_{i=1}^I\bm H_i \bm x_i}{\|\bm y\|_2^2},
\end{equation}
where we also use \eqref{eq.problem_basic}, \eqref{eq.separable_dict} and the assumption that $\bm x =\otimes_i^I\bm x_i$. Thus, we obtain 
\begin{equation*}
 \bm A =  \frac{C^*}{\|\bm y\|_2^2}  \bm H\diag(\bm H^\mathsf{H}\otimes_{i=1}^I\bm H_i \bm x_i)=\frac{C^*}{\|\bm y\|_2^2}  \otimes_{i=1}^I\bm H_i\diag(\bm H_i^\mathsf{H}\bm H_i \bm x_i).
\end{equation*}
Similarly, we have
\begin{equation}
    \bm b = \bm y = \otimes_i^I\bm H_i\bm x_i.
\end{equation}
Thus, using Lemma \ref{lmm.kron_equations_separable} with the scaling factor $\alpha=1$,  the constraint $\bm A (\otimes_{i=1}^I\bm \gamma_i)=\bm b$ can be written as 
% \begin{equation}\label{eq:KronConstr}
%     \left(\otimes_{i=1}^I \bm A_i \right) \left(\otimes_{i=1}^I\bm \gamma_i\right) = \otimes_{i=1}^I \bm y_i.
% \end{equation}
% Referring to Lemma \ref{lmm.kron_equations_separable}, the constraint \eqref{eq:KronConstr} can be rewritten as 
\begin{equation}
    \bm A_i \bm \gamma_i = \bm b_i,\forall i=1,2,\ldots,I,
\end{equation}
where $\bm A_i=\left(C^*/\|\bm y\|_2^2\right)^{1/I}  \bm H_i\diag(\bm H_i^\mathsf{H}\bm H_i \bm x_i)$, and $\bm b =\bm H_i\bm x_i$. %, and $\prod_{i=1}^I\alpha_i = 1$. We can choose $\alpha_i = 1$ as different $\alpha_i$ values only scale $\bm b_i$ or the solution $\bm \gamma_i$ differently but not the support. According to Lemma \ref{lmm.kron_equations_separable}, 
Now, the problem \eqref{problem} can be transformed into $I$ separated problems as
\begin{equation}\label{problem_2}
    \underset{\substack{\bm\gamma_i}}{\min}\ \log|\bm H_i \bm \Gamma_i \bm H_i^\mathsf{H}| \hspace{0.3cm} \text{s.t.} \
    \bm A_i \bm \gamma_i = \bm y_i, \bm \gamma_i \geq 0,
\end{equation}
for $i=1,2,\ldots,I$. Any local minimum of $\mathcal{L}$, e.g., $\{\bm \gamma_i\}^*$ is also a local minimum of \eqref{problem_2}. Furthermore, all local minima of \eqref{problem_2} are achieved at extreme points, which are also basic feasible solutions with at most ${M_i}$ non-zeros for each $\bm \gamma_i$ \cite{luenberger1984linear}. Thus, $\{\bm \gamma_i^*\}$ is sparse when noise is absent.

\section{Proof of Theorem~\ref{thm.no_local}}\label{appe.upper_bound}

The proof is based on the following lemma:
\begin{lemma}\label{prop.kron_urp}
    If $\bm H = \otimes_{i=1}^I \bm H_i$ satisfies URP, then $\bm H_i$ for all $i$ also satisfy URP.
\end{lemma}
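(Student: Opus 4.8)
The plan is to argue by contraposition: I will show that if some factor $\bm H_{i_0}$ fails the URP, then $\bm H = \otimes_{i=1}^I \bm H_i$ fails it as well. Two elementary facts about Kronecker products will do the work. First, $\rank\left(\otimes_{i=1}^I \bm C_i\right) = \prod_{i=1}^I \rank(\bm C_i)$. Second, if one selects from $\bm H = \otimes_{i=1}^I \bm H_i$ the columns indexed by a Cartesian product $T_1 \times \cdots \times T_I$ with $T_i \subseteq \{1,\ldots,N_i\}$, then, up to a reordering of columns, the resulting submatrix is exactly $\otimes_{i=1}^I \bm H_i[:,T_i]$, where $\bm H_i[:,T_i]$ denotes the submatrix of $\bm H_i$ with columns indexed by $T_i$; this follows from the block structure of $\otimes$ and an easy induction on $I$.

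A short preliminary is needed so that the index sets used below exist: I would first observe that $N_i \ge M_i$ for every $i$. Indeed, since $\bm H$ satisfies URP (and $\bar N \ge \bar M$), we have $\rank \bm H = \bar M$; if instead $N_{i_0} < M_{i_0}$ for some $i_0$, then $\rank \bm H_{i_0} \le N_{i_0} < M_{i_0}$ and hence $\rank \bm H = \prod_i \rank \bm H_i < \prod_i M_i = \bar M$, a contradiction. So for each $i$ one may choose a set $T_i$ of exactly $M_i$ column indices of $\bm H_i$.

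Now suppose, for contradiction, that $\bm H_{i_0}$ does not satisfy URP for some $i_0$. Then there is an index set $S$ with $|S| = M_{i_0}$ whose corresponding columns of $\bm H_{i_0}$ are linearly dependent, so $\rank \bm H_{i_0}[:,S] \le M_{i_0}-1$. Put $T_{i_0} = S$ and, for each $i \ne i_0$, let $T_i$ be any set of $M_i$ column indices of $\bm H_i$. The product set $T_1 \times \cdots \times T_I$ selects $\prod_i M_i = \bar M$ columns of $\bm H$, and by the second fact this submatrix equals, up to a column permutation, the square matrix $\otimes_{i=1}^I \bm H_i[:,T_i] \in \mathbb{C}^{\bar M \times \bar M}$. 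By the first fact,
\[
\rank\left(\otimes_{i=1}^I \bm H_i[:,T_i]\right) = \prod_{i=1}^I \rank \bm H_i[:,T_i] \le (M_{i_0}-1)\prod_{i\ne i_0} M_i < \prod_{i=1}^I M_i = \bar M .
\]
Hence these $\bar M$ columns of $\bm H$ are linearly dependent, contradicting the URP of $\bm H$. Therefore every $\bm H_i$ satisfies URP.

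I do not expect a serious obstacle here; the only point that needs genuine care is making the second fact precise, namely verifying that a Cartesian-product column selection from $\otimes_{i=1}^I \bm H_i$ coincides (after a suitable permutation) with $\otimes_{i=1}^I \bm H_i[:,T_i]$, since the ordering of the columns of a Kronecker product is lexicographic and one must check the induced permutation lines up. The $N_i \ge M_i$ observation is minor but worth stating explicitly so the selected index sets are well defined.
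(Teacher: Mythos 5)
Your proof is correct and is essentially the paper's argument read in the contrapositive direction: both rest on the facts that a Cartesian-product column selection from $\otimes_{i=1}^I \bm H_i$ is (up to permutation) $\otimes_{i=1}^I [\bm H_i]_{\mathcal{M}_i}$ and that the rank of a Kronecker product is the product of the ranks, from which $\prod_i \rank([\bm H_i]_{\mathcal{M}_i}) = \bar M$ forces each factor submatrix to have full rank $M_i$. Your added observation that $N_i \ge M_i$ is a minor point the paper leaves implicit, but it does not change the substance.
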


\begin{proof}
    Suppose we choose any {$M_i$} columns of $\bm H_i$ indexed by $\mathcal{M}_i$, for $i=1,2,\ldots,I$. Then, the Kronecker product of the corresponding submatrices, given by $\otimes_{i=1}^I [\bm H_i]_{\mathcal{M}_i}\in\mathbb{C}^{\bar{M}\times\bar{M}}$, is a submatrix of $\bm H=\otimes_{i=1}^I \bm H_i$.  Since $\bm H$ satisfies URP, any subset of $\bar{M}$ columns of $\bm H$ are linearly independent, we get  
    \begin{equation}\label{eq.URP_rank}
        \bar{M}=\rank\big([\bm H]_{\otimes_{i=1}^I \mathcal{M}_i}\big) = \prod_{i=1}^I \rank\left([\bm H_i]_{\mathcal{M}_i}\right)\leq {\prod_{i=1}^I M_i}={\bar{M}}.
    \end{equation} The last step follows because $\rank\big([\bm H_i]_{\mathcal{M}_i}\big) \leq {M_i}$. So, \eqref{eq.URP_rank} holds if and only if $\rank\big([\bm H_i]_{\mathcal{M}_i}\big) = {M_i}$ for $i=1,2,\ldots,I$. Hence, any subset of ${M_i}$ columns of $\bm H_i$ is linearly independent, and $\bm H_i$ satisfies URP, for $i=1,2,\ldots,I$.
\end{proof}

Lemma \ref{prop.kron_urp} ensures that $\bm H_i$ satisfies the URP for all $i$. Then, we next show that for any index set $\mathcal{M}_i$ such that $|\mathcal{M}_i|={M_i}$, if we restrict the nonzero values of $\bm \gamma_i$ to the set $\mathcal{M}_i$, there can only be one minimum for $\mathcal{L}(\{[\bm \gamma_i]_{\mathcal{M}_i})$. For this, we note that \begin{align}
    \mathcal{L}(\{[\bm \gamma_i]_{\mathcal{M}_i}) & = \log \left|\otimes_{i=1}^I [\bm H_i]_{\mathcal{M}_i} [\bm \Gamma_i]_{\mathcal{M}_i} [\bm H_i]_{\mathcal{M}_i}^{\mathsf{H}} \right| \notag\\
    &\hspace{0.5cm}+ \bm y^\mathsf{H} \left( \otimes_{i=1}^I [\bm H_i]_{\mathcal{M}_i} [\bm \Gamma_i]_{\mathcal{M}_i} [\bm H_i]_{\mathcal{M}_i}^{\mathsf{H}}\right)^{-1} \bm y\\
   & = \sum_{i=1}^I \left({\prod_{j\neq i}M_j}\right) \log \left|[\bm H_i]_{\mathcal{M}_i} [\bm \Gamma_i]_{\mathcal{M}_i} [\bm H_i]_{\mathcal{M}_i}^{\mathsf{H}} \right|\notag\\
    &\hspace{0.5cm}+ \bm y^\mathsf{H} \otimes_{i=1}^I\left( \left([\bm H_i]_{\mathcal{M}_i}^{\mathsf{H}}\right)^{-1}  [\bm \Gamma_i]_{\mathcal{M}_i}^{-1}[\bm H_i]_{\mathcal{M}_i}^{-1} \right) \bm y.
    \label{eq.Costfnt_simpl}
\end{align}
Here, the second term can be simplified using the assumption, $\bm y = \left(\otimes_{i=1}^I[\bm H_i]_{\mathcal{M}_i}\right)\left(\otimes_{i=1}^I[\bm x_i]_{\mathcal{M}_i} \right) = \otimes_{i=1}^I([\bm H_i]_{\mathcal{M}_i}[\bm x_i]_{\mathcal{M}_i})$, as follows: 
\begin{multline}
  \bm y^\mathsf{H} \otimes_{i=1}^I\left(  [\bm H_i]_{\mathcal{M}_i} [\bm \Gamma_i]_{\mathcal{M}_i} [\bm H_i]_{\mathcal{M}_i}^{\mathsf{H}}\right)^{-1} \bm y\\= \otimes_{i=1}^I[\bm x_i]_{\mathcal{M}_i}^\mathsf{H}  [\bm \Gamma_i]_{\mathcal{M}_i}^{-1} [\bm x_i]_{\mathcal{M}_i} = \prod_{i=1}^I[\bm x_i]_{\mathcal{M}_i}^\mathsf{H}  [\bm \Gamma_i]_{\mathcal{M}_i}^{-1} [\bm x_i]_{\mathcal{M}_i}.
    \end{multline}
Therefore, from \eqref{eq.Costfnt_simpl}, we arrive at
\begin{multline}
  \mathcal{L}(\{[\bm \gamma_i]_{\mathcal{M}_i})= \sum_{i=1}^I {\prod_{j\neq i}M_j} \log \left|[\bm H_i]_{\mathcal{M}_i} [\bm H_i]_{\mathcal{M}_i}^{\mathsf{H}} \right|\\+  \sum_{i=1}^I {\prod_{j\neq i}M_j} \log \left|[\bm \Gamma_i]_{\mathcal{M}_i}  \right|+ \prod_{i=1}^I[\bm x_i]_{\mathcal{M}_i}^\mathsf{H}  [\bm \Gamma_i]_{\mathcal{M}_i}^{-1} [\bm x_i]_{\mathcal{M}_i}.
    \end{multline}
    Setting the derivative of the above function with respect to $\bm\gamma_i$ to zero gives
    \begin{equation}
         {\prod_{j\neq i}M_j} [\bm \gamma_i]_{\mathcal{M}_i}  =  [\bm x_i]_{\mathcal{M}_i}^2  \prod_{j\neq i}[\bm x_j]_{\mathcal{M}_j}^\mathsf{H}  [\bm \Gamma_j]_{\mathcal{M}_j}^{-1} [\bm x_j]_{\mathcal{M}_j},
    \end{equation}
    for $i=1,2,\ldots,I$. However, since $\{\bm \gamma_i\}\in\mathcal{C}$, we obtain the unique minimum of $ \mathcal{L}(\{[\bm \gamma_i]_{\mathcal{M}_i})$ at
    \begin{equation}
        [\bm \gamma_i]_{\mathcal{M}_i}=\begin{cases}
            \frac{[\bm x_i]_{\mathcal{M}_i}^2}{\|[\bm x_i]_{\mathcal{M}_i}^2\|}, & \text{if } i<I\\
            [\bm x_I]_{\mathcal{M}_I}^2  \prod_{j=1}^{I-1}\|[\bm x_j]_{\mathcal{M}_j}^2\|  & \text{if } i=I.
        \end{cases}
    \end{equation}

Therefore, every set of $\{\mathcal{M}_I\}_{i=1}^I$ corresponds to one unique local minimum, and counting for all possible index set combinations, we get $\mathcal{N}\leq \prod_{i=1}^I\binom{N}{{M_i}}$. Further, all index set combinations containing a degenerate solution share the same minimum~\cite{wipf2004sparse}. Accounting for such repetitions, we derive~\eqref{eq.upper_local_minima}. Thus, the proof is complete.

\bibliographystyle{IEEEtran}
\bibliography{bibfile}

% \newpage

% \section{Biography Section}
% If you have an EPS/PDF photo (graphicx package needed), extra braces are
%  needed around the contents of the optional argument to biography to prevent
%  the LaTeX parser from getting confused when it sees the complicated
%  $\backslash${\tt{includegraphics}} command within an optional argument. (You can create
%  your own custom macro containing the $\backslash${\tt{includegraphics}} command to make things
%  simpler here.)
 
% \vspace{11pt}

% \bf{If you include a photo:}\vspace{-33pt}
% \begin{IEEEbiography}[{\includegraphics[width=1in,height=1.25in,clip,keepaspectratio]{fig1}}]{Michael Shell}
% Use $\backslash${\tt{begin\{IEEEbiography\}}} and then for the 1st argument use $\backslash${\tt{includegraphics}} to declare and link the author photo.
% Use the author name as the 3rd argument followed by the biography text.
% \end{IEEEbiography}

% \vspace{11pt}

% \bf{If you will not include a photo:}\vspace{-33pt}
% \begin{IEEEbiographynophoto}{John Doe}
% Use $\backslash${\tt{begin\{IEEEbiographynophoto\}}} and the author name as the argument followed by the biography text.
% \end{IEEEbiographynophoto}

\vfill

\end{document}

